\pdfoutput=1
\documentclass[a4paper,USenglish,cleveref]{lipics-v2021}
\hideLIPIcs
\nolinenumbers 
\usepackage{graphicx}
\usepackage{anyfontsize}
\usepackage{amsmath,amsthm,stmaryrd}
\SetSymbolFont{stmry}{bold}{U}{stmry}{m}{n}

\usepackage{amssymb}

\usepackage{multicol}
\usepackage[noend]{algpseudocode}
\usepackage{algorithm}
\usepackage{hyperref}
\hypersetup{pdfstartview=}
\usepackage{lipsum}
\usepackage{mathtools}

\usepackage{tikz-cd}
\usepackage{subcaption}

\DeclarePairedDelimiter\ceil{\lceil}{\rceil}
\DeclarePairedDelimiter\floor{\lfloor}{\rfloor}

\crefname{lemma}{\bf Lemma}{Lemmas}



\newcommand{\Bd}{B_{\sigma}}
\newcommand{\Sc}{S}
\newcommand{\pisub}{\pi}
\newcommand{\eps}{\epsilon}
\newcommand{\MB}{\mathsf{MB}}
\newcommand{\NB}{\mathsf{NB}}
\newcommand{\RegionH}{\Lambda_{\geq M}}
\newcommand{\OmegaBad}{\Omega^{\mathrm{bad}}}
\newcommand{\Vbase}{V_{\text{base}}}
\newcommand{\configpart}[1]{\sigma^{(#1)}}
\newcommand{\seqpart}[1]{\layerseq^{(#1)}}
\newcommand{\npart}[1]{n^{(#1)}}
\newcommand{\rtoppart}[1]{\rtop^{(#1)}}
\newcommand{\Bhatpart}[1]{\Bhat^{(#1)}}

\newcommand{\Sh}{S_h}
\newcommand{\nondec}{nondecelerating}
\newcommand{\colsc}{column scent}
\newcommand{\OmegaExt}{\overline{\Omega}}

\author{Shunhao Oh\footnote{Corresponding author.}}{School of Computer Science, Georgia Institute of Technology, Atlanta, GA, USA}{ohoh@gatech.edu}{https://orcid.org/0009-0002-1328-0040}{}
\author{Joseph L. Briones}{School of Computing and Augmented Intelligence, Arizona State University, Tempe, AZ, USA}{jbrione3@asu.edu}{https://orcid.org/0000-0002-5847-4263}{}
\author{Jacob Calvert}{Institute for Data Engineering and Science, Georgia Institute of Technology, USA}{calvert@gatech.edu}{https://orcid.org/0000-0001-9173-0946}{}
\author{Noah Egan}{School of Computer Science, Georgia Institute of Technology, Atlanta, GA, USA}{negan7@gatech.edu}{}{}
\author{Dana Randall}{School of Computer Science, Georgia Institute of Technology, Atlanta, GA, USA}{randall@cc.gatech.edu}{https://orcid.org/0000-0002-1152-2627}{}
\author{Andr\'ea W. Richa}{School of Computing and Augmented Intelligence, Arizona State University, Tempe, AZ, USA}{aricha@asu.edu}{https://orcid.org/0000-0003-3592-3756}{}

\begin{document}
\title{Single Bridge Formation in Self-Organizing Particle Systems}

\authorrunning{S. Oh, J.\,L. Briones, J. Calvert, N. Egan, D. Randall and A.\,W. Richa}

\Copyright{Shunhao Oh, Joseph L. Briones, Jacob Calvert, Noah Egan, Dana Randall and Andr\'ea W. Richa}

\keywords{Self-organizing particle systems, programmable matter, bridging, jump chain}

\begin{CCSXML}
<ccs2012>
<concept>
<concept_id>10003752.10003809.10010172</concept_id>
<concept_desc>Theory of computation~Distributed algorithms</concept_desc>
<concept_significance>500</concept_significance>
</concept>
</ccs2012>
\end{CCSXML}

\ccsdesc[500]{Theory of computation~Distributed algorithms}


\maketitle

\begin{abstract}
Local interactions of uncoordinated individuals produce the collective behaviors of many biological systems, inspiring much of the current research in programmable matter.
A striking example is the spontaneous assembly of fire ants into ``bridges'' comprising their own bodies to traverse obstacles and reach sources of food. Experiments and simulations suggest that, remarkably, these ants always form one bridge---instead of multiple, competing bridges---despite a lack of central coordination. We argue that the reliable formation of a single bridge does not require sophistication on behalf of the individuals by provably reproducing this behavior in a self-organizing particle system. We show that the formation of a single bridge by the particles is a statistical inevitability of their preferences to move in a particular direction, such as toward a food source, and their preference for more neighbors. 
Two parameters, $\eta$ and $\beta$, reflect the strengths of these preferences and determine the Gibbs stationary measure of the corresponding particle system's Markov chain dynamics. We show that a single bridge almost certainly forms when $\eta$ and $\beta$ are sufficiently large. Our proof introduces an auxiliary Markov chain, called an ``occupancy chain,'' that captures only the significant, global changes to the system. Through the occupancy chain, we abstract away information about the motion of individual particles, but we gain a more direct means of analyzing their collective behavior. Such abstractions provide a promising new direction for understanding many other systems of programmable matter.
\end{abstract}

 \section{Introduction}
Ants are known to collaborate to form elaborate structures. Army ants of genus {\it Eiticon} self-assemble and disperse
while foraging to form shortcuts in order to more efficiently reach food \cite{Reid2015}. Weaver ants of species {\it Oecophylla longinoda} bind leaves together when building their nests and can form long chains \cite{Lioni2001}. In this work, we take inspiration from the behavior of fire ants of species {\it Solenopsis invicta}, which can self-assemble into floating rafts and structural bridges \cite{Mlot2011}. These bridges are structures composed of the ants as they entangle themselves and allow other ants a medium over which they can reach a target, such as food. It is not well understood why individual ants sacrifice themselves by becoming entangled with others in unfavorable circumstances such as suspension over water, but such occurrences are regularly accomplished without any centralized coordination. 

Recent laboratory experiments at the University of Georgia \cite{zeng2023fire} were the first to replicate the {more general}  
fire ant bridging found in nature by placing 
food in the center of a bowl filled with water to see how hungry ants will coordinate. Initially, ants surround the bowl looking for food, but soon they begin to explore the water, begin building multiple structures extending from the edge of the bowl. Eventually some structures start to reach the food, but over time only a {\em single bridge} persists, with the other ants traversing the bridge
to bring food back to the nest. 
Biologists have questioned what coordination allows the ants to always form a single bridge, streamlining their initial structures to minimize the number of ants sacrificed for the bridge while enabling the remaining ants to productively harvest food for the nest~\cite{Reid2015}. Physicists, biologists and computer scientists \cite{Cannon2016} have developed an agent-based model that mimics this behavior with locally interacting particles that primarily depends on two parameters:
{one  capturing their individual awareness of the food source, based on proximity, and another that captures the rigidity of the bridge structure by measuring an interparticle attraction among all nearest neighbor pairs, which has been shown to suffice to encourage collectives to aggregate \cite{Cannon2016}.} 
{This physical model also allows ants to climb over one another and incorporates additional physical parameters such as the meniscus effects at the edge of the bowl and effects of small clusters disconnecting.} Their simulations reliably replicate the tendency for exactly one bridge to form, but lack an explanation because the biological and simulated protocols are too difficult to analyze {in order to connect the local behavior of individuals with the complex coordinated structure formed by the ensemble.}

{{We show that the emergence of at most one bridge, as observed in ant experiments,
follows from a statistical inevitability, and not from the sophistication of the individuals.}} 
Here, rather than tracking the  movements of individual particles (or ants) over time according to a local Markov chain, 
we define a related auxiliary 
\emph{occupancy chain} that captures only the changes to which sites are occupied.
The states of the occupancy chain record the sites where one or more particles currently exist and the transitions 
capture the stochastic changes to the occupied sites over time, to determine long-term behavior. 
This approach of using occupancy chains allow us to apply techniques from statistical physics to infer properties of contours, and the simplified analysis is likely to be useful in other contexts, such as various fixed magnetization spin systems~\cite{dobrushin1992}, biased growth processes~\cite{grs}, and collectives arising in swarm robotics responding to directed external stimuli~\cite{li2021}.

\begin{figure}
 \centering
 \includegraphics[width=\textwidth]{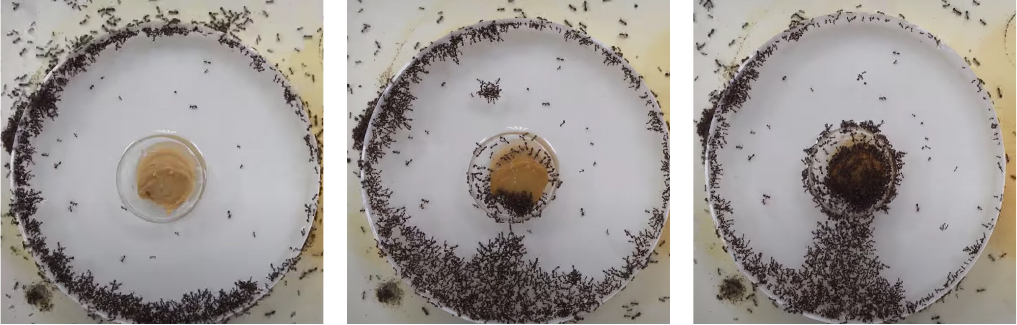}
 \caption{Photos of fire ants, $\textit{Solenopsis}$, self-assembling over time to form a bridge to reach food placed in the center of a bowl filled with water. Experiments performed by and photo credit to Takao Sasaki and Horace Zeng at the University of Georgia.}
 \label{fig:experiments}
\end{figure}

\subsection{Related work}
Ant bridging can be viewed as an example of programmable matter, in which researchers seek to create materials that can be programmed to change their physical properties in response to user input and environmental stimuli. Diverse examples of programmable matter---like DNA tiles \cite{wei2012}, synthetic cells~\cite{karzbrun2014}, and reconfigurable modular robots~\cite{daudelin2018}---are linked through their common use of local interactions to produce global functionality beyond the capabilities of individuals. Examples of this phenomenon abound in nature, including honey bees that select nest sites using decentralized recruitment~\cite{Camazine1999}, cockroach larvae that aggregate using short-range pheromones~\cite{Jeanson2005}, and fire ants that assemble their bodies into rafts to survive floods~\cite{Mlot2011}.  
For this reason, programmable matter is often designed by analogy with biological collectives that exhibit an emergent behavior of interest \cite{Sahin2005,ouellette2021}. 
However, this form of biomimicry is generally unamenable to rigorously proving that an algorithm executed by the constituents of matter reliably leads to a desired behavior. This analysis is critical to understanding the extent to which algorithms for programmable matter are robust when they are actually deployed \cite{li2021}.

An abstraction of programmable matter known as a {\it self-organizing particle system} (SOPS) addresses this gap. 
In the standard amoebot model for SOPS~\cite{Derakhshandeh2014, DaymudeRS21}, particles exist on the sites (or nodes) of a lattice, with at most one particle per site,  
and move between sites along lattice edges. Each particle is anonymous (unlabeled), interacts only with particles on adjacent lattice sites, has limited (normally assumed to be constant-size) memory, and does not have access to any global information such as a common direction on the lattice, a coordinate system or the total number of particles. We note that in this paper, we are studying a particle system with possibly multiple particles (ants) per site, but we reduce to an occupancy chain which recovers the ``SOPS perspective'' 
by just indicating whether each site is occupied or not. Recent work on SOPS has led to rigorous stochastic distributed algorithms for various tasks by exploring phase changes that lead to desirable collective behavior with high probability. Examples include compression and expansion~\cite{Cannon2016}, where particles cluster closely or expand; 
separation~\cite{Cannon2019}, where colored particles would like to separate into clusters of same color particles; aggregation~\cite{li2021}, where the system would like to compress but no longer needs to remain connected; shortcut bridging~\cite{Arroyo2018}; locomotion~\cite{Savoie2018,Yadav21}; and transport~\cite{li2021}. 
However, while these algorithms based on local Markov chains provide insight into emergent collective behavior, the proofs guaranteeing the long-term behaviors become prohibitively challenging as the collectives take on increasingly complicated tasks with more sophisticated interactions. For example, while the work in~\cite{Arroyo2018} also looked at the bridging behavior of fire ants~\cite{Reid2015}, it did not focus on the bridge formation process. Instead, the paper assumed that a single bridge already existed, anchored at two fixed points (corresponding to the nest and the food), and considered the (simpler) problem 
of optimizing the placement of the bridge with respect to parameters of relevance.

\subsection{Overview of our results}
\label{sec:results} 
To gain insight into structures that emerge from ant bridging, we consider a discrete model on a finite $w \times h$ domain $\Lambda=\Lambda(w,h)$ of the triangular lattice, with periodic boundary conditions in the horizontal direction, representing the circular bowl containing food. 
It will be convenient to think of $\Lambda$ as a cylindrical domain comprised of $h$ layers (or cycles), corresponding to the $h$ rows $\Lambda_1$ through $\Lambda_h$, each of width $w$  (see Figure~\ref{fig:lambda}). 
We model the motion of a finite set of $n$ particles entering $\Lambda$ at $\Lambda_1$,  
and growing a connected component by occupying sites on the interior of $\Lambda$ in order to eventually reach the food, which we assume can be found at every site on $\Lambda_h$.
Particles may move onto empty sites or walk upon sites currently occupied by other particles---we intentionally abstract away the specific dynamics by directly analyzing the occupancy chain, which only tracks changes to the set $\sigma$ of occupied sites in $\Lambda$.
In analogy with fire ant bridging, a step of the occupancy chain can be interpreted as a lone ant on the boundary of $\sigma$ climbing over a neighboring ant, thus creating an unoccupied site, or as an ant concluding its walk over other ants to occupy a new site adjacent to the previous boundary.
For simplicity, we assume that the aspect ratio $\alpha = w/h$ is a constant and that $n$ scales linearly with the area of the region, writing $n = \floor{\rho h^2}$ for some constant $\rho > 0$, as these are the conditions under which bridges may form.

\begin{figure}
 \centering
 \includegraphics[width=\textwidth]{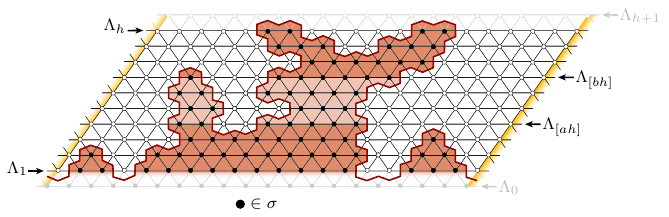}
 \caption{A domain $\Lambda$ of the triangular lattice of width of $w=20$ and height of $h=10$ with periodic boundary conditions identifying the ends of rows (indicated by yellow shading). The configuration $\sigma$ (black dots) has multiple $(a,b)$-bridges for $(a,b) = (0.4,0.7)$, but not for $(a,b) = (0.4,0.8)$.}
 \label{fig:lambda}
\end{figure}

The occupancy chain captures the relative preferences of particles to move toward $\Lambda_h$ (e.g., due to the scent of the food source placed along $\Lambda_h$), parameterized by $\eta >~0,$ and their preference for locations with more neighbors, increasing the rigidity of the connected structure. %
The balance of these preferences is expressed through a function on configurations called the {\em Hamiltonian}, which is related to the stationary probability of configurations and expresses the  overall ``fitness'' of a configuration, with each particle contributing its piece.

For a configuration $\sigma$ and a site $v \in \sigma$, let $\Sc(v)$ be the {\em intensity of the scent} at $v$, which 
is any nonincreasing function of graph distance $d(v)$ from $v$ to $\Lambda_h$.  
Let $\Bd(v)$ be the {\em number unoccupied sites that neighbor} $v$. 
Then the Hamiltonian~$H$ of a configuration $\sigma$ is
\begin{equation}\label{eq:hamiltonian}
    H(\sigma) = -\sum_{v \!\ \in \!\ \sigma} \left(\eta \Sc(v) - \Bd(v) \right),
\end{equation}
where $\eta > 0$ is a parameter describing how much a particle favors a stronger scent relative to how much it desires more neighbors. 
A second parameter $\beta$, the interparticle attraction, captures the strength of the Hamiltonian. 
These induce a probability  distribution $\pi$ over the set of valid configurations $\Omega$ (defined in Section~\ref{sec:model}) known as the Gibbs distribution, where 
\begin{equation}\label{eqn:stationary}
 \pi(\sigma) = \exp \left( -\beta H(\sigma) \right) / Z, \mbox{\ \ for $\sigma \in \Omega$}
\end{equation}
in terms of the normalization constant $Z = \sum_{\tau \in \Omega} e^{-\beta H(\tau)}$. 
 Note we have taken some liberties in defining the occupancy chain by assuming that each particle has been walking on the current bridge structure sufficiently long that it is near equilibrium. 

We may now talk about the shape of the contour between the set of occupied and unoccupied sites on $\Lambda$ in order to determine the number of bridges at equilibrium.  We introduce a weak definition of ``multiple bridges'' to show that any contours that unnecessarily ``backtrack'' a nontrivial amount from {\it any} layer will be extremely unlikely when the interparticle attraction $\beta$ is large enough. 
We say that a configuration $\sigma$ has {\em multiple $(a,b)$-bridges} for real numbers~$0<a<b<1$ if the restriction of $\sigma$ to layers $\floor{ah}$ and greater contains at least two components (on the lattice $\Lambda$) that reach layer $\floor{bh}$. This is formally stated later in Definition~\ref{defn:multiplebridges}.
Intuitively, one can imagine such configurations as ones that extend out to the layer $\Lambda_{\floor{bh}}$, retract back to $\Lambda_{\floor{ah}}$, then extend out to $\Lambda_{\floor{bh}}$ a second time.

Our first main theorem states that, on any $w \times h$ region $\Lambda$ with constant aspect ration $\alpha$, regardless of the scent parameter $\eta$, if the interparticle attraction $\beta$ is large enough, then for any constant $\epsilon \in (0,1)$, the {\em probability of seeing multiple $(a,b)$-bridges where $b-a \geq \epsilon$ is always exponentially small in $\beta$ and $h$} (see Theorem~\ref{thm:singlelongbridge} in Section~\ref{sec:formingbridges}). Taking $b=1$ shows that we are unlikely to see a contour that reaches the food multiple times while backtracking a distance $\epsilon$ between the intervals in which they touch.

Our second main result concerns whether or not any bridge will be likely to form.  This question is much more delicate and
depends on the scent parameter $\eta$, some basic assumptions about the scent gradient (see Definition~\ref{defn:nondec}), and some bounds on the number of particles in the system. 
If the {\em scent parameter $\eta$ is too small}, then particles will compress around the boundary and are unlikely to reach the food, so {\em no bridge will form}. If the {\em scent parameter is sufficiently large}, then a {\em single bridge will emerge, but never more}. These conditions are properly outlined in Section~\ref{sec:formingbridges} and stated as Theorem~\ref{thm:zeroorone-full} and Corollary~\ref{cor:scent}.

The proofs are motivated by a technique from statistical physics known as a Peierls argument that constructs a mapping from low-probability configurations to high-probability ones, to infer global properties of the distribution $\pisub$. The key challenge in constructing suitable mappings for our model is the dependence of the scent component $S$ of the Hamiltonian on the distance of sites from $\Lambda_h$. As a consequence, standard physics approaches based on simple mappings that rearrange sites is insufficient. We overcome this challenge by introducing the {\em layer sequence} of a configuration, which counts the number of the configuration’s elements in each layer. Its virtue is that it is far easier to describe and analyze transformations of layer sequences than those of configurations, and the layer sequence retains the information necessary to calculate the scent component of the configuration’s Hamiltonian. Moreover, while it discards the geometric information that is needed to calculate the configuration’s boundary length, we are nevertheless able to obtain useful bounds on the boundary length.

Our choice to model the motion of the contour directly via an occupancy chain, instead of inferring it from dynamics prescribed to individual particles (ants), {is a significant and valuable departure} from the norm in distributed computing. Without this abstraction tracking only the salient changes to the profile of the ensemble, it would be far more difficult to rigorously analyze the bridge formation behavior.
Notably, this abstraction comes at the cost of obscuring 
the underlying particles' computational capabilities and the local distributed algorithm that govern the particles' dynamics, which we address and clarify in Section~\ref{sec:algorithm}.

The success of SOPS in designing simple, local algorithms that produce complex, global behaviors suggests that the apparently intelligent behaviors of natural collectives may in fact be the inevitable consequences of unremarkable circumstances. We see a broader opportunity here, not only to explain how the intelligent behaviors of natural collectives arise, but also to  demonstrate that such an explanation need not assume great intelligence or detailed knowledge on behalf of the individuals. We believe incorporating occupancy chains may be a way to provide insight into other complex collectives.

\section{The Bridging Occupancy Chain}\label{sec:chain}
In the preceding section, we introduced the domain $\Lambda$, the Hamiltonian $H$, and the probability distribution $\pisub$ that $H$ induces, to the extent necessary to informally state our results. In particular, we mentioned that $\pisub$ is supported on a subset of valid configurations $\Omega$,
but deferred a definition. We now introduce an extension of $\Lambda$ that we use to define $\Omega$, and then we elaborate the terms of the Hamiltonian, before defining $\pisub$ and the occupancy chain. 

\subsection{The model}\label{sec:model}
Some subsets of the vertices of $\Lambda$ are not suitable abstractions of the bridges in Figure~\ref{fig:experiments}, because they do not extend from $\Lambda_1$ or because they contain holes. To make this precise, we extend the domain $\Lambda$ with two further layers, $\Lambda_0$ and $\Lambda_{h+1}$, the sites of which we respectively treat as always occupied and always unoccupied (Figure~\ref{fig:lambda}). We denote by $\overline{\Lambda}$ this extended domain, which is isomorphic to $\Lambda$ with a height of $h+2$, as shown in Figure~\ref{fig:lambda}. We define the set of valid configurations to be 
\[
\Omega = \left\{ \sigma \subseteq V(\Lambda): \text{ $|\sigma| \leq n$ \,and\, $\sigma \cup V(\Lambda_0)$ is simply connected in $\overline\Lambda$} \, \right\},
\] 
where $|\sigma|$ denotes the number of elements of $\sigma$.

Recall that the Hamiltonian (Equation~\ref{eq:hamiltonian}) of a configuration $\sigma \subseteq V(\sigma)$ involves a sum over sites $v \in \sigma$, with contributions from the scent intensity $\Sc(v)$ at $v$ and the number $\Bd(v)$ of neighbors of $v$ that are not in $\sigma$. To be precise, we define 
 $\Bd(v)$ according to 
\[
\Bd(v) = \left| \left\{u \in V(\Lambda)\setminus\sigma: (u,v) \in E(\Lambda) \right\} \right|,
\] 
which serves to penalize configurations that are less compact. Note that, if $\sigma \in \Omega$, then 
$\Bd(v) < 6$ for every $v \in \sigma$.
The scent intensity of a site $v \in V(\Lambda)$ is a function that depends only on its distance to the food, which we will assume is stronger the closer a site is to the food. 
For the sake of the proofs however, we will phrase these as distances $d(v,\Lambda_{0})$ from the bottom of the region $\overline\Lambda$, noting trivially that $d(v,V(\Lambda_{0}))=h-d(v,V(\Lambda_{h}))$. 
For convenience of notation, we define a nondecreasing function $\Sh: [h] \mapsto \mathbb{R}_{\geq 0}$, and write the scent intensity at a vertex $v$ as
\[
\Sc(v) = \Sh\big(d(v,V(\Lambda_{0}))\big).
\]
While the function $\Sh$ being nondecreasing is sufficient to show in Section~\ref{sec:nomultiplebridges} that multiple bridges are exponentially unlikely,
when we discuss whether or not a bridge to the food will form in Section~\ref{sec:formingbridges}, we use a slightly narrower class of functions that we call \emph{\nondec} (Definition~\ref{defn:nondec}).
This covers a broad class of functions, including but not limited to:
\[
\Sh(y) = C_h y^k - D_h \quad \text{or} \quad \Sh(y) = C'_h (h-y)^{-k'} - D'_h,
\]
where $k, k' \geq 1$, where the values $C_h$, $C'_h$ normalize the sum of $\Sh$ over a ``column'' of $\Lambda$ to a constant, and $D_h$, $D'_h$ are chosen to make $\Sh(1) = 0$. 
The purpose of the normalization is to make the contributions of $\Sc(v)$ and $\Bd(v)$ to the Hamiltonian comparable, so $\eta$ controls whether a bridge forms. In the case of $k,k'=1$, the first option gives us a linear scent gradient, while the second option represents a scent intensity proportional to the reciprocal of the distance from the food. Section~\ref{sec:formingbridges} discusses the cases when this sum is not normalized.

We define the occupancy chain in the next section and show that it constitutes an aperiodic and irreducible Markov chain on $\Omega$, hence it has a unique stationary distribution on $\Omega$, as given in Equation~\ref{eqn:stationary}.
Observe that by writing $\lambda = e^\beta > 0$ and $\gamma = e^{\beta \eta} > 0$, this formulation can be equivalently written as
\begin{equation}\label{eqn:stationaryweights}
 \pi(\sigma)
 \propto \prod_{v \in \sigma} \left(\lambda^{-\Bd(v)} \gamma^{\Sc(v)}\right),
\end{equation}
where the probability of each configuration can be written as a product of weights of its individual particles, normalized.

Although the Hamiltonian is partly inspired by the Ising model where the preference for having more neighbors can be thought of as the strength of a ferromagnetic interaction between particles, the scent component of the Hamiltonian introduces a directional bias that deviates from classical models, making the analysis considerably more challenging (see, e.g., \cite{grs}). Viewing the dynamics through the occupancy chain gives us a succinct tool to derive emergent behaviors of the collective not as easily provable for related agent-based models.

\subsection{The algorithm}\label{sec:algorithm}
The bridging occupancy chain algorithm can be briefly described as follows: Initially, the sites of $\Lambda$ are unoccupied. Given a configuration $\sigma$, we select a uniformly random site $v$ of $\Lambda$. If $v$ is occupied, then we attempt to remove it from $\sigma$. Otherwise, if $v$ is unoccupied and $|\sigma|$ is less than $n$, then we attempt to add $v$ to $\sigma$.
The move is rejected if it produces
a configuration that does not belong to $\Omega$, i.e., if it is
not simply connected in $\overline{\Lambda}$.

To enforce that simple connectivity is maintained at all times, starting from a simply connected configuration, we follow what we call \emph{local simple connectivity} checks, that can be shown to ensure that the configuration as a whole always remains simply connected in $\overline\Lambda$ (Lemma~\ref{lem:simplyconnected}).
Note that the checks below only depend on a site $v$ and its immediately adjacent sites in $\overline\Lambda$, which will be important when we discuss the underlying particle dynamics at the end of this section.

\begin{definition}[Local Simple Connectivity]
\label{dfn:localsimpleconnectivity}
For a site $v \in V(\Lambda)$, our definition of local simple connectivity depends on whether $v$ is occupied or not. Let $\overline{\mathcal{N}}(\sigma,v)$ denote the extended neighborhood of $v$ in $\sigma \cup V(\Lambda_0)$, where $v$ is any site of $\Lambda$.
\begin{enumerate}
\item If $v\in \sigma$, we say that $v$ is locally simply connected iff $|\overline{\mathcal{N}}(\sigma,v)| \leq 5$ and the induced subgraph $\overline{\Lambda}[\overline{\mathcal{N}}(\sigma,v)]$ is connected.
\item If $v\not\in \sigma$, we say that $v$ is locally simply connected iff 
$|\overline{\mathcal{N}}(\sigma,v)| \geq 1$ and 
the induced graph $\overline{\Lambda}[\overline{\mathcal{N}}(\sigma,v)]$ is connected.
\end{enumerate}
\end{definition}

\noindent The intuition behind this definition is that in Case (1) it is safe to remove a locally simply connected site $v$ from $\sigma$, since making $v$ unoccupied cannot disconnect its neighborhood in $\sigma$ (nor create a hole at $v$); in Case (2), making the site $v$ occupied cannot connect previously disconnected neighboring sites in $\sigma$ and therefore cannot create a hole. Note that, for every locally simply connected move, the reverse is also a valid move. 

Algorithm~\ref{alg:stimulusalgorithm} describes the moves of the occupancy chain.  Note that we are using a variant of the Metropolis-Hastings algorithm \cite{Metropolis1953}, but it is easy to check that for any two configurations whose symmetric difference is a single vertex, detailed balance  will be satisfied.

\begin{algorithm}[h!]
\begin{algorithmic}
\caption{Bridging Occupancy Chain}
\label{alg:stimulusalgorithm}
\State Let $\sigma$ denote the current configuration.
\State $v \gets $ uniformly random site in $\Lambda$
\State $p \gets $ uniformly random number in $[0,1]$
\If { $v$ is locally simply connected}
 \If {$v\in\sigma$ and $p \leq \min\{1, \exp\left(\beta(2\Bd(v) - \eta \Sc(v))\right)\}$}
 \State {Make $v$ unoccupied}.
\ElsIf {$v\not\in\sigma$, $|\sigma|<n$, and $p \leq \min\{1, \exp\left(-\beta(2\Bd(v) - \eta \Sc(v))\right)\}$}
 \State {Make $v$ occupied}.
 \EndIf
\EndIf
\end{algorithmic}
\end{algorithm}

The following two lemmas prove the correctness of the Bridging Occupancy Chain. 
We start by proving that the algorithm will keep the configurations simply connected and then show the Bridging Occupancy Chain is irreducible and aperiodic, and therefore converges to the indicated stationary distribution $\pi$.
%
 
\begin{lemma}[Maintaining Simple Connectivity]
\label{lem:simplyconnected}
Adding or removing a locally simply connected site from a 
configuration $\sigma\in\Omega$ maintains simple connectivity.
\end{lemma}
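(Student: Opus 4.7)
The plan is to reduce the claim to verifying, in each of the two cases of Definition~\ref{dfn:localsimpleconnectivity}, that the move preserves both (i) the connectivity of $\sigma \cup V(\Lambda_0)$ and (ii) the connectivity of its complement in $\overline{\Lambda}$. Since simple connectivity of a subset of the planar lattice $\overline{\Lambda}$ is equivalent to the joint connectedness of the subset and its complement (i.e.\ no ``holes''), preserving both conditions together yields simple connectivity.

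The structural observation I would leverage is that in the triangular lattice the six neighbors of any $v \in V(\Lambda)$ form a $6$-cycle as an induced subgraph of $\overline{\Lambda}$, so consecutive neighbors in cyclic order around $v$ are themselves adjacent. Consequently, if the occupied neighbors of $v$ form a connected induced subgraph, they occupy a contiguous arc of this $6$-cycle, and the unoccupied neighbors of $v$ automatically form the complementary arc, which is itself connected. This cyclic duality drives both cases: whichever side the connectivity hypothesis is imposed on, the other side inherits connectivity for free.

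For Case~1 (removing a locally simply connected $v \in \sigma$), connectivity of $(\sigma \setminus \{v\}) \cup V(\Lambda_0)$ follows because the hypothesis that $\overline{\Lambda}[\overline{\mathcal{N}}(\sigma,v)]$ is connected lets any path that previously used $v$ be rerouted through $v$'s occupied neighbors. For the complement we are adding $v$ to a previously connected set, and the result remains connected because $|\overline{\mathcal{N}}(\sigma,v)| \leq 5$ guarantees $v$ has at least one neighbor already in the old complement. For Case~2 (adding a locally simply connected $v \notin \sigma$), the condition $|\overline{\mathcal{N}}(\sigma,v)| \geq 1$ attaches $v$ to the existing occupied component, so connectivity is preserved; the complement loses $v$, and the cyclic observation combined with connectivity of $\overline{\Lambda}[\overline{\mathcal{N}}(\sigma,v)]$ ensures the unoccupied neighbors form a connected arc through which any complement path previously traversing $v$ can be rerouted (the arc is necessarily nonempty because $\sigma \in \Omega$ rules out $v$ being a hole surrounded by all six occupied neighbors).

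The only real delicacy is invoking the planar topological equivalence between simple connectivity of $\sigma \cup V(\Lambda_0)$ in $\overline{\Lambda}$ and joint connectedness of the set with its complement, which I would cite as a standard consequence of the planar cylindrical embedding of $\overline{\Lambda}$. Boundary cases require no special treatment: every $v \in V(\Lambda)$ has exactly six neighbors in $\overline{\Lambda}$, and the fixed status of $V(\Lambda_0)$ (always occupied) and $V(\Lambda_{h+1})$ (always unoccupied) slots naturally into the arc decomposition of the $6$-cycle of neighbors around $v$.
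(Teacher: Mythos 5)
Your proof is correct and takes essentially the same approach as the paper's: a case analysis over removal and addition, showing that the two conditions of Definition~\ref{dfn:localsimpleconnectivity} rule out both disconnection of the occupied set and creation of a hole. Your explicit use of the $6$-cycle arc structure of $v$'s neighborhood in the triangular lattice---so that connectedness of the occupied (resp.\ unoccupied) neighbors forces the complementary neighbors to form a contiguous, hence connected, arc---supplies a mechanism that the paper leaves implicit, in particular when showing that adding a site $v$ with a connected, nonempty occupied neighborhood cannot create a hole (the paper phrases this as ``cannot introduce a new cycle'' without elaborating). One precision worth adding: the equivalence ``simply connected $\iff$ set and complement both connected'' is a theorem for \emph{planar} graphs, but $\overline{\Lambda}$ is a cylinder; it applies here only because $V(\Lambda_0)$ always lies in the set and $V(\Lambda_{h+1})$ always lies in the complement, which fixes the wrap-around homotopy and eliminates the counterexamples that could otherwise arise on a cylinder. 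You allude to this in your final paragraph, but the dependence on these pinned boundary layers should be made explicit.
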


\begin{proof}
%
Let $\sigma$ be the current configuration, which we define as simply connected if $\sigma \cup V(\Lambda_0)$ is simply connected in $\overline{\Lambda}$. As $\Lambda_0$ is treated as always filled, $\sigma \cup V(\Lambda_0)$ is never empty.

The only way the removal of an occupied site $v\in \sigma$ can make the resulting configuration not simply connected is (i) if it introduces a hole, but this would imply that $v$ had six occupied neighbors in $\sigma\cup V(\Lambda_0),$ which is not allowed; or (ii) if it disconnects the configuration, but that would imply that the neighboring occupied sites to $v$ in $\sigma\cup V(\Lambda_0)$ were disconnected, which we also do not allow. On the other hand, the only way the addition of an occupied site $v$ to $\sigma$ can violate simple connectivity is if it introduces a cycle that was not present in $\sigma\cup V(\Lambda_0)$. This can only happen if the neighbors of $v$ in $\sigma\cup V(\Lambda_0)$ were either the empty set or if they induced a disconnected graph in $\sigma\cup V(\Lambda_0)$, which we again do not allow. The addition of an occupied site to $\sigma$ cannot disconnect the resulting configuration.
\end{proof}

\begin{lemma}[Irreducibility and Aperiodicity]
\label{lem:irreducibility}
The Bridging Occupancy Chain
is aperiodic and irreducible. Thus it converges to the stationary distribution $\pisub{}$ given by Equation~\ref{eqn:stationary}.
\end{lemma}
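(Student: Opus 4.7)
My plan is to establish aperiodicity by exhibiting a self-loop of positive probability, and to prove irreducibility by showing that every configuration communicates with the empty configuration $\emptyset$. Convergence to $\pi$ as defined in Equation~\ref{eqn:stationary} will then follow from a routine detailed balance verification: for any two configurations differing in a single vertex $v$, the Metropolis-Hastings acceptance probabilities in Algorithm~\ref{alg:stimulusalgorithm} are designed so that the ratio of forward to reverse transition probabilities equals the ratio of stationary weights given by Equation~\ref{eqn:stationaryweights}.

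For aperiodicity, I would take $\sigma = \emptyset$ and select any site $v$ in a layer $\Lambda_k$ with $k \geq 2$ (assuming $h \geq 2$; otherwise an analogous argument applies using the Metropolis rejection probability). Since every neighbor of $v$ lies in $\Lambda_{k-1} \cup \Lambda_k \cup \Lambda_{k+1}$, which is disjoint from $V(\Lambda_0)$, we have $\overline{\mathcal{N}}(\emptyset, v) = \emptyset$. The condition $|\overline{\mathcal{N}}| \geq 1$ in Case 2 of Definition~\ref{dfn:localsimpleconnectivity} therefore fails, the proposed move is rejected, and a self-loop of positive probability is obtained.

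For irreducibility, I would show that every $\sigma \in \Omega$ reaches $\emptyset$ via a sequence of transitions of positive probability. This suffices because each locally simply connected move is reversible (as noted after Definition~\ref{dfn:localsimpleconnectivity}) and the Metropolis acceptance probabilities are always strictly positive, so the reverse sequence takes $\emptyset$ to any $\tau \in \Omega$, hence any two configurations communicate. The reduction to $\emptyset$ will proceed by induction on $|\sigma|$: the inductive step hinges on the claim that \emph{for every non-empty $\sigma \in \Omega$, there exists $v \in \sigma$ that is locally simply connected in the sense of Case 1 of Definition~\ref{dfn:localsimpleconnectivity}.} Once this claim is available, Lemma~\ref{lem:simplyconnected} guarantees that removing such a $v$ yields a configuration in $\Omega$ of strictly smaller cardinality, and the induction closes.

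The main obstacle is establishing this peeling claim, which captures the intuition that a simply connected region on the triangular lattice always has a convex corner on its boundary whose removal preserves simple connectivity. To formalize it, I would consider the topmost nonempty layer $\Lambda_k$ of $\sigma$: any $v \in \sigma \cap \Lambda_k$ has its two upper neighbors unoccupied, so $|\overline{\mathcal{N}}(\sigma, v)| \leq 4$ automatically. I would then argue that, using the simple connectivity of $\sigma \cup V(\Lambda_0)$, at least one vertex in this topmost layer (or in a similarly extremal set of boundary vertices) has a connected induced neighborhood in $\sigma \cup V(\Lambda_0)$ --- for instance, by walking along the outer boundary of $\sigma \cup V(\Lambda_0)$ and invoking the fact that the total turning along a closed simple curve on the triangular lattice forces the existence of vertices at which enough of the six potential neighbors are missing for connectivity to hold. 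This peeling property is standard in the SOPS literature for the triangular lattice, so I would either cite a prior proof or verify it by a short case analysis on the six-neighbor configuration at such a convex corner.
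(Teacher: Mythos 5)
Your overall strategy matches the paper's exactly: establish aperiodicity by exhibiting a self-loop, establish irreducibility by showing every $\sigma\in\Omega$ can be peeled one site at a time down to $\emptyset$ (with reversibility of valid moves giving two-way communication), and then invoke Metropolis--Hastings detailed balance for convergence. Your aperiodicity argument is valid and in fact more explicit than the paper's one-sentence observation; the $\overline{\mathcal N}(\emptyset,v)=\emptyset$ computation for $v$ in a layer $k\ge 2$ does force rejection via Case~2 of Definition~\ref{dfn:localsimpleconnectivity}.

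The gap is in the peeling claim, which you correctly identify as the crux but do not close. Your candidate (a vertex in the topmost occupied layer) does not in general work: such a vertex $v$ can have both of its same-layer neighbors occupied while both of its lower-layer neighbors are unoccupied, in which case $\overline{\mathcal N}(\sigma,v)$ consists of the two endpoints of the length-3 path formed by $v$'s four non-upper neighbors and is disconnected, so $v$ fails Case~1. Such configurations are consistent with simple connectivity as long as the gap beneath $v$ has a channel to the unoccupied exterior. Your fallback --- walking the boundary and invoking a turning-angle argument --- is a correct idea in principle, but you leave it unverified and it requires some care with periodic boundary conditions. The paper avoids this entirely with a different extremal choice: let $v\in\sigma$ have \emph{greatest graph distance from $V(\Lambda_0)$ in the subgraph induced on $\sigma\cup V(\Lambda_0)$}. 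Maximality of that distance immediately rules out $v$ having six occupied neighbors (otherwise removing $v$ would create a hole), and it rules out disconnection (any site whose every path to $\Lambda_0$ passes through $v$ would be strictly farther than $v$). This is shorter, self-contained, and does not rely on boundary geometry. So: same decomposition, correct convergence and aperiodicity steps, but you should replace the topmost-layer heuristic with the farthest-from-$\Lambda_0$ vertex to actually close the argument.
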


\begin{proof}
We first observe that for every locally simply connected move, the reverse is also valid and locally simply connected.
Next, we establish irreducibility of the occupancy chain by showing that from every configuration $\sigma \in \Omega$, we can always identify a site $v \in \sigma$ such that making $v$ unoccupied is a valid move. This would imply that we can always reach the empty configuration via valid moves, and therefore we can connect any pair of configurations $\sigma, \tau \in \Omega$ by first making all the sites in $\sigma$ unoccupied and then making all the sites in $\tau$ occupied by following the reversal of the path from $\tau$ to the empty configuration.  

Let $v$ be any site in $\sigma$ that has greatest distance from a site in $\Lambda_0$ following edges in $\overline{\Lambda}$ with both endpoints occupied.  We will argue that $v$ is locally simply connected.  Removing $v$ cannot form a hole because $v$ has greatest distance from $\Lambda_0$ and therefore cannot have six occupied neighbors. Likewise, making $v$ unoccupied cannot disconnect $\sigma$ or there would be occupied sites whose shortest path to $V(\Lambda_0)$ passes through $v$, contradicting $v$ having greatest shortest distance to $V(\Lambda_0)$.  Therefore $v$ is locally simply connected. 

Finally, we note that when $\sigma$ is nearly empty, most choices of $v$ will result in self-loops in the chain, so the chain is aperiodic.   Hence, since the chain is ergodic (aperiodic and irreducible),  it must have a unique stationary distribution, which is given by Equation~\ref{eqn:stationary} due to the Metropolis--Hastings transition probabilities satisfying detailed balance.
\end{proof}

Now that we have established the occupancy chain dynamics, it remains to discuss the underlying SOPS dynamics that would govern the behavior of the chain.
While the occupancy chain abstraction comes at the cost of obscuring the details of the underlying 
algorithm that each particle runs, there are indeed local distributed algorithms running independently on constant-memory particles, which we here assume to operate under a sequential scheduler (i.e., where at most one particle is active at any point in time, as in e.g.,~\cite{Cannon2016}),  that can closely approximate the occupancy chain dynamics. 
We first note that each iteration of the Bridging Occupancy Chain, and in particular the checks for local simple 
connectivity within it, can be executed by constant-sized memory
agents with only local communication,
since the algorithm only depends on the the states of the agents that occupy site $v$ and its immediate neighborhood in $\Lambda$. 

In the 
physics agent-based model for the fire-ant bridging biological experiments in~\cite{zeng2023fire} that were the original motivation of this work, 
each particle (ant) performs an independent random walk, biased by the scent gradient as it walks over other particles on the bridge, and biased by both the scent gradient and number of nearest neighbors as it leaves or joins the boundary.
The transition probabilities of the occupancy chain are the probabilities with which the bridge structure gains or loses one particle when this particle's walk on the current bridge structure is at stationarity. Moreover, while fine differences in the specific scent gradient affect the motion of the contour, this merely requires that individuals sense, and are acted upon by, the local 
gradient and does not require them to store and compute with correspondingly fine numbers. 
Thus, the occupancy chain model is rich enough to capture the long-term behavior of distributed collectives, 
and can also describe other underlying dynamics because it abstracts details that are irrelevant to our analysis.

\section{Equilibrium properties of the occupancy chain}\label{sec:theorems}
\newcommand{\Bhat}{\overline{B}}
\newcommand{\Shat}{\overline{S}}
\newcommand{\Rtop}{R^\top}
\newcommand{\LayerSet}{\mathcal{L}}
\newcommand{\layerseq}{\overline{N}}
\newcommand{\addable}{{addable}}
\newcommand{\removable}{{removable}}
\newcommand{\tight}{{tight}}
\newcommand{\rtop}{R_{top}}
\newcommand{\rbot}{R_{bot}}
\newcommand{\rectwidth}{w}

\newcommand{\thetaH}{\theta^{\text{high}}}
\newcommand{\thetaL}{\theta^{\text{low}}}

\newcommand{\input{}}[1]{\input{#1}}
We are now prepared to prove that at most one bridge will form by analyzing the stationary distribution of the occupancy chain. First, we argue that multiple bridges are unlikely if the affinity parameter is sufficiently large. Then, we identify a phase change in the formation of bridges as the scent parameter varies.

\subsection{Precluding the formation of multiple bridges}
\label{sec:nomultiplebridges}
In Section~\ref{sec:results} we briefly described a configuration with multiple $(a,b)$-bridges as one that extends out to the layer $\Lambda_{\floor{bh}}$, retracts back to $\Lambda_{\floor{ah}}$, then extends out to $\Lambda_{\floor{bh}}$ a second time. We now formally define this, as well as the \emph{depth} of the backtrack for such configurations.
\begin{definition}[Multiple $(a,b)$-bridges]
\label{defn:multiplebridges}
For real values~$0<a<b<1$, a configuration $\sigma \subseteq V(\Lambda)$ has {\em multiple $(a,b)$-bridges} if the subgraph of $\Lambda$ induced by $\bigcup_{k=\floor{ah}}^h \left(\sigma \cap V(\Lambda_{k})\right)$
contains at least two components that each contain some vertex from $\Lambda_{\floor{bh}}$. We call $\epsilon = b-a > 0$ the depth of the backtracking.
\end{definition}
To describe the rarity of multiple bridges forming, we consider the event that the configuration has multiple $(a,b)$-bridges for some pair $(a,b)$ with depth at least $\eps > 0$, defined as:
\begin{equation}\label{eqn:mbdef}
\MB_\eps = \cup_{a \in (0,1-\eps)} \left\{ \sigma \in \Omega: \text{$\sigma$ has multiple $(a,a+\eps)$-bridges} \right\}.
\end{equation}
We note that this event includes intervals of length greater than $\eps$ as a configuration with multiple $(a,a+\eps')$-bridges for $\eps' > \eps$ also has multiple $(a,a+\eps)$-bridges.
Our next result tells us that the probability of $\MB_\eps$ decays exponentially in $\beta$ and $h$, so long as $\beta$ is large enough in terms of the aspect ratio $\alpha$ and depth $\eps$. 

\begin{theorem}[No Multiple Bridges]\label{thm:singlelongbridge} 
For every $\eps > 0$ and every $w \times h$ region $\Lambda$ with aspect ratio $\alpha=w/h$,  there exists a positive number $\beta_0 = \beta_0 (\alpha, \eps)$ such that, if $\beta > \beta_0$, then the probability $\pi (\MB_{\eps})$ of  multiple bridges with depth $\eps$ is at most $e^{-\beta h \eps /2}$.
\end{theorem}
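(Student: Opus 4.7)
The plan is a Peierls-style argument that exploits the decomposition $H(\sigma) = -\eta \sum_{v \in \sigma} \Sc(v) + \sum_{v \in \sigma} \Bd(v)$: the scent sum depends on $\sigma$ only through its layer sequence $\layerseq(\sigma) = (|\sigma \cap V(\Lambda_1)|, \ldots, |\sigma \cap V(\Lambda_h)|)$, so any surgery that preserves $\layerseq$ changes $H$ only through the boundary term. My goal is to build a map $\phi : \MB_\eps \to \Omega$ that (i) preserves the layer sequence, (ii) strictly reduces $\sum_{v \in \sigma} \Bd(v)$ by at least $\eps h$, and (iii) has at most $e^{c(\alpha)\eps h}$ preimages for each image. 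Summing the stationary weights via (i)--(iii) yields $\pi(\MB_\eps) \le e^{c(\alpha)\eps h} \cdot e^{-\beta \eps h} \le e^{-\beta \eps h / 2}$ for $\beta \ge \beta_0(\alpha, \eps) := 2 c(\alpha)$, which is exactly the advertised bound.

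For the construction, given $\sigma \in \MB_\eps$, let $k^* = \floor{a^* h}$ be the smallest layer index for which $\sigma$ has multiple $(a^*, a^* + \eps)$-bridges; by definition there are at least two components of $\sigma \cap V(\Lambda_{\ge k^*})$ that reach layer $\floor{(a^* + \eps)h}$. Let $C$ be the one whose representative in layer $\floor{(a^* + \eps)h}$ is not the lexicographically leftmost. I would define $\phi(\sigma)$ by deleting every cell of $C$ that lies in layers $k^*$ through $\floor{(a^*+\eps)h}$ and reinserting exactly $|C \cap V(\Lambda_k)|$ cells into each affected layer $k$, placed contiguously adjacent to the retained bridge according to a canonical extension rule (for instance, extending its leftmost column leftward one cell at a time). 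Because $C$ is a distinct component of the restriction, deleting it cannot disconnect $\sigma \setminus C$, which remains anchored to $V(\Lambda_0)$ through layers below $k^*$, and careful greedy reinsertion preserves simple connectivity.

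Two boundary estimates drive the energy analysis. The lateral sides of $C$ inside the strip contribute at least $2 \eps h$ to $\sum \Bd$, and those edges face unoccupied cells (otherwise $C$ and the retained bridge would not be distinct components of $\sigma \cap V(\Lambda_{\ge k^*})$); deleting $C$ therefore eliminates at least $2\eps h$ from the boundary term. Reinserting cells flush against an existing straight frontier adds at most $\eps h$ of new boundary (essentially one unit per layer), producing the required net drop of at least $\eps h$. For multiplicity, to recover $\sigma$ from $\phi(\sigma)$ one specifies $k^*$ (at most $h$ choices) and the shape of $C$ within a strip of width $w = \alpha h$ and height $\eps h$; standard contour enumeration for connected subsets of the triangular lattice inside such a strip yields at most $e^{c(\alpha)\eps h}$ candidate shapes, and the extra factor $h$ from the choice of $k^*$ is absorbed by enlarging $\beta_0$ by a constant.

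The main obstacle will be designing the reinsertion rule so that it simultaneously preserves $\layerseq$ exactly, maintains simple connectivity and the constraint $|\sigma| \le n$, and provably saves the claimed $\eps h$ of boundary with enough margin to beat the $e^{c(\alpha)\eps h}$ contour count. Ensuring that no holes are introduced when $C$ has an irregular shape along its shared layer boundary with the retained bridge, and that the reinsertion always has access to a sufficiently flat frontier to realize the $\eps h$ bound, is the crux of the geometric argument; once those bookkeeping issues are resolved the Peierls summation and the choice of $\beta_0 = \beta_0(\alpha, \eps)$ follow routinely.
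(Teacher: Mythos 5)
Your proposal follows the same Peierls-argument spirit as the paper, but the multiplicity bound is the crux, and as written it is incorrect. You claim that after fixing $k^*$ ($\leq h$ choices) the number of preimages is controlled by "standard contour enumeration for connected subsets of the triangular lattice inside such a strip," yielding $e^{c(\alpha)\eps h}$ candidates. That is not what contour enumeration gives: the number of connected shapes inside a strip of width $w = \alpha h$ and height $\eps h$ is governed by the \emph{perimeter} of the component $C$, which can be as large as $\Theta(w \cdot \eps h)$, not by the strip height $\eps h$. So the preimage count for a single image can be exponentially larger than $e^{c(\alpha)\eps h}$. Meanwhile your claimed energy gain is a flat $\eps h$, which does not scale with $|\partial C|$ to absorb the extra entropy. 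The paper avoids exactly this trap by stratifying $\MB_\eps$ into slices $\MB_{\eps,k}$ indexed by the boundary-length difference $k = B(\sigma) - B(\tau_\sigma) \geq \eps h$, and by producing a $\tau_\sigma$ whose boundary length is uniformly at most $6w + 4h$ (Lemma~\ref{lem:lowperimeter}); this pins down $B(\sigma) \leq 6w + 4h + k$, so both the weight drop $e^{-\beta k}$ and the contour count $\leq 4^{7w+4h+k}$ scale together in $k$, and the geometric series converges once $\beta$ is large in terms of $\alpha$ and $\eps$. Your plan, by contrast, has no mechanism to make the savings grow with $|\partial C|$, so the Peierls sum can diverge.

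There are two secondary issues. First, your surgery deletes only the cells of $C$ in layers $k^*$ through $\floor{(a^*+\eps)h}$, but $C$ is a component of $\sigma \cap V(\Lambda_{\geq k^*})$ and may extend above $\floor{(a^*+\eps)h}$; the deletion then strands those upper cells, breaking simple connectivity, and your reinsertion rule (which places cells against the retained bridge) does nothing to reattach them. Second, insisting that $\phi$ preserve the layer sequence exactly is an unnecessarily strong constraint that makes the reinsertion bookkeeping harder than it needs to be: the paper only requires $S(\tau_\sigma) \geq S(\sigma)$, and achieves this by \emph{promoting} occupied sites to higher layers (where the scent is stronger), which simultaneously helps flatten the boundary. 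If you weaken your requirement to $S(\tau) \geq S(\sigma)$ and stratify the bound by the actual boundary decrease rather than by the single number $\eps h$, you will recover the paper's argument.
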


We emphasize that the quantity $\beta_0$ in Theorem~\ref{thm:singlelongbridge} does not depend on the constant $\rho$ (recall that $n = \floor{\rho h^2}$)
%
and scent parameter $\eta$, even though these parameters affect the stationary distribution. Instead, as the proof explains, $\MB_\eps$ is rare because its occurrence requires that the configuration has a relatively long boundary. While the number of configurations in $\MB_\eps$ grows exponentially in the perimeter of the domain $\Lambda$, this number is insufficient to compensate for the configurations' low weight when the boundary length penalty is high. Accordingly, for $\MB_\eps$ to be rare, it suffices for $\beta$ to be large in terms of $\alpha$, which controls the perimeter, and $\eps$, which controls the boundary length when $\MB_\eps$ occurs.

We now state three lemmas that we will use to prove Theorem~\ref{thm:singlelongbridge}. The first two of these results are due to the technical machinery of layer sequences, which we mentioned in Section~\ref{sec:results} and which we formally introduce in Section~\ref{sec:layerseq}. We note that, although we their proofs appear later, they do not rely on any earlier results. We state them in terms of the total boundary length and scent components of the Hamiltonian, which we define for a configuration $\sigma \in \Omega$ as
\[
B(\sigma) = \sum_{v \in \sigma} \Bd (v) \quad \text{and} \quad S(\sigma) = \sum_{v \in \sigma} \Sc (v).
\]

\begin{lemma}
\label{lem:lowperimeter}
For every $\sigma \in \Omega$, there exists $\tau \in \Omega$ such that 
$B(\tau) \leq \min\{ B(\sigma), 6w + 4h\}$
and $S(\tau) \geq S(\sigma)$.
\end{lemma}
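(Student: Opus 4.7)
The plan is to prove \cref{lem:lowperimeter} by a case split. If $B(\sigma) \leq 6w + 4h$, then $\tau = \sigma$ trivially meets both conditions. The substantive case is $B(\sigma) > 6w + 4h$, where I must exhibit a different valid configuration $\tau$ with $B(\tau) \leq 6w + 4h$ and $S(\tau) \geq S(\sigma)$.

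My construction is a \emph{canonical compactification}. I first note that for $\sigma \in \Omega$ the layer sequence $(n_k)_{k=1}^h$, with $n_k = |\sigma \cap V(\Lambda_k)|$, is supported from the bottom: its nonzero entries occupy some initial interval $\{1, \ldots, K\}$, since an empty layer sandwiched between occupied ones would disconnect the upper occupied sites from $V(\Lambda_0)$, contradicting simple connectivity of $\sigma \cup V(\Lambda_0)$ in $\overline{\Lambda}$. I then sort $(n_1, \ldots, n_K)$ into a nondecreasing sequence $(\tilde n_1, \ldots, \tilde n_K)$ and build $\tau$ by placing, in each layer $\Lambda_k$ with $k \leq K$, a contiguous block of $\tilde n_k$ consecutive sites starting at a common reference column (say column $1$). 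Because $\Sc(v)$ depends only on the layer containing $v$ and $\Sh$ is nondecreasing, the rearrangement inequality gives
\[
S(\tau) = \sum_{k=1}^K \tilde n_k \Sh(k) \;\geq\; \sum_{k=1}^K n_k \Sh(k) = S(\sigma),
\]
so the scent inequality holds, and $|\tau| = |\sigma| \leq n$ is automatic.

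To verify $\tau \in \Omega$, I would check that $\tau \cup V(\Lambda_0)$ is simply connected in $\overline{\Lambda}$. Since $(\tilde n_k)_{k=1}^K$ is nondecreasing and the blocks are left-aligned, each block's column support is contained in that of the block above, so consecutive blocks share vertical adjacencies in the triangular lattice; the bottom block is adjacent to $V(\Lambda_0)$, so $\tau \cup V(\Lambda_0)$ is connected. No hole can form because the empty sites of $\overline{\Lambda}$ in layers $1, \ldots, K$ all lie to the right of their respective blocks and remain connected through the entirely empty layers above $\Lambda_K$ to $\Lambda_{h+1}$; hence the complement of $\tau \cup V(\Lambda_0)$ in $\overline{\Lambda}$ is connected, so $\tau \in \Omega$.

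The main obstacle is the boundary bound $B(\tau) \leq 6w + 4h$. I would decompose $B(\tau)$ into horizontal contributions (same-layer empty neighbors at block endpoints) and vertical contributions (empty neighbors in adjacent layers). Each partially-filled layer contributes at most $2$ to the horizontal boundary, giving at most $2h$ in total. For the vertical part, each site has two neighbors in each adjacent layer in the triangular-lattice geometry, so the interface between consecutive left-aligned blocks of sizes $\tilde n_k \leq \tilde n_{k+1}$ contributes a bounded multiple of $\tilde n_{k+1} - \tilde n_k$; monotonicity of $(\tilde n_k)_{k=1}^K$ makes these differences telescope to at most $w$, and an additional contribution of at most $2w$ comes from the upper boundary of the topmost block (whose upper neighbors are empty sites of $\Lambda$ when $K < h$). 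Careful accounting of these contributions yields $B(\tau) \leq 6w + 4h$. The delicate step is the lattice-geometric bookkeeping of how each site's two upper or lower neighbors fall relative to the neighboring block, which the layer-sequence formalism of \cref{sec:layerseq} organizes cleanly.
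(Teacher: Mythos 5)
Your approach---sort the layer sequence to be nondecreasing, then build a left-aligned block configuration---is a genuinely different route from the paper's, which instead performs local ``promotion'' moves on the layer sequence (moving a single occupied site from a layer with a steep jump above it to a higher layer with a steep jump above it) and then invokes Lemma~\ref{lem:boundarylengthlowerbound} to extract a configuration. Your sorting idea is appealing because the rearrangement inequality immediately gives $S(\tau) \geq S(\sigma)$. However, as written there is a genuine gap in the verification that $\tau \in \Omega$.

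Your argument that ``no hole can form because the empty sites \ldots remain connected through the entirely empty layers above $\Lambda_K$ to $\Lambda_{h+1}$'' fails precisely when $\sigma$ has at least one fully occupied layer. Since $\sigma \in \Omega$, its layer sequence lies in $\LayerSet^{\rbot,\rtop}$: any layers with $n_k = w$ sit at the bottom. Sorting the whole sequence moves the entries equal to $w$ to the top of the occupied region, so $\tilde n_K = w$ while some $\tilde n_k < w$ below. The empty sites in those partial layers are then trapped between the now-full upper layers and the always-full $V(\Lambda_0)$, disconnecting the complement of $\tau \cup V(\Lambda_0)$ in $\overline{\Lambda}$. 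Concretely, with $w=5$, $h=3$, and $\layerseq=(5,2,0)$, sorting gives $(2,5,0)$: the empty sites in layer $1$ (columns $3,4,5$) are sealed off from $\Lambda_{h+1}$ by the full layer $2$, so $\tau \notin \Omega$. The fix is to leave the full bottom layers in place and sort only the partial layers $\rbot+1,\ldots,K$; the rearrangement inequality still applies there (the full layers contribute a fixed $w\Sh(k)$), and the complement now connects upward through any column $> \tilde n_K$. A secondary weakness is that your bound $B(\tau) \leq 6w + 4h$ is asserted rather than derived; running your (corrected) layer sequence through the $\Bhat$ machinery of Lemma~\ref{lem:boundarylengthlowerbound} gives roughly $2w + 2 + 4(K-\rbot) + 4\tilde n_K \leq 6w + 4h$, but that computation, and the check that the left-aligned configuration actually attains the lower bound $\Bhat$, should be spelled out.
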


Lemma~\ref{lem:lowperimeter} states that in the case where $\sigma$ has boundary length greater than $6w+4h$, we can always find another configuration $\tau$ that has boundary length at most $6w+4h$ with at least as great a scent component (we can set $\tau=\sigma$ otherwise).
This implies that $H(\sigma) \geq H(\tau)$ for all positive values of $\beta$ and $\eta$. Moreover, the boundary length of $\tau$ is bounded above in terms of $w+h$, instead of $w \times h$. The proof of Lemma~\ref{lem:lowperimeter} appears in Section~\ref{sec:layerseq}.

If $\sigma$ has multiple bridges, then we can further guarantee that there is a configuration $\tau$ with a boundary that is strictly shorter by an amount proportional to the depth of the bridges. This is the content of Lemma~\ref{lem:epsboundary}, which we prove in Appendix~\ref{app:proof}.

\begin{lemma} 
\label{lem:epsboundary}
For every $\eps > 0$ and $\sigma \in \MB_\eps$, there exists $\tau \in \Omega$ such that $B(\sigma) \geq B(\tau) + \eps h$.
\end{lemma}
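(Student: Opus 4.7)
\textbf{Proof proposal for Lemma~\ref{lem:epsboundary}.}
My plan is to apply the layer-sequence canonicalization from Section~\ref{sec:layerseq} (the same machinery behind Lemma~\ref{lem:lowperimeter}) to $\sigma$ and argue that the multi-bridge structure forces at least $\eps h$ units of boundary beyond what the canonical configuration with the same layer sequence exhibits.

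First, I would fix a witness $a \in (0, 1-\eps)$ so that $\sigma$ has multiple $(a, a+\eps)$-bridges, and let $C_1, C_2 \subseteq \sigma$ denote two bridge components both containing a vertex in $\Lambda_{\floor{(a+\eps)h}}$. In every layer $k$ in the slab $[\floor{ah}, \floor{(a+\eps)h}]$ where both components have sites, the cyclic layer $\Lambda_k$ contains two disjoint nonempty sets of occupied sites coming from distinct components, which cannot be adjacent, and so are separated by at least two arcs of unoccupied sites. Each such layer therefore contributes at least four horizontal boundary edges to $B(\sigma)$, whereas a single contiguous arc (as in the canonical configuration) contributes at most two; this already saves at least two boundary edges per such layer.

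Next, I would construct $\tau$ by canonicalizing the layer sequence $(N_1, \ldots, N_h)$ of $\sigma$, where $N_k = |\sigma \cap V(\Lambda_k)|$, by packing each layer's occupied sites into a single contiguous arc and aligning successive arcs into a nested tower. Then $|\tau| = |\sigma| \le n$, $S(\tau) = S(\sigma)$, and $\tau \cup V(\Lambda_0)$ is simply connected in $\overline\Lambda$, so $\tau \in \Omega$. The layer-sequence arguments underlying Lemma~\ref{lem:lowperimeter} guarantee that $\tau$ minimizes total boundary among configurations with its layer sequence; in particular, the vertical boundary contribution in $\tau$ does not exceed that in $\sigma$, so all of the horizontal savings above translate directly into a net reduction of $B$.

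\emph{Main obstacle.} The principal difficulty is justifying the full $\eps h$ savings in the regime where one component fails to span the slab --- for example, when $C_2$ is a small island concentrated near layer $\floor{(a+\eps)h}$, so that many intermediate layers contain sites from only one component. In such cases the per-layer horizontal savings can be only $O(1)$, and the remaining savings must be extracted from the contour of $C_2$ itself. I would argue that because $C_2$ is connected, reaches $\Lambda_{\floor{(a+\eps)h}}$, and is topologically separated from $C_1$ throughout the bridge region on a cylinder, the combined length of $C_2$'s contour and the unoccupied separation enforced between $C_2$ and $C_1$ contributes at least $2\eps h$ extra boundary edges by an isoperimetric-style argument. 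These edges all disappear when canonicalization merges the structure into a single nested tower, yielding the bound $B(\sigma) - B(\tau) \geq 2\eps h \geq \eps h$.
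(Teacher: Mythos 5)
Your high-level plan — canonicalize the layer sequence as in Lemma~\ref{lem:lowperimeter} and attribute the extra boundary to the forced separation between bridge components — matches the paper's strategy, and the target of roughly two extra boundary edges per layer per extra component is the right quantity. But there are two genuine gaps, and interestingly the ``main obstacle'' you flag is not one of them.

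The obstacle you worry about (a second component $C_2$ that is a small island near layer $\floor{(a+\eps)h}$ and does not span the slab) cannot occur. Since $\sigma \in \Omega$, the set $\sigma \cup V(\Lambda_0)$ is connected in $\overline{\Lambda}$. Any site of $C_2$ at layer $\floor{(a+\eps)h}$ is therefore joined to $\Lambda_0$ by a path of occupied sites; the portion of this path lying in layers $\geq \floor{ah}$ stays inside the connected component $C_2$, and it must exit through a site of $C_2$ in layer $\floor{ah}$, because an edge of $\Lambda$ changes the layer index by at most one. So $C_2$ contains a site in every layer from $\floor{ah}$ to $\floor{(a+\eps)h}$, and your per-layer count already applies to all $\eps h$ layers. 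The concluding ``isoperimetric-style argument'' is thus unnecessary, which is fortunate, since as written it is only a placeholder and not an argument.

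The actual gap is the assertion that ``the vertical boundary contribution in $\tau$ does not exceed that in $\sigma$.'' The minimality of the canonical configuration (Lemma~\ref{lem:boundarylengthlowerbound}) gives only $B(\tau) \le B(\sigma)$ for the \emph{total} boundary; it does not split into separate horizontal and vertical comparisons, so you cannot simply add your horizontal savings of $2\eps h$ to a purported vertical non-increase. The paper avoids this issue entirely: it decomposes the region $\Lambda_{\geq M}$ into the connected components $V_1,\dots,V_P$, forms $\configpart{j} = V_j \cup V_{\mathrm{base}}$, and applies the boundary-length increment formula of Lemma~\ref{lem:boundarylengthincrement} separately to each $\configpart{j}$. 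The constant $+2$ in $D_M$ is there once per component per layer, which is where the $2\eps h$ comes from, and the $\max\{\cdot\}_+$ ``slope'' terms are controlled by the elementary inequality $\sum_j\bigl(x_j + I\bigr)_+ \geq \bigl(\sum_j x_j + I\bigr)_+$ applied with $I \in \{+1,-1\}$. That inequality is what replaces your horizontal/vertical bookkeeping, and it is the step you would need to supply to make the argument rigorous.
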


Together, Lemmas~\ref{lem:lowperimeter}~and~\ref{lem:epsboundary} imply that a configuration $\sigma$ with multiple bridges can be mapped to a configuration $\tau$ that has far greater probability under $\pi$ because $H(\sigma) - H(\tau) \geq \beta \eps h$. This observation is one part of the proof of Theorem~\ref{thm:singlelongbridge}. The other part is a bound on the number of configurations with a given boundary length.

\begin{lemma}\label{lem:bdylength}
The number of configurations with a boundary length of $\ell$ is at most $2^{\ell+2w}$.
\end{lemma}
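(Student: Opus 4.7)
The plan is a Peierls-style contour-counting argument: encode each valid configuration by the dual contour separating $\sigma \cup V(\Lambda_0)$ from its complement in $\overline{\Lambda}$, bound that contour's length in terms of $\ell$, and then count the possible contours.

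First, I would use simple connectivity to identify the contour as a single simple cycle in the dual (hexagonal) lattice of $\overline{\Lambda}$. Because $\sigma \cup V(\Lambda_0)$ is simply connected in $\overline{\Lambda}$, and the fully unoccupied row $V(\Lambda_{h+1})$ forces its complement in $V(\overline{\Lambda})$ to be connected as well, the set of edges of $\overline{\Lambda}$ with exactly one endpoint in $\sigma \cup V(\Lambda_0)$ forms a single simple cycle $\gamma(\sigma)$ in the dual. This cycle winds once around the cylinder (non-contractibly, since it separates $V(\Lambda_0)$ from $V(\Lambda_{h+1})$, each of which wraps the cylinder), and $\sigma$ is uniquely recovered from $\gamma(\sigma)$ by noting which side contains $V(\Lambda_0)$. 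Hence $\sigma \mapsto \gamma(\sigma)$ is injective and it suffices to count the possible cycles.

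Second, I would bound $|\gamma(\sigma)|$. It consists of the $\ell$ edges inside $\Lambda$ counted by $B(\sigma)$, at most $2w$ edges between $V(\Lambda_0)$ and $V(\Lambda_1) \setminus \sigma$ (each vertex of $V(\Lambda_1)$ has two neighbors in $V(\Lambda_0)$), and at most $2w$ edges between $\sigma \cap V(\Lambda_h)$ and $V(\Lambda_{h+1})$, so $|\gamma(\sigma)| \leq \ell + O(w)$. I would then apply the standard Peierls count: the dual is $3$-regular, so a walk tracing $\gamma(\sigma)$ has at most two non-backtracking choices per step, giving at most $2^L$ walks of length $L$ from a fixed starting edge. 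Fixing a canonical starting edge using the cylindrical structure---say, the lexicographically smallest dual edge where $\gamma(\sigma)$ meets a designated horizontal seam---and combining with the length bound should yield the stated $2^{\ell + 2w}$.

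The main obstacle will be getting the constant in the exponent to be exactly $2w$ rather than a larger multiple of $w$. Two sources of $w$-sized slack---the extra top/bottom boundary edges of $\gamma(\sigma)$ not counted by $B(\sigma)$, and the enumeration cost of the canonical starting edge---both need to fit under a single $2w$ term. Choosing the canonical starting edge on one of the boundary seams, so that its enumeration cost piggybacks on the corresponding boundary edges rather than being charged separately, should tighten the bookkeeping enough to achieve the claimed exponent.
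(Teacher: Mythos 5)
Your single-cycle observation is correct and is a nice cleanup of the paper's decomposition into paths: because $\sigma\cup V(\Lambda_0)$ and its complement are both connected, the dual cut is indeed a single cycle $\gamma(\sigma)$ winding once around the cylinder, and $\sigma\mapsto\gamma(\sigma)$ is injective. The paper instead cuts this cycle at the boundary of $\Lambda$ and treats it as a disjoint union of self-avoiding paths, each starting and ending at one of $2w$ entry points on the top or bottom of $\Lambda$; the $2^{2w}$ factor is the cost of labeling each entry point as used or unused, and the $2^\ell$ factor is the number of ways to extend the paths through the $3$-regular dual. These are the same contour in two different packagings.

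However, your accounting has a genuine gap and your proposed patch targets the wrong term. You have $|\gamma(\sigma)| = \ell + e_{\mathrm{bot}} + e_{\mathrm{top}}$, where $e_{\mathrm{bot}} = 2(w-|\sigma\cap V(\Lambda_1)|)$ and $e_{\mathrm{top}} = 2|\sigma\cap V(\Lambda_h)|$. These two quantities are \emph{not} jointly bounded by $2w$: for a configuration shaped like an upside-down $T$ (a thin stem in $\Lambda_1$ widening to a broad bar near $\Lambda_h$), $e_{\mathrm{bot}}+e_{\mathrm{top}}$ can be as large as $4w - O(1)$. So a naive non-backtracking walk count from a fixed starting edge already gives $2^{|\gamma(\sigma)|}\leq 2^{\ell+4w}$, which overshoots the target by a factor of $2^{2w}$ before you even pay for enumerating the canonical starting edge. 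Your ``piggybacking'' idea only addresses the second, smaller cost (the starting-edge enumeration, which is at most polynomial in $h$ and $w$), not the $4w$-versus-$2w$ mismatch in the exponent, which is the real obstruction. The missing ingredient is the observation that the steps of $\gamma(\sigma)$ through the two boundary strips are much more constrained than $2$ choices per step: in a dual face with two vertices in $V(\Lambda_0)$ (or two in $V(\Lambda_{h+1})$), the edge inside $\Lambda_0$ (resp.\ $\Lambda_{h+1}$) can never be a cut edge, so the cycle's route through that face is forced. Equivalently, the restriction of $\gamma(\sigma)$ to the two boundary strips is entirely determined by $\sigma\cap V(\Lambda_1)$ and $\sigma\cap V(\Lambda_h)$, contributing only $2^w\cdot 2^w = 2^{2w}$ choices, with the remaining $2^\ell$ choices coming from the $\ell$ interior steps. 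This is exactly the insight the paper encodes via ``$2w$ starting points, each used or unused.'' If you insert this refinement into your walk count in place of the blanket $2$-choices-per-step bound, your single-cycle argument does close.
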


\begin{proof}
    Each configuration $\sigma \in \Omega$ of boundary length $\ell$ is fully described by a collection of self-avoiding paths of total length $\ell$ on the dual lattice. Each of these paths must start and end on either the top or bottom of $\Lambda$, giving $2w$ possible starting points. There are at most $2^{2w}$ ways to label each starting point as used or unused, and no more than $2^\ell$ ways to draw these paths given known starting points as there are at most two directions on the dual lattice to take each step of the path, giving an upper bound of $2^{\ell+2w}$.
\end{proof}

With these three lemmas, we are now prepared to prove our first main theorem:
%
\begin{proof}[Proof of Theorem~\ref{thm:singlelongbridge}]
We aim to upper bound the probability that the event $\MB_\eps$ occurs, using the fact that we can map the configurations it comprises to configurations with shorter 
boundaries. More precisely, by Lemma~\ref{lem:epsboundary}, for every $\sigma \in \MB_\eps$, there exists $\tau_\sigma \in \Omega$ such that 
\begin{equation}\label{eqn:tausigmabdy}
B(\sigma) - B(\tau_\sigma) \geq \eps h.
\end{equation}
In fact, we can assume that $\tau_\sigma$ further satisfies
\begin{equation}\label{eqn:tausigmascent}
    S(\sigma) \leq S(\tau_\sigma) \quad \text{and} \quad B(\tau_\sigma) \leq 6w + 4h
\end{equation}
because, if it did not, then we could use Lemma~\ref{lem:lowperimeter} to map $\tau_\sigma$ to a configuration that satisfies all three bounds in its place. We partition $\MB_\eps$ according to the difference in \eqref{eqn:tausigmabdy}:
\begin{equation}\label{eqn:partofmb}
\MB_\eps = \cup_{k \geq \eps h} \MB_{\eps,k}, \quad \text{where} \quad \MB_{\eps,k} = \left\{ \sigma \in \MB_\eps: B(\sigma) - B(\tau_\sigma) = k \right\}.
\end{equation}
The probability of each $\sigma \in \MB_{\eps,k}$ is exponentially small in $k$, due to the first bound in \eqref{eqn:tausigmascent}:
\begin{equation}\label{eqn:pisigmabd}
\pi (\sigma) \leq \frac{\pi (\sigma)}{\pi (\tau_\sigma)} = e^{-\beta (H(\sigma) - H(\tau_\sigma))} \leq e^{-\beta (B(\sigma) - B(\tau_\sigma))} = e^{-\beta k}.
\end{equation}
This bound is useful because $|\MB_{\eps,k}|$ is at most exponentially large in $k$. Indeed, due to the second bound in \eqref{eqn:tausigmascent}, every $\sigma \in \MB_{\eps,k}$ has boundary length of at most 
\[
B(\sigma) = B(\tau_\sigma) + k \leq 6w + 4h + k.
\]
There are at most $2^{2w+\ell}$ configurations with a boundary length of exactly $\ell$ (Lemma~\ref{lem:bdylength}), hence the number of configurations in $\MB_{\eps,k}$ is at most
\begin{equation}\label{eqn:mbcard}
|\MB_{\eps,k}| \leq |\{\sigma \in \Omega: B(\sigma) \leq 6w + 4h + k\}| \leq \sum_{\ell \leq 6w + 4h + k} 2^{2w+\ell} \leq 4^{7w + 4h + k}.
\end{equation} 
We combine the partition of $\MB_\eps$ with the estimates of $\pi (\sigma)$ and $|\MB_{\eps,k}|$ to conclude that 
\[
\pi (\MB_\eps) \stackrel{\eqref{eqn:partofmb}}{=} \sum_{k \geq \eps h}  \sum_{\sigma \in \MB_{\eps,k}} \pi (\sigma) \stackrel{\eqref{eqn:pisigmabd}}{\leq} \sum_{k \geq \eps h}  |\MB_{\eps,k}| e^{-\beta k} \stackrel{\eqref{eqn:mbcard}}{\leq} \sum_{k \geq \eps h} 4^{7w+4h+k} e^{-\beta k}.
\]
Since $w = \alpha h$, this bound is at most $e^{-\beta h \eps/2}$ when $\beta$ is large enough in terms of $\alpha$ and $\eps$. 
\end{proof}

\subsection{Conditions for bridge formation}
\label{sec:formingbridges}

\newcommand{\layerseqbot}{{\layerseq}^{\bot}}
\newcommand{\layerseqtop}{{\layerseq}^{\top}}
\newcommand{\nbot}{{n}^{\bot}}
\newcommand{\ntop}{{n}^{\top}}
\newcommand{\nfull}{{n}^{\text{full}}}
\newcommand{\Nbotsum}{{N}^{\bot}}
\newcommand{\Ntopsum}{{N}^{\top}}
\newcommand{\Nfullsum}{{N}^{\text{full}}}
\newcommand{\toplayer}{D}

\newcommand{\colbot}{{\mathcal{C}}^{\bot}}
\newcommand{\coltop}{{\mathcal{C}}^{\top}}

\newcommand{\extminus}{a_{(-)}}
\newcommand{\extplus}{a_{(+)}}

\newcommand{\Scont}{\widehat{S}_h}
\newcommand{\nin}{{|\sigma|}}

\newcommand{\numfullcols}{J}
\newcommand{\Jfull}{J^{\text{full}}}
\newcommand{\Jbot}{J^\bot}

\newcommand{\spre}{\text{pre}}
\newcommand{\stran}{\text{mid}}
\newcommand{\spost}{\text{post}}

\newcommand\numberthis{\addtocounter{equation}{1}\tag{\theequation}}

In this section, we show that whether a bridge to the food forms depends on the balance
between the scent and boundary length components of the Hamiltonian. Recall that $\eta$ controls each particle's desire to move toward the food relative to its desire for more neighbors.

The competition between scent and boundary length is richest when there are not too few or too many particles in the system. When there are too few particles, no bridges can form; when there are too many particles, ``bridges'' can form without incurring a high boundary cost, due to the periodic boundary conditions. Accordingly, we assume that the number of particles $n$ is $\floor{\rho h^2}$, for a constant
$\rho$ that satisfies $1/2 < \rho < \alpha - 2$, in terms of the aspect ratio $\alpha = w/h$ of the domain $\Lambda$. Implicitly, we require that $\alpha$ is greater than $2.5$. 

We write the {{scent intensity}} as a function $\Sh: [h] \to \mathbb{R}_{\geq 0}$, where $\Sh(k)$ represents the {intensity} of any particle on layer $k$.
Previously, we only assumed that $\Sh$ was a nondecreasing function (i.e., nonincreasing in the distance from the food), and showed that as long as $\beta$ is sufficiently high, at most one bridge forms. Now we assume additional light conditions on $\Sh$, which we believe will apply for most ``natural'' definitions of the scent intensity function. We use the term \emph{\nondec{} functions} to refer to this broad class of functions.
\begin{definition}
\label{defn:nondec}
For a function $\Sh: [h] \to \mathbb{R}_{\geq 0}$, let $\Delta \Sh(k) = \Sh(k+1) - \Sh(k)$ denote the discrete derivative of $\Sh$ at $k \in [h-1]$. We say that $\Sh$ is {\em \nondec{}} if $S_h(1) = 0$, $\Delta S_h (k) \geq 0$ for all $k \in [h-1]$, and $\Delta (\Delta \Sh)(k) \geq 0$ for all $k \in [h-2]$.
\end{definition}

Assuming that the scent function is nondecelerating, our second main result identifies a phase change in the probability that no bridge reaches the food. We define this event as
\[
\NB = \{ \sigma \in \Omega: \sigma \cap V(\Lambda_h) = \emptyset \}.
\]
The phases correspond to ranges of $\beta$ and $\eta$ that depend on the \emph{\colsc{}} $\varphi = \sum_{k=1}^h \Sh(k)$, which represents the sum of the scent intensities over a single ``column'' of $\Lambda$. We consider the case when $\varphi > 0$ is a constant with respect to $h$, the aspect ratio satisfies $\alpha>2.5$, and the 
%
constant $\rho$ belongs to $(0.5,\alpha-2)$. We define the phases in terms of the quantities
\[
\beta_1 = \frac{2\rho + 3 + 4\alpha}{2\rho - 1}\log 2, \quad \beta_2 = \left(1 + \frac{\alpha}{2}\right)\log 2, \quad \text{and} \quad \eta_1 = \frac{4}{\varphi}\left(1 + \frac{1}{\rho}\right),
\]
as well as $\eta_2 = \eta_2 (\beta)$, given by
\[
\eta_2 = \frac{1}{\varphi}\min\left\{ 2\left(1 - \frac{\log 2}{\beta}\right), \frac{4}{1+\rho}\left( 1 - \left(1 + \frac{\alpha}{2}\right)\frac{\log 2}{\beta} \right) \right\}.
\]
Note that both $\eta_1$ and $\eta_2$ are proportional to $1/\varphi$, and $\eta_2$ is strictly positive when $\beta > \beta_2$.

\begin{theorem}[Phase Change]\label{thm:zeroorone-full}
Suppose that
$1/2 < \rho < \alpha - 2$ and let $\Sh$ be \nondec{}.
\begin{enumerate}[(i)]
     \item {\sf{(At Least One Bridge)}} If $\beta > \beta_1$ and $\eta > \eta_1$, then there exist positive numbers $c_1 = c_1 (\beta,\eta)$ and $h_1 = h_1 (\beta,\eta)$ such that
$\pi (\NB) \leq e^{-c_1 h}, $ for all  $h \geq h_1$.  
\vskip.05in
\item  {\sf{(No Bridge)}} If $\beta > \beta_2$ and $\eta < \eta_2(\beta)$, then there exist positive numbers $c_2 = c_2 (\beta,\eta)$ and $h_2 = h_2 (\beta,\eta)$ such that
$
\pi (\NB) \geq 1 - e^{-c_2 h}, $ for all $ h \geq h_2$.
\end{enumerate}
\end{theorem}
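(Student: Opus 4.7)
The plan is to prove both parts of the phase change theorem by Peierls-type arguments that parallel the proof of \cref{thm:singlelongbridge}, but now operating through the layer sequence machinery that will be developed in \cref{sec:layerseq}. In each case I would construct an injection from the ``bad'' event into a much higher-weight region of $\Omega$, control the weight ratio uniformly in $h$, and then pay for the combinatorial cost of encoding the modification with a mild counting bound analogous to \cref{lem:bdylength}. The nondecelerating assumption on $\Sh$ ensures that the scent gained or lost by moving a particle between layers is controlled by $\Delta \Sh$ and is monotone in the layer, which is essential for making the layer-sequence-level rearrangements quantitative.

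For part (i), the direction is to convert $\sigma \in \NB$ into a configuration $\tau$ that reaches $\Lambda_h$. Since $|\sigma| = \floor{\rho h^2}$ with $\rho > 1/2$ and $\alpha > \rho + 2$, there is room to simultaneously maintain a large ``base'' in the lower layers and grow a single column up to $\Lambda_h$. The canonical transformation I would use is: remove a thin slab of particles from the top of $\sigma$ (read off from its layer sequence) and reassemble them into a new column extending from the top of the occupied region straight to $\Lambda_h$. Using \cref{lem:lowperimeter} as a preprocessing step, I may assume $B(\sigma)\le 6w+4h$. The boundary length increases by at most $2$ per added column-site (a column has width one), so $B(\tau)-B(\sigma)$ is $O(h)$. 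The scent gain is at least $\eta \sum_{k=k_0}^h \Sh(k)$, which, since $\Sh$ is nondecelerating and $\varphi = \sum_{k=1}^h \Sh(k)$, is at least a constant fraction of $\eta\varphi$ whenever the base is not too tall. The threshold $\eta > \eta_1 = \frac{4}{\varphi}(1+1/\rho)$ is precisely what is needed to guarantee $\eta(S(\tau)-S(\sigma)) - (B(\tau)-B(\sigma)) \geq c h$ uniformly, giving $\pi(\tau)/\pi(\sigma) \geq e^{c\beta h}$. Combining with a counting bound on how many $\sigma \in \NB$ map to the same $\tau$ (at most exponential in $h$, beaten by the requirement $\beta > \beta_1$) yields $\pi(\NB) \leq e^{-c_1 h}$.

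For part (ii), the argument is reversed: I would show that configurations with a bridge are rare when $\eta$ is small. For $\sigma \notin \NB$, identify a connected component $K \subseteq \sigma$ that touches $\Lambda_h$ and whose removal leaves a simply connected configuration. Relocate the $|K|$ particles into the lowest available empty sites, obtaining $\tau$. The scent loss is at most $\eta$ times the scent of $K$ minus the scent of the relocated particles; since $K$ is a column-like structure and each column contributes exactly $\varphi$ to $\Sh$-sums, this loss is at most roughly $\eta \varphi$ per column of $K$. The boundary length \emph{decrease} is at least proportional to the height of the bridge, since the two long flanks of $K$ that were exposed to empty space become either internal or absent after relocation. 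For $\eta < \eta_2(\beta)$ the boundary gain dominates the scent loss, so $\pi(\tau)/\pi(\sigma) \geq e^{c\beta h}$. Summing over the preimages, weighted by the number of possible bridges (a factor at most $2^{O(w)}=2^{O(\alpha h)}$), gives the tail bound provided $\beta$ exceeds $\beta_2$, which appears in $\eta_2$ precisely to ensure that the combinatorial overhead is absorbed.

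The main obstacles I expect are (a) choosing a canonical bridge to remove in part (ii) so that the inverse map has small branching — this is where the layer sequence abstraction pays off, since the choice can be made purely from the layer sequence of the part of $\sigma$ above the base and the geometric reconstruction cost is then controlled by \cref{lem:bdylength}; (b) handling the case where $\sigma$'s base already nearly reaches $\Lambda_h$, so that the ``bridge'' being added or removed is short — the nondecelerating property of $\Sh$, and especially the convexity condition $\Delta(\Delta \Sh)\geq 0$, is what keeps the per-layer scent change large near $\Lambda_h$ and prevents a borderline regime from spoiling the inequality; and (c) verifying the precise numerology of $\beta_1,\beta_2,\eta_1,\eta_2$, which should fall out of optimizing $\eta \Delta S - \Delta B$ against the $4^{O(w+h+k)}$-type configuration counts, exactly as in the proof of \cref{thm:singlelongbridge}.
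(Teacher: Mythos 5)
Your high-level framing — Peierls-type arguments running through the layer-sequence machinery, paying for the combinatorial cost against the Hamiltonian gap exactly as in \cref{thm:singlelongbridge} — is the same as the paper's, which factors the argument through a general \cref{lem:unlikelyconfigurations} plus two transformation lemmas. But the specific transformation you propose for part (i) has a quantitative gap that breaks the argument.

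For part (i), relocating a thin slab of particles into a single column reaching $\Lambda_h$ cannot produce the required $\Theta(h)$ decrease in the Hamiltonian. A single column spanning layers $D$ through $h$ costs $\Theta(h-D)$ in boundary length. Its scent contribution is $\sum_{k=D}^{h}\Sh(k)\leq \varphi$, which under the normalization you invoke is a \emph{constant}, not $\Theta(h)$. Even after subtracting the scent the slab had at its original (lower) layers, the net scent gain is bounded by $O(\varphi)$, so for constant $\eta$ and $\varphi$ you get $H(\tau)-H(\sigma)\geq \Theta(h) - O(\eta\varphi)>0$, i.e., the map goes to a \emph{lower}-weight configuration. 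The paper's transformation (Lemmas~\ref{lem:bridgetransformscentintensity} and~\ref{lem:bridgetransformboundarylength}) avoids this by reshaping essentially all $\floor{\rho h^2}$ particles of $\sigma$: the layer sequence is decomposed into full columns, bottom-supported columns, and the rest, then right-justified/collapsed so that on the order of $\rho h$ full columns reach $\Lambda_h$. The resulting scent gain is $\nin\cdot \frac{\varphi}{h}\bigl(1-\frac{D}{h}\bigr) = \Theta(\varphi h)$, which is what is needed to dominate the $\Theta(h)$ boundary cost once $\eta>\eta_1$, and the $\rho>1/2$, $\rho<\alpha-2$ constraints enter exactly to make this reshaping possible and profitable. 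In short, the scent accounting must be per-particle over $\Theta(h^2)$ particles, not per-column over one column.

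For part (ii), you propose excising a bridge component and relocating its particles to low empty sites, which introduces the inverse-map branching and canonical-choice issues you yourself flag. The paper sidesteps all of this by comparing $H(\sigma)$ directly to $H(\emptyset)=0$: any $\sigma\notin\NB$ has $B(\sigma)\geq 2\extminus + 4h$ from the layer-sequence boundary bound, while its total scent is bounded above by $\varphi\bigl(h(1+\rho)+\extminus+O(1)\bigr)$, so for $\eta<\eta_2(\beta)$ the boundary cost wins and $H(\sigma)\geq\epsilon_2 h + \delta_2 B(\sigma) + O_h(1)$. This is both simpler and avoids any need to pick or encode a ``bridge.'' You could likely make your bridge-removal version work, but you would be reproving a weaker form of what the empty-configuration comparison gives for free.

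One smaller point: the role of $\Delta(\Delta\Sh)\geq 0$ in the paper is not to keep the per-layer scent change large near $\Lambda_h$ as you suggest, but to justify the convex-envelope approximation (\cref{lem:convexlemma}) of the discrete partial sums $\sum\Sh(k)$ by an integral of a convex $\Scont$, which is what makes the averaging estimates in \cref{lem:bridgetransformscentintensity} go through.
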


Since $\eta_1$ and $\eta_2$ are proportional to $1/\varphi$, an immediate corollary of Theorem~\ref{thm:zeroorone-full} is that, if the scent is especially strong or especially weak, in the sense that the asymptotic growth of the column scent as $h \to \infty$ satisfies $\varphi = \omega_h (1)$ or $\varphi = o_h (1)$, then bridge formation is either likely or unlikely, independently of $\eta$.

\begin{corollary}
\label{cor:scent}
Suppose that $1/2 < \rho < \alpha - 2$ and $\Sh$ is \nondec{}.
\begin{enumerate}[(i)]
     \item {\sf{(Strong Scent)}} If $\beta > \beta_1$ and $\varphi = \omega_h(1)$, then there exist positive numbers $c_3 = c_3(\beta)$ and $h_3 = h_3(\beta)$ such that $\pi (\NB) \leq e^{-c_3 h}$, for all $h \geq h_3$.  
\vskip.05in
\item  {\sf{(Weak Scent)}} If $\beta > \beta_2$ and $\varphi = o_h(1)$, then there exist positive numbers $c_4 = c_4(\beta)$ and $h_4 = h_4(\beta)$ such that $\pi (\NB) \geq 1 - e^{-c_4 h}$, for all $h \geq h_4$.
\end{enumerate}
\end{corollary}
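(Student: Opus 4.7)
The plan is to derive Corollary~\ref{cor:scent} directly from Theorem~\ref{thm:zeroorone-full} by examining the asymptotic behavior of the thresholds $\eta_1$ and $\eta_2$ as the column scent $\varphi$ grows or vanishes. Both thresholds are proportional to $1/\varphi$, so in the regimes $\varphi = \omega_h(1)$ and $\varphi = o_h(1)$ the $\eta$-conditions of the theorem become vacuous for large $h$, which is why $\eta$ does not appear in the statement of the corollary.

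For part (i), I would fix any $\eta > 0$. Since $\eta_1 = \frac{4}{\varphi}(1 + 1/\rho)$ and $\varphi = \omega_h(1)$, the threshold $\eta_1$ tends to $0$ as $h \to \infty$. Hence there is some $h^{\star}$, depending on $\eta$ and $\rho$, with $\eta > \eta_1$ for every $h \geq h^{\star}$. For all such $h$, part (i) of Theorem~\ref{thm:zeroorone-full} applies and gives $\pi(\NB) \leq e^{-c_1 h}$. I would then take $c_3, h_3$ to be the corresponding constants obtained by fixing any convenient choice of $\eta$ (so the decay rate $c_3$ depends only on $\beta$, while $h_3$ absorbs $h^{\star}$ and $h_1$).

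For part (ii), I would again fix any $\eta > 0$. The assumption $\beta > \beta_2 = (1 + \alpha/2)\log 2$ is exactly what is needed to ensure that both quantities $2(1 - \log 2/\beta)$ and $\frac{4}{1+\rho}\bigl(1 - (1 + \alpha/2)\log 2/\beta\bigr)$ inside the definition of $\eta_2$ are strictly positive. Consequently, because $\varphi = o_h(1)$, the minimum defining $\eta_2$ is a positive constant in $\beta, \alpha, \rho$ divided by a vanishing $\varphi$, so $\eta_2 \to \infty$ as $h \to \infty$. Hence $\eta < \eta_2$ for all sufficiently large $h$, and part (ii) of Theorem~\ref{thm:zeroorone-full} delivers the required bound on $\pi(\NB)$.

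The corollary is essentially a limit computation on the thresholds $\eta_1, \eta_2$, so there is no genuine obstacle here beyond unpacking their formulas and checking that the positivity conditions on $\beta$ are exactly what make the relevant quantities have the advertised sign. The substantive work lies entirely in Theorem~\ref{thm:zeroorone-full} itself, where the delicate balance between the scent and boundary-length components of the Hamiltonian is established via the layer-sequence machinery.
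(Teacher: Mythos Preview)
Your proposal is correct and matches the paper's own reasoning: the paper treats the corollary as immediate from Theorem~\ref{thm:zeroorone-full}, noting exactly as you do that $\eta_1$ and $\eta_2$ are proportional to $1/\varphi$, so the $\eta$-conditions become vacuous when $\varphi = \omega_h(1)$ or $\varphi = o_h(1)$. Your observation that $\beta > \beta_2$ is precisely what makes the minimum inside $\eta_2$ strictly positive is the only nontrivial check, and you handle it correctly.
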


Note Corollary~\ref{cor:scent}(ii) holds without the assumption that $\Sh$ is nondecelerating. In fact, it holds when $\Sh$ is nonnegative because, when $\varphi = o_h (1)$, the contribution of boundary length to the Hamiltonian dominates the scent component, making fine properties of $\Sh$ irrelevant.

The proof of Theorem~\ref{thm:zeroorone-full} relies on three lemmas. Informally, the first says that a subset of configurations $\OmegaBad$ has exponentially small probability in $h$ if every $\sigma \in \OmegaBad$ can be mapped to a configuration with a Hamiltonian that is smaller by roughly $h + B(\sigma)$. The proof, which can be found in the full version of the paper \cite{fullpaper}, is similar to that of Theorem~\ref{thm:singlelongbridge}.

\begin{lemma}\label{lem:unlikelyconfigurations}
Fix an aspect ratio $\alpha$, as well as $\delta$ and $\epsilon$, all positive numbers. Suppose that $\OmegaBad$ is a subset of configurations such that, for every $\sigma \in \OmegaBad$, there exists $\tau_\sigma \in \Omega$ satisfying 
\[
H(\sigma) - H(\tau_\sigma) \geq \epsilon h + \delta B(\sigma) + O_h(1),
\]
where the function implicit in $O_h(1)$ does not depend on $\sigma$. Then, for every 
\[
\beta > \beta_0 (\delta, \eps) := \max\left\{\frac{\log 2}{\delta}, \frac{2\alpha \log 2}{\epsilon}\right\},
\]
there exist $c = c(\beta) > 0$ and $h_0 = h_0 (\beta) > 0$ such that $\pi (\OmegaBad) \leq e^{-ch}$ for all $h \geq h_0$.
\end{lemma}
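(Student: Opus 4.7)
The plan is to mimic the Peierls-style argument used in the proof of Theorem~\ref{thm:singlelongbridge}. The hypothesis supplies a map $\sigma \mapsto \tau_\sigma$ from $\OmegaBad$ into $\Omega$ along which the Hamiltonian drops by at least $\epsilon h + \delta B(\sigma) + O_h(1)$; the goal is to convert this, together with the boundary-length enumeration of Lemma~\ref{lem:bdylength}, into exponential decay of $\pi(\OmegaBad)$.

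First I would translate the hypothesis into a pointwise bound on $\pi(\sigma)$. Since $\pi(\tau_\sigma) \leq 1$, the Gibbs form of $\pi$ gives
\[
\pi(\sigma) \;\leq\; \frac{\pi(\sigma)}{\pi(\tau_\sigma)} \;=\; e^{-\beta(H(\sigma)-H(\tau_\sigma))} \;\leq\; C(\beta) \cdot e^{-\beta(\epsilon h + \delta B(\sigma))},
\]
where $C(\beta)$ absorbs the $h$-bounded, $\sigma$-independent factor $e^{-\beta O_h(1)}$.

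Next I would partition $\OmegaBad$ by boundary length: $\OmegaBad = \bigcup_{k \geq 0} \OmegaBad_k$ with $\OmegaBad_k = \{ \sigma \in \OmegaBad : B(\sigma) = k \}$. Lemma~\ref{lem:bdylength} gives $|\OmegaBad_k| \leq 2^{k+2w}$, so summing over $k$ and factoring out $h$-dependent terms yields
\[
\pi(\OmegaBad) \;\leq\; C(\beta) \cdot 2^{2w} e^{-\beta\epsilon h} \sum_{k \geq 0} \bigl(2 e^{-\beta\delta}\bigr)^k.
\]

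The final step is a balance check against the two thresholds in $\beta_0(\delta,\epsilon)$. The geometric series converges precisely when $\beta \delta > \log 2$, i.e.\ $\beta > (\log 2)/\delta$; and, using $w = \alpha h$, the factor $2^{2w} = e^{2\alpha h \log 2}$ is dominated by $e^{-\beta\epsilon h}$ precisely when $\beta > 2\alpha \log 2 / \epsilon$. Both conditions are captured by $\beta > \beta_0(\delta,\epsilon)$, and for any such $\beta$ the bound assumes the form $\pi(\OmegaBad) \leq C'(\beta) \, e^{-c(\beta) h}$ with $c(\beta) = \beta\epsilon - 2\alpha \log 2 > 0$. Choosing $h_0(\beta)$ large enough to absorb $C'(\beta)$ into the exponent finishes the proof.

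The main obstacle is minor and essentially bookkeeping: one must check that the ``$O_h(1)$'' contribution from the hypothesis is genuinely $\sigma$-uniform (so that it factors cleanly out of the sum over $\OmegaBad_k$), and that the two thresholds on $\beta$ arise exactly from the two convergence requirements above. Neither presents a conceptual difficulty beyond the template already established by the proof of Theorem~\ref{thm:singlelongbridge}.
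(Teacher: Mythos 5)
Your proof is correct and follows essentially the same Peierls-style argument the paper uses: bound $\pi(\sigma)$ by $e^{-\beta(H(\sigma)-H(\tau_\sigma))}$, partition $\OmegaBad$ by boundary length, invoke Lemma~\ref{lem:bdylength} to count configurations at each boundary length, and sum the resulting geometric series, with the two thresholds in $\beta_0$ ensuring respectively that the geometric series converges and that $e^{-\beta\epsilon h}$ beats $2^{2w}=2^{2\alpha h}$. The only cosmetic difference is that the paper splits the sum over $\ell$ at $\ceil{1/\delta}$ (which is not actually needed), while you sum the geometric series directly and are slightly more explicit about absorbing the $h$-uniform $O_h(1)$ factor and the prefactor $C'(\beta)$ into the choice of $h_0$.
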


The next lemma  states that if a configuration does not form a bridge, then it is possible to identify a configuration with a Hamiltonian that is strictly smaller as a function of $h$ and the boundary length of the original configuration.

\begin{lemma}\label{lem:thm2case1}
If $\beta > \beta_1$ and $\eta > \eta_1$, then there exist positive values $\delta_1=\delta_1(\rho,\alpha)$ and $\eps_1=\eps_1(\rho,\alpha)$ such that $\beta_1 \geq \beta_0 (\delta_1,\eps_1)$ and, for every $\sigma \in \NB$, there exists $\tau_\sigma \in \Omega$ satisfying
\[
H(\sigma) - H(\tau_\sigma) \geq \eps_1 h + \delta_1 B(\sigma) + O_h (1).
\]
\end{lemma}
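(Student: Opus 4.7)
The plan is to construct an explicit $\tau_\sigma \in \Omega$ for each $\sigma \in \NB$, splitting into two regimes based on $B(\sigma)$.

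In the \emph{high-perimeter regime} where $B(\sigma) \geq C_0 h$ for a constant $C_0 = C_0(\rho,\alpha)$ to be chosen, the plan is to invoke Lemma~\ref{lem:lowperimeter} to produce $\tau_\sigma$ with $B(\tau_\sigma) \leq 6w + 4h$ and $S(\tau_\sigma) \geq S(\sigma)$. Since the scent change contributes nonnegatively to $H(\sigma) - H(\tau_\sigma)$,
\[
H(\sigma) - H(\tau_\sigma) \geq B(\sigma) - B(\tau_\sigma) \geq B(\sigma) - 6w - 4h,
\]
so for $C_0$ large enough (in terms of $\alpha$, $\delta_1$, and $\eps_1$), this dominates $\eps_1 h + \delta_1 B(\sigma) + O_h(1)$.

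In the \emph{low-perimeter regime} where $B(\sigma) < C_0 h$, the plan is to take $\tau_\sigma$ to be the canonical rectangular block of width $a = \lfloor n/h \rfloor$ filling the leftmost $a$ columns from $\Lambda_1$ to $\Lambda_h$. Since $\rho < \alpha - 2$ gives $a < w$, this block is non-wrapping and in $\Omega$, with $B(\tau_\sigma) = O(h)$ (only vertical sides contribute, as $\Lambda_0, \Lambda_{h+1} \notin V(\Lambda)$) and $S(\tau_\sigma) = a\varphi = \rho h \varphi + O(\varphi)$. Using the layer-sequence machinery of Section~\ref{sec:layerseq} together with the geometric constraints on $\sigma$, I will establish two key bounds: a \emph{scent estimate}, $S(\sigma) \leq \rho h (\varphi - \Sh(h)) + O_h(\varphi)$, which combined with the chord bound $\Sh(h) \geq 2\varphi/h$ (which follows from the \nondec{} hypothesis by convexity with $\Sh(1) = 0$) yields $S(\tau_\sigma) - S(\sigma) \geq 2\rho \varphi - O_h(\varphi)$; and a \emph{boundary estimate}, $B(\sigma) - B(\tau_\sigma) \geq c \cdot h - O_h(1)$ for an explicit constant $c = c(\rho) > 0$ (of order $2\rho$), where the gain comes from $\sigma$'s unavoidable top-layer exposure that $\tau_\sigma$ avoids by being topped at $\Lambda_h$. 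Combining gives $H(\sigma) - H(\tau_\sigma) \geq 2\eta\rho\varphi + c \cdot h - O_h(1)$, and the hypothesis $\eta > \eta_1 = 4(1 + 1/\rho)/\varphi$ makes the scent contribution at least $8(1+\rho)$, a positive constant in $h$.

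To close the proof, I choose $\delta_1 = (2\rho-1)/(2\rho + 3 + 4\alpha)$ and $\eps_1 = 2\alpha(2\rho-1)/(2\rho + 3 + 4\alpha)$, the values that saturate $\beta_0(\delta_1, \eps_1) = \beta_1$ by direct substitution into $\beta_1 = (2\rho + 3 + 4\alpha)\log 2/(2\rho-1)$, and then pick $C_0 = (\eps_1 + 6\alpha + 4)/(1 - \delta_1)$. An algebraic check using $\rho > 1/2$ and $\rho < \alpha - 2$ verifies both $\eps_1 + \delta_1 C_0 < c$ (so that in the low-perimeter regime the linear-in-$h$ slack absorbs both the $\eps_1 h$ and $\delta_1 B(\sigma) \leq \delta_1 C_0 h$ terms up to $O_h(1)$) and the closure of the high-perimeter regime.

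The main obstacle will be the boundary estimate in the low-perimeter regime for \emph{arbitrary} $\sigma$—not merely block-shaped configurations. This step requires the layer-sequence machinery of Section~\ref{sec:layerseq} to translate between the geometric boundary $B(\sigma)$ and the layer sequence of $\sigma$, and then to exploit the constraints $n_h = 0$, $|\sigma| \leq n = \lfloor \rho h^2 \rfloor$, and the connectivity of $\sigma \cup V(\Lambda_0)$ in $\overline{\Lambda}$. The density hypothesis $\rho > 1/2$ is precisely what keeps the constant $c$ positive.
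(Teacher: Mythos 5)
Your strategy in the \emph{low-perimeter regime} has a genuine gap: the scent estimate $S(\sigma) \leq \rho h (\varphi - \Sh(h)) + O_h(\varphi)$ is false, and as a result mapping every low-perimeter $\sigma$ to the \emph{same} canonical rectangular block cannot work. Consider a ``top-heavy'' configuration $\sigma \in \NB$ consisting of a thin spine (one occupied site per layer) from $\Lambda_1$ up to some layer, topped by a nearly full slab of width $w-1$ occupying the layers from about $h - \rho h/\alpha$ through $h-1$. This is simply connected, stops short of $\Lambda_h$, uses $\approx \rho h^2$ particles, and has $B(\sigma) = \Theta(h)$, so it falls in your low-perimeter regime. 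With the linear nondecelerating choice $\Sh(k) = c(k-1)$, one computes $S(\sigma) \approx c\rho h^3 \bigl(1 - \tfrac{\rho}{2\alpha}\bigr)$, whereas your block has $S(\tau_\sigma) = \lfloor n/h\rfloor\,\varphi \approx c\rho h^3/2$. Since $\rho < \alpha$, the difference $S(\sigma) - S(\tau_\sigma) = \Theta(h\varphi)$ is \emph{positive}; your proposed move loses scent by an amount of order $h\varphi$, far exceeding the $O_h(\varphi)$ slack you allow. The hypothesis $\eta > \eta_1$ gives only a lower bound on $\eta\varphi$, so $\eta$ can be arbitrarily large, and then $\eta\bigl(S(\sigma) - S(\tau_\sigma)\bigr)$ overwhelms the $\Theta(h)$ boundary gain, flipping the sign of $H(\sigma) - H(\tau_\sigma)$ entirely.

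The underlying issue is that a correct $\tau_\sigma$ must depend on $\sigma$ in a way that is guaranteed not to lose scent. This is exactly what the paper's layer-sequence transformation does: it splits the columns of the right-justified representation into full, bottom-supported, and non-bottom-supported parts. Only the bottom-supported mass is ``compactified'' into full columns; the non-bottom-supported columns (which carry the excess scent in the top-heavy example above) are kept intact and merely shifted upward so the configuration reaches $\Lambda_h$ — a move that can only increase their scent. This $\sigma$-dependent target is what makes $\Shat(\layerseq^{\spost}) \geq \Shat(\layerseq)$ unconditional, allowing the Hamiltonian gap to be driven by the per-particle scent gain together with the boundary accounting, rather than requiring a false a priori upper bound on $S(\sigma)$. (Your high-perimeter argument via Lemma~\ref{lem:lowperimeter}, and your algebraic choice of $\delta_1$, $\eps_1$ saturating $\beta_0(\delta_1,\eps_1) = \beta_1$, are sound in isolation, but they cannot rescue the low-perimeter case, which is where the real content of the lemma lies.)
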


The proof involves a multi-stage transformation of the original configuration's layer sequence (Appendix~\ref{app:proof}). The same is true of the next result, which instead compares configurations that reach $\Lambda_h$ to the empty configuration $\emptyset$, which has $H(\emptyset) = 0$.

\begin{lemma}\label{lem:thm2case2}
If $\beta > \beta_2$ and $\eta < \eta_2 (\beta)$, then there exist positive values $\delta_2=\delta_2(\eta,\varphi,\rho,\alpha)$ and $\eps_2=\eps_2(\eta,\varphi,\rho,\alpha)$ such that $\beta_2 \geq \beta_0 (\delta_2,\eps_2)$ and every $\sigma \notin \NB$ satisfies
\[
H(\sigma) - H(\emptyset) \geq \eps_2 h + \delta_2 B(\sigma) + O_h (1).
\]
\end{lemma}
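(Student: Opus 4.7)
The plan is to apply Lemma~\ref{lem:unlikelyconfigurations} with $\OmegaBad = \Omega \setminus \NB$ and the trivial choice $\tau_\sigma = \emptyset$ for every $\sigma$, so $H(\tau_\sigma) = 0$ and the task reduces to proving $H(\sigma) \geq \eps_2 h + \delta_2 B(\sigma) + O_h(1)$ directly for every $\sigma$ that reaches $\Lambda_h$. Since $\sigma \notin \NB$ forces $\sigma \cup V(\Lambda_0)$ to connect $\Lambda_0$ to a vertex of $\Lambda_h$, it must hit every intermediate layer, so the layer sequence $\layerseq = (N_1, \ldots, N_h)$ of $\sigma$ satisfies $N_k \geq 1$ for all $k \in [h]$. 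Because $S(\sigma) = \sum_k N_k \Sh(k)$ is an exact function of $\layerseq$ while $B(\sigma)$ depends on the finer geometry, the argument proceeds by controlling these two quantities through $\layerseq$.

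For the boundary, I would decompose $B(\sigma)$ into within-layer and between-layer parts. Within layer $k$, the $N_k$ particles on the cycle of length $w$ form at least one arc whenever $0 < N_k < w$, each arc contributing at least $2$ to $B$, giving a within-layer contribution of at least $2(h-J)$ with $J = |\{k : N_k = w\}|$. A cut-counting argument on the $2w$ cross-edges between consecutive layers yields a between-layer contribution of at least $2\sum_{k=1}^{h-1}|N_{k+1}-N_k|$. The particle budget $|\sigma| \leq n = \lfloor \rho h^2 \rfloor$ together with $w = \alpha h$ forces $J \leq \rho h/\alpha$, so $B(\sigma) \geq 2h(1 - \rho/\alpha)$, which is strictly positive because $\rho < \alpha - 2 < \alpha$. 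For the scent, I would follow the same multi-stage layer-sequence strategy used for Lemma~\ref{lem:thm2case1}: first flatten non-monotone stretches of $\layerseq$ using convexity of $\Sh$, then consolidate mass into a top-heavy block approximating a rectangle of full layers, bookkeeping $H$ through each stage. The nondecelerating hypothesis together with the normalization $\sum_k \Sh(k) = \varphi$ caps $S(\sigma)$ by roughly the number of full-column equivalents times $\varphi$.

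Plugging in $\eta < \eta_2(\beta)$, which enforces $\eta\varphi < \min\{2(1-\log 2/\beta),\, 4(1 - (1+\alpha/2)\log 2/\beta)/(1+\rho)\}$, produces the needed strict gap between the per-layer boundary cost of $2$ and the per-layer scent benefit $\eta\varphi$. Combining with the boundary bound yields $H(\sigma) \geq (2 - \eta\varphi)(h - J) + O_h(1) \geq ch + O_h(1)$ for a positive $c = c(\eta,\varphi,\rho,\alpha)$. Splitting off a $\delta_2$ fraction of the boundary lower bound then casts the inequality into the required form; the explicit choices of $\delta_2, \eps_2$ must satisfy $\delta_2 \geq 1/(1+\alpha/2)$ and $\eps_2 \geq 4\alpha/(2+\alpha)$ for $\beta_2$ to meet the threshold $\beta_0(\delta_2,\eps_2)$ of Lemma~\ref{lem:unlikelyconfigurations}, and these thresholds correspond exactly to the two quantities inside the minimum defining $\eta_2(\beta)$. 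I expect the hardest step to be the scent upper bound: scent and boundary pull in opposite directions under layer-sequence rearrangements, so each smoothing must be chosen so that the change in $H$ is tracked favorably, with convexity of $\Sh$ as the key algebraic lever and the margin $\rho < \alpha - 2$ as the structural one that leaves enough room above the particle mass for the boundary penalty to dominate.
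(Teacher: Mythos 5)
Your high-level strategy is the right one and matches the paper: compare $\sigma$ directly to the empty configuration (so $\tau_\sigma = \emptyset$ and $H(\emptyset) = 0$), bound $B(\sigma)$ from below and $S(\sigma)$ from above via the layer sequence, and then split off a $\delta$ fraction of $B(\sigma)$ to put the inequality in the form required by Lemma~\ref{lem:unlikelyconfigurations}. But the quantitative core of the argument is missing, and the formula you land on would not close.

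The crucial ingredient in the paper's proof is a coupling between the scent upper bound and the boundary lower bound through the quantity $\extminus = \sum_k \max\{n_{k+1}-n_k-1,0\}$. After applying the $\layerseq^{\spre}\to\layerseq^{\spost}$ transformation from Lemma~\ref{lem:thm2case1} (used here only to obtain a monotone, sorted layer sequence), the paper argues: (a) the number of non-full, non-empty columns of the transformed $\tau$ is at most $h + \extminus^{\spost}$, because the sorted column heights take at most $h$ distinct values and every repetition feeds $\extminus$; (b) the number of full columns is at most $n/h = \rho h$; (c) every column contributes at most $\varphi$ to the scent. This yields $S(\sigma) \leq S(\tau) \leq \varphi\left(h(1+\rho) + \extminus + 1\right)$, which is the load-bearing inequality. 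Meanwhile Lemma~\ref{lem:boundarylengthlowerbound} gives $B(\sigma) \geq 4h + 2\extminus$. Feeding both into $H = B - \eta S$ produces $H(\sigma) \geq \left(4 - \eta\varphi(1+\rho)\right)h + (2-\eta\varphi)\extminus - \eta\varphi$, and this is exactly where the two thresholds inside $\eta_2(\beta)$ come from: $\eta\varphi < 4/(1+\rho)$ controls the $h$-term and $\eta\varphi < 2$ controls the $\extminus$-term. The remaining slack (the $\log 2/\beta$ corrections) is absorbed when splitting off $\delta B(\sigma)$.

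Your version of the scent bound, ``$S(\sigma)$ is capped by roughly the number of full-column equivalents times $\varphi$,'' i.e.\ roughly $\rho h \varphi$, is false: non-full columns also contribute up to $\varphi$ each, so the naive column-count bound is $S(\sigma) \leq w\varphi = \alpha h\varphi$, which replaces $\rho$ by $\alpha$ in the final condition and gives a strictly weaker, incorrect $\eta_2$. The paper avoids this only because it ties the non-full column count to $\extminus$, and $\extminus$ is simultaneously a lower bound for the boundary cost. Your proposal never identifies this link, and your stated conclusion $H(\sigma) \geq (2-\eta\varphi)(h-J) + O_h(1)$ would require $S(\sigma) \leq \varphi(h - J) + O_h(1)$ with $J$ the number of \emph{full layers}, which has no relation to the number of columns and is simply not true. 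Separately, your within-layer boundary bound of $2$ per non-full layer is a factor of two short of the $\Bhat_h \geq 4h + 2\extminus$ given by Lemma~\ref{lem:boundarylengthlowerbound}; the extra factor of two per layer (coming from the vertical boundary edges counted in $D_M$) is what produces the constant $4$ in $4-\eta\varphi(1+\rho)$, and without it the argument cannot recover the paper's $\eta_2$. Finally, your use of $J \leq \rho h/\alpha$ and the requirement $\rho < \alpha - 2$ are not actually needed here; the relevant constraint in this lemma is only $|\sigma| \leq \rho h^2$, which bounds the number of full \emph{columns}, not full layers.
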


Theorem~\ref{thm:zeroorone-full} follows directly from the preceding lemmas. 

\begin{proof}[Proof of Theorem~\ref{thm:zeroorone-full}]
Parts (i) and (ii) follow from two applications of Lemma~\ref{lem:unlikelyconfigurations}, one to $\OmegaBad = \NB$ and another to $\OmegaBad = \NB^c$, which are justified by Lemmas~\ref{lem:thm2case1}~and~\ref{lem:thm2case2}.
\end{proof}

\subsection{Layer sequences}\label{sec:layerseq}

The proofs of our main results rely on mappings of configurations to ones with more desirable properties (Lemmas~\ref{lem:lowperimeter},~\ref{lem:epsboundary},~\ref{lem:thm2case1},~and~\ref{lem:thm2case2}). For example, Lemma~\ref{lem:lowperimeter} states that every configuration can be mapped to a second configuration with at least as favorable boundary length and scent components, and which has a boundary length of $O(w+h)$. The purpose of this section is to introduce layer sequences, the key technical idea underlying these lemmas, and to demonstrate their use by proving Lemma~\ref{lem:lowperimeter}.

The layer sequence $\layerseq = (n_k)_{k = 1}^h$ of a (possibly non-valid) configuration $\sigma \subseteq V(\Lambda)$ counts the number $n_k = n_k (\sigma)$ of its elements in each layer of $\Lambda$: 
\[
n_k(\sigma) = |\sigma \cap V(\Lambda_k)|.
\]
The statements that we will make in this section will apply not only to configurations in $\Omega$, but a broader class $\OmegaExt$ of configurations the layer sequences of which have no fully occupied layer after one that is not fully occupied (i.e., no $k$ such that $n_k(\sigma)=w$ and $n_{k-1}(\sigma)<w$), and no fully unoccupied layer before one that is not fully unoccupied (i.e., no $k$ such that $n_k(\sigma)=0$ and $n_{k+1}(\sigma)>0$):
\begin{align*}
\OmegaExt = \big\{ \sigma \subseteq V(\Lambda) :~&\left(n_{k+1}(\sigma) =w \implies n_k(\sigma) = w\right) \text{ and }\\
&\left(n_k(\sigma)=0 \implies n_{k+1}(\sigma)=0\right) \text{ for all } k \in [h-1] \big\}.
\end{align*}
Configurations in $\Omega$ must have this property as they are simply connected, hence $\Omega \subseteq \OmegaExt$.
Referencing this property, for $0\leq \rbot\leq \rtop \leq h$, we can denote by $\LayerSet^{\rbot,\rtop}$ the set of layer sequences that
are fully occupied from layers $1$ to $\rbot$, partially occupied from layers $\rbot+1$ to $\rtop$, and fully unoccupied for the remaining layers.

Even though the layer sequence of a configuration discards information about the configuration's boundary, we can still use it to obtain bounds on the boundary length (our definition of boundary length extends to configurations in $\OmegaExt$). This is the content of the next two results. To state them, we denote by $\sigma_M$ the part of a configuration $\sigma \in \OmegaExt$ in layers $1$ through $M$:
\[
\sigma_M = \cup_{k=1}^M (\sigma \cap V(\Lambda_k)).
\]

\newcommand{\Mmin}{R_\text{min}}
\begin{lemma}[Boundary length increment] 
\label{lem:boundarylengthincrement}
Let $0 \leq \rbot \leq \rtop \leq h$ and let $\sigma \in \OmegaExt$ have layer sequence~$\layerseq \in \LayerSet^{\rbot,\rtop}$.
For integers $M$ where $\max\{\rbot,1\}\leq M \leq \min\{\rtop-1,h-2\}$,
the truncation $\sigma_{M+1}$ has boundary length satisfying $B(\sigma_{M+1}) \geq B(\sigma_M) + D_M$, where 
\[
D_M = 2 + 2\max\left\{n_{M+1}-n_M+1,0\right\} + 2\max\left\{n_{M+1}-n_M-1,0\right\}.
\]
If $M = h-1$ instead, then we have $B(\sigma_{M+1}) = B(\sigma) \geq B(\sigma_M) + D_M - 2n_h$. Furthermore, the bound is tight in the sense that, for any $\layerseq \in \LayerSet^{\rbot,\rtop}$, there exists $\tau \in \Omega$ (not $\OmegaExt$) with layer sequence $\layerseq$ which satisfies $B(\tau_{M+1}) \leq B(\tau_M) + D_M$ for every such $M$.
\end{lemma}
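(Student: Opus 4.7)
The plan is to derive an exact formula for $B(\sigma_{M+1}) - B(\sigma_M)$ by a direct edge count, establish the claimed inequality through a case analysis on the sign of $d := n_{M+1} - n_M$, and finally exhibit a concrete arc-based construction that saturates the bound. For the first step, let $h_{M+1}$ denote the number of horizontal edges within layer $M+1$ with both endpoints in $\sigma$, and $e_M$ the number of cross-layer edges with one endpoint in $\sigma \cap V(\Lambda_M)$ and the other in $\sigma \cap V(\Lambda_{M+1})$. Partitioning the edges whose contribution to the boundary changes between $\sigma_M$ and $\sigma_{M+1}$ by type---horizontal edges in layer $M+1$, vertical edges between layers $M$ and $M+1$, and vertical edges between layers $M+1$ and $M+2$---I will obtain
\[
B(\sigma_{M+1}) - B(\sigma_M) = 6 n_{M+1} - 2 h_{M+1} - 2 e_M
\]
when $M+1 < h$, and $4 n_h - 2 h_h - 2 e_{h-1}$ when $M+1 = h$; the discrepancy of $2n_h$ is exactly accounted for by the fact that edges between layer $h$ and $\Lambda_{h+1}$ are not in $E(\Lambda)$ and hence do not contribute to $\Bd$. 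In both cases the stated lower bound is equivalent to the single upper bound $h_{M+1} + e_M \leq 3 n_{M+1} - D_M/2$.

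To establish that bound, note that the hypothesis $M+1 \leq \rtop$ forces layer $M+1$ to be partial, so $1 \leq n_{M+1} \leq w-1$, which gives $h_{M+1} \leq n_{M+1} - 1$ with equality iff the occupied layer-$(M+1)$ sites form a single contiguous arc. Moreover, each layer-$M$ (resp.\ layer-$(M+1)$) site has exactly two cross-layer neighbors, so $e_M \leq 2\min\{n_M, n_{M+1}\}$. When $d \geq 1$, combining these via $e_M \leq 2 n_M$ yields $h_{M+1} + e_M \leq (n_{M+1}-1) + 2 n_M$, which equals $3 n_{M+1} - D_M/2$. When $d \leq -1$, using $e_M \leq 2 n_{M+1}$ yields $3 n_{M+1} - 1 = 3 n_{M+1} - D_M/2$. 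The delicate case is $d = 0$: the naive combination of the individual bounds gives only $3 n_{M+1} - 1$, which falls short by one. Here I will argue that the two bounds cannot be tight simultaneously---if $h_{M+1} = n_{M+1} - 1$ then the occupied layer-$(M+1)$ sites form one arc and the lower neighborhood of this arc in layer $M$ consists of $n_{M+1}+1$ distinct sites, exceeding the $n_M = n_{M+1}$ slots available, so at least one lower neighbor must be unoccupied, forcing $e_M \leq 2 n_{M+1} - 1$ and hence $h_{M+1} + e_M \leq 3 n_{M+1} - 2$.

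For the tightness claim I will construct $\tau$ by placing the occupied sites of each layer $k$ as a contiguous arc $[l_k, l_k + n_k - 1]$ (indices taken mod $w$), with $l_1 = 1$ and the recursion $l_{k+1} = l_k - 1$ if $n_{k+1} > n_k$ and $l_{k+1} = l_k$ otherwise. Reading this construction against the case analysis layer by layer, a direct computation shows the boundary increment attains $D_M$ at every relevant $M$. To certify $\tau \in \Omega$ rather than only $\OmegaExt$, I will verify simple connectivity of $\tau \cup V(\Lambda_0)$ in $\overline{\Lambda}$: the shift rule ensures consecutive arcs share cross-layer edges (so the occupied region is connected and anchored at $V(\Lambda_0)$), and the complementary arcs share cross-layer edges on the opposite side (so the complement is connected to $V(\Lambda_{h+1})$, precluding holes). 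The principal obstacle is the $d = 0$ case of the upper bound, where the individually sharp bounds on $h_{M+1}$ and $e_M$ are jointly off by one; the arc-versus-occupancy counting argument must close this gap by exactly one to match the formula for $D_M$.
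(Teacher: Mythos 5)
Your proof is correct, and it reaches the same exact identity for the boundary increment as the paper's proof (your $6n_{M+1}-2h_{M+1}-2e_M$ is literally the paper's $4n_{M+1}+2C_{M+1}-2E_{M+1}$, since $h_{M+1}=n_{M+1}-C_{M+1}$ and $e_M=E_{M+1}$), so the starting points coincide. Where you diverge is in how the upper bound on $h_{M+1}+e_M$ is established. The paper bounds the cross-layer edge count from the layer-$M$ side, via $E_{M+1}\leq\min\{n_M+C_M,n_{M+1}\}+\min\{n_M-C_M,n_{M+1}\}$, and then closes the argument with a single unified inequality ($\max\{x-C,0\}+\max\{x+C,0\}\geq\max\{x-1,0\}+\max\{x+1,0\}$ for $C\geq1$) that absorbs all cases of $d=n_{M+1}-n_M$ at once. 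You instead take the cruder bound $e_M\leq 2\min\{n_M,n_{M+1}\}$ together with $h_{M+1}\leq n_{M+1}-1$ and split on the sign of $d$; the $d\geq1$ and $d\leq-1$ cases close immediately, and the residual off-by-one in the $d=0$ case is repaired by a separate geometric observation dual to the paper's: a single arc of $n_{M+1}$ sites in layer $M+1$ casts $n_{M+1}+1$ distinct lower neighbors into layer $M$, which cannot all be occupied when $n_M=n_{M+1}$, so $e_M\leq 2n_{M+1}-1$. (The case $h_{M+1}\leq n_{M+1}-2$ already gives the needed slack.) What the paper's route buys is compactness: one arc-counting bound on $E_{M+1}$ and one max inequality handle everything. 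What your route buys is transparency in the $d=0$ case: you see exactly why the increment cannot be $2$ when layers $M$ and $M+1$ have the same count, namely the mismatch between the arc's lower-neighborhood size and the available occupancy. Your tightness construction (contiguous arcs with the one-step left-shift rule) is also in the spirit of what the paper sketches; to make it airtight you would need to track the parity-dependent offset of the triangular lattice when writing the shift recursion, and it is worth noting that the claimed $\tau\in\Omega$ requires $\sum_k n_k\leq n$, a constraint neither you nor the paper explicitly reconcile with the quantifier over arbitrary $\layerseq\in\LayerSet^{\rbot,\rtop}$, though it is harmless in the places the lemma is actually used.
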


The proof of Lemma~\ref{lem:boundarylengthincrement}, which is self-contained, appears in Appendix~\ref{app:proof}. Lemma~\ref{lem:boundarylengthincrement} implies a lower bound on a configuration's boundary length in terms of its layer sequence.

\begin{lemma}[Boundary length lower bounds] 
\label{lem:boundarylengthlowerbound}
Suppose that $\layerseq \in \LayerSet^{\rbot,\rtop}$ where $0 \leq \rbot \leq \rtop \leq h$ and let $\Mmin = \max\{\rbot,1\}$. For each $M \in \{\Mmin,\Mmin+1,\ldots,\rtop\}$, we define
\[
\Bhat_M(\layerseq) = \Bhat_{\Mmin}(\layerseq) + \sum_{k=\Mmin}^{M-1} D_k, \quad\,\,
\text{where}
\quad\,\, \Bhat_{\Mmin}(\layerseq) = \begin{cases}2n_1+2 &\text{ if }\rbot=0, \\2\rectwidth &\text{ if }\rbot \geq 1. \end{cases}
\]
Then, for any configuration $\sigma \in \OmegaExt$ with layer sequence $\layerseq$, for any $M \in \{\Mmin,\Mmin+1,\ldots,\rtop\}$, the truncation $\sigma_M$ has boundary length satisfying
\[B(\sigma_M) \geq \begin{cases}\Bhat_M(\layerseq) &\text{if }M<h,\\ \Bhat_M(\layerseq) - 2n_h &\text{if }M=h.\end{cases}\]

Furthermore, this bound is tight in the sense that for any layer sequence $\layerseq \in \LayerSet^{\rbot,\rtop}$, there exists a configuration $\tau \in \Omega$ with layer sequence $\layerseq$ where $B(\tau_{\rbot+1}) = \Bhat_{\rbot+1}(\layerseq)$ and $B(\tau_{M+1}) \leq B(\tau_M) + D_M$ for each $M \in \{\Mmin,\Mmin+1,\ldots,\rtop\}$.
\end{lemma}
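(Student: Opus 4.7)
\medskip

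\noindent\textbf{Proof proposal.} The plan is to prove both bounds by induction on $M$, with the base case at $M = \Mmin$ verified directly and the inductive step fed entirely by Lemma~\ref{lem:boundarylengthincrement}. The tightness claim will be proved by an explicit construction of a ``nested arcs'' configuration $\tau \in \Omega$ realizing the layer sequence $\layerseq$.

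For the base case, I would split on whether $\rbot = 0$ or $\rbot \geq 1$. If $\rbot \geq 1$, then $\Mmin = \rbot$ and $\sigma_\rbot$ consists of the first $\rbot$ layers fully occupied; every vertex in layer $\rbot$ has its two upper neighbors in $V(\Lambda_{\rbot+1})$ unoccupied (after truncation), and all other neighbors within $V(\Lambda)$ are occupied, giving $B(\sigma_\rbot) = 2w = \Bhat_\rbot$ when $\rbot < h$, and $B(\sigma_h) = 0 = 2w - 2n_h$ when $\rbot = h$ (since then $n_h = w$). If $\rbot = 0$, then $\Mmin = 1$ and, because $\layerseq \in \LayerSet^{0,\rtop}$ with $\rtop \geq 1$, we have $0 < n_1 < w$; the $n_1$ occupied sites form some number $k \geq 1$ of maximal arcs on the cyclic layer, contributing $2n_1$ from upper neighbors in $\Lambda_2$ plus $2k \geq 2$ from within-layer empty endpoints, yielding $B(\sigma_1) \geq 2n_1 + 2 = \Bhat_1$.

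For the inductive step, assume $B(\sigma_M) \geq \Bhat_M(\layerseq)$ for some $M \in \{\Mmin, \ldots, \rtop - 1\}$. If additionally $M \leq h - 2$, Lemma~\ref{lem:boundarylengthincrement} yields $B(\sigma_{M+1}) \geq B(\sigma_M) + D_M \geq \Bhat_M + D_M = \Bhat_{M+1}$, closing the induction in the ``interior'' regime $M + 1 < h$. For the edge case $M + 1 = h$ (which can only occur when $\rtop = h$), the corresponding clause of Lemma~\ref{lem:boundarylengthincrement} gives $B(\sigma_h) \geq B(\sigma_{h-1}) + D_{h-1} - 2n_h \geq \Bhat_{h-1} + D_{h-1} - 2n_h = \Bhat_h - 2n_h$, which is exactly the claimed $M = h$ bound.

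For tightness, I would construct $\tau \in \Omega$ layer by layer: fully occupy layers $1$ through $\rbot$, then in each partial layer $k \in \{\rbot + 1, \ldots, \rtop\}$ place the $n_k$ occupied sites in one contiguous arc aligned so that the arc in layer $k+1$ sits strictly above a sub-arc of the arc in layer $k$ (for example, by left-justifying each arc against a fixed column, or by centering each arc within the one below). Simple connectivity of $\tau \cup V(\Lambda_0)$ in $\overline{\Lambda}$ follows because each partial layer's arc is contiguous and attached to the arc (or full layer) directly beneath it, so no holes are created. With this alignment, each truncation $\tau_{M+1}$ adds exactly the minimum possible new boundary contribution predicted by Lemma~\ref{lem:boundarylengthincrement} (i.e.\ $B(\tau_{M+1}) - B(\tau_M) \leq D_M$), and checking the base case $B(\tau_{\rbot+1}) = \Bhat_{\rbot+1}$ is a direct count. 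The main obstacle I expect is the last point: carefully choosing the alignment of consecutive arcs so that simple connectivity is preserved \emph{and} the increment achieves equality for every $M$ simultaneously, particularly when the sequence $(n_k)$ is nonmonotonic, in which case a pivot point (for instance, a fixed column on the cylinder around which all arcs are centered) is needed to ensure each $\tau_{M+1} \setminus \tau_M$ shares at least one vertical edge with $\tau_M$.
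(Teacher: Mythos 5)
Your proof is correct and takes essentially the same approach as the paper: the paper's entire proof of this lemma is the single sentence ``This follows from Lemma~\ref{lem:boundarylengthincrement} by induction on $M$, with $M=\Mmin$ as the base case,'' which is precisely the structure you lay out. The tightness clause is likewise just the ``second statement'' of Lemma~\ref{lem:boundarylengthincrement} together with the base-case count, so you could have cited it outright rather than re-deriving the nested-arcs construction, but what you wrote is consistent with the construction that lemma already supplies (one contiguous component per layer, consecutive-layer edge count maximized).
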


\begin{proof}[Proof of Lemma~\ref{lem:boundarylengthlowerbound}]
This follows from Lemma~\ref{lem:boundarylengthincrement} by induction on $M$, with $M = \Mmin$ as the base case.
\end{proof}

We use the two preceding facts about layer sequences to prove Lemma~\ref{lem:lowperimeter}. The proof features the quantity $\Shat(\layerseq)$, which is the scent intensity common to all configurations with layer sequence $\layerseq$.

\begin{proof}[Proof of Lemma~\ref{lem:lowperimeter}]
We aim to show that, for an arbitrary configuration $\sigma \in \Omega$, there exists $\tau \in \Omega$ such that $B(\tau) \leq B(\sigma)$, $S(\tau) \geq S(\sigma)$, and $B(\tau) \leq 6w + 4h$. The idea of the proof is to show that, if there are layers $k_1$ and $k_2$ such that the layer sequence $\layerseq$ of $\sigma$ satisfies
\[
k_2 \geq k_1+2, \quad n_{k_1+1} - n_{k_1} \geq 2, \quad \text{and} \quad n_{k_2+1} - n_{k_2} \geq 2,
\]
then it is possible to ``promote'' one of the occupied sites to a higher layer, resulting in a new layer sequence $\layerseq^+ = (n^+_k)_{k \in [h]} \in \LayerSet^{\rbot^+,\rtop^+}$, in such a way that $\Bhat_h(\layerseq^+) \leq \Bhat_h(\layerseq)$. Moving an occupied site to a higher layer clearly also ensures that $\Shat(\layerseq^+) \geq \Shat(\layerseq)$, because $S_h$ is nonincreasing with distance from the food. 
We define $\layerseq^+$ by removing an occupied site from layer $k_1+1$ and adding it to layer $k_2$. In other words, for each $k \in [h]$, we define: $n^+_k = n_k - 1$ if $k = k_1 + 1$; $n^+_k = n_k + 1$ if $k = k_2$; and $n^+_k = n_k$ otherwise. 
Note that, like $\rbot$, $\rbot^+$ is at most $k_1$.

To compute the difference between $\Bhat_h(\layerseq^+)$ and $\Bhat_h(\layerseq)$, we look at the terms affected by the change. For each $k$, denote by $D_k$ and $D_k^+$ the terms of the sums of $\Bhat_h(\layerseq)$ and $\Bhat_h(\layerseq^+)$. We observe that $D^+_{k_1} = D_{k_1} - 4$ and $D^+_{k_2} = D_{k_2} - 4$, and the differences $D^+_{k_1+1} - D_{k_1+1}$ and $D^+_{k_2-1} - D_{k_2-1}$ are at most $8$ when $k_1+1$ and $k_2-1$ are equal, and at most $4$ when they are not. This implies that $\Bhat_h(\layerseq^+) \leq \Bhat_h(\layerseq) - 8 + 8 = \Bhat_h(\layerseq)$.

We repeat this process of promoting occupied sites to find a layer sequence $\layerseq^* = (n^*_k)_{k \in [h]}$ such that $\Bhat_h(\layerseq^*) \leq \Bhat_h(\layerseq)$, $\Shat(\layerseq^*) \geq \Shat(\layerseq)$, and $n^*_{k+1} - n^*_k \leq 1$ for all but at most two values of $k$. Moreover, if there are two such values of $k$, they must be consecutive integers.
%
This implies the upper bound
\begin{align*}
\Bhat_h(\layerseq^*) \leq \Bhat_{\rbot^*+1}(\layerseq) + 4(\rtop^*-\rbot^*) + 4w \leq 6w + 4h.
\end{align*}
By Lemma~\ref{lem:boundarylengthlowerbound}, there exists $\tau \in \Omega$ with layer sequence $\layerseq^*$ and $B(\tau) \leq \Bhat_h(\layerseq^*)$, hence $B(\tau) \leq 6w + 4h$. 
\end{proof}

\section{Conclusion}
%
We define a simple SOPS model of bridging based on the particles' affinity for more neighbors, which results in more ``robust'' bridges,  and a bias toward the food. We show that the emergence of a single bridge in collective systems  is
a statistical inevitability, requiring no central coordination.
The novelty of our strategy is based on defining and analyzing a much simpler occupancy chain that abstracts the specific local dynamics of the particles and looks at the evolution of the contour indicating the system's occupied sites.
We intentionally kept the occupancy chain as simple as possible to show the generality of the single bridging emergent phenomena to highlight the connections to similar statistical physics models.\footnote{The techniques we use also apply to other geometries, including the square lattice and other planar graphs with or without periodic boundary conditions.} 
Moreover, the occupancy chain that evolves according to a Hamiltonian defined only by contour length and scent gradient is simple enough to provide rigorous proofs using tools from statistical physics.

We see several
directions for future work: First, we believe that we can extend our results to other models inspired by biological systems and beyond, such as by including ants retreat from the food after they are fed, by using a more sophisticated occupancy chain that models rafts by allowing a collection of ants to move together in a single move, or by including more specifics of man-made swarm robotics systems. Our simulations show that such variants do not significantly change our findings. 
Second, the occupancy chain abstracts away information about the motion of individual particles, and allows us to
gain a more direct means of analyzing their collective behavior: We expect that such abstractions will help understanding the collective behavior of many other programmable matter systems. Third, occupancy Markov chains have the potential for impact in statistical physics, by allowing one to relax the need for precisely estimating surface tension of contours, potentially enabling a better formal understanding of fixed magnetization spin systems \cite{dobrushin1992}, and collectives arising in swarm robotics responding to directed external stimuli \cite{li2021}.  Lastly, while simulations suggest that both the occupancy chain and the underlying agent-based simulations converge in polynomial time, we do not have a formal bound on the mixing time of either.  Future work can relax the connectivity restriction on valid configurations, which may make it easier to derive such bounds, but ant experiments suggest typically the majority do stay connected.  It would also be helpful to derive general bounds relating the mixing times of an occupancy chain with the underlying dynamics for a general class of models, perhaps building on similar decomposition theorems \cite{MR}.

\bibliographystyle{plainurl}
\bibliography{pontoon.bib}

\appendix
\newcommand{\Kcap}{K}
\newcommand{\Mcap}{M}
\newcommand{\amax}{a^{\text{max}}}
\newcommand{\fcont}{\widehat{f}}
\newcommand{\fopt}{\widehat{f}^*}
\newcommand{\ih}{\mathcal{P}}
\newcommand{\sline}{\mathcal{L}}
\section{Proofs of Lemmas}
\label{app:lemmas}
In this section, we prove two inputs to the proof of Theorem~\ref{thm:zeroorone-full}.

\begin{proof}[Proof of Lemma~\ref{lem:unlikelyconfigurations}]
We have
\begin{align*}
\pi(\OmegaBad_\ell)
&\leq \sum_{\sigma \in \OmegaBad_\ell} \frac{\pi(\sigma)}{\pi(\tau)}\\
&\leq \exp\left\{(\ell + 2w) \log 2 + \beta(H(\tau) - H(\sigma))\right\} \\
&\leq \exp\left\{\ell \log 2 + 2w \log 2 - \beta\epsilon h - \beta \delta \ell  + O_h(1) \right\} \\
&\leq \exp\left\{\ell (\log 2 - \beta\delta) + \left(2\alpha \log 2 - \beta\epsilon\right) h  + O_h(1) \right\}.
\end{align*}
By assumption, $\beta$ satisfies $\log 2 - \beta \delta < 0$, which gives us the following upper bound on the probability of sampling a configuration from $\OmegaBad$:
\begin{align*}
\pi(\OmegaBad) &= \sum_{\ell < \ceil{1/\delta}} \pi(\OmegaBad_\ell) + \sum_{\ell \geq \ceil{1/\delta}} \pi(\OmegaBad_\ell) 
\leq \left(\frac{1}{1-e^{\log 2-\beta\delta}}\right)e^{\left(2\alpha \log 2 - \beta\epsilon\right) h},
\end{align*}
which is exponentially small with $h$ as, by assumption, $\beta$ satisfies $2\alpha \log 2 - \beta\epsilon < 0$.
\end{proof}

\begin{lemma}[Approximation by convex function]\label{lem:convexlemma}
Let $h \geq 1$ be an integer and
suppose $f: [h] \to \mathbb{R}_{\geq 0}$ with $f(1) = 0$. Denoting $\Delta f(k) = f(k+1) - f(k)$ for $k \in [h-1]$, suppose that $0 \leq \Delta f(1) \leq \Delta f(2) \leq \ldots \leq \Delta f(k-1)$.
Then there exists a convex function $\fcont: [0,h] \to \mathbb{R}_{\geq 0}$
such that for all $K_1, K_2 \in [h] = \{1,2,\ldots,h\}$, $K_1 \leq K_2$, we have:
\begin{align*}
- \frac{1}{8}\Delta f(K_1-1)
\leq \int_{K_1-1}^{K_2} \fcont(x) dx - \sum_{K_1}^{K_2}f(k)
\leq \frac{1}{8}\Delta f(K_2),
\end{align*}
where we replace the lower bound with $0$ if $K_1 = 1$, and the upper bound with $0$ if $K_2 = h$.
\end{lemma}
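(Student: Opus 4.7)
My plan is to build $\fcont$ as a piecewise linear convex function tailored to $f$ and then verify the two-sided bound by a direct split-and-integrate computation on each interval $[K_1-1,K_2]$.

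For the construction I would place the interior breakpoints of $\fcont$ at the half-integers $k-1/2$ for $k \in [h]$, set $\fcont(k-1/2) = f(k)$, and on each middle interval $[k-1/2,k+1/2]$ with $k \in [h-1]$ let $\fcont$ be linear with slope $\Delta f(k)$. Because $\Delta f$ is nondecreasing by hypothesis, the slopes on consecutive middle intervals are nondecreasing, so $\fcont$ is convex on $[1/2,h-1/2]$. On the left tail $[0,1/2]$ I would set $\fcont \equiv 0$, which matches $\fcont(1/2) = f(1) = 0$, keeps $\fcont$ nonnegative, and is convexity-compatible with the first interior slope $\Delta f(1) \ge 0$. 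On the right tail $[h-1/2,h]$ I would extend $\fcont$ linearly with a slope that is at least $\Delta f(h-1)$ (forced by convexity), tuned so that the global constraint $\int_0^h \fcont = \sum_{k=1}^h f(k)$ (the joint boundary case $K_1 = 1$, $K_2 = h$) is met.

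For the generic interior case $1 < K_1 \le K_2 < h$, I would split $\int_{K_1-1}^{K_2}\fcont$ into the two boundary half-intervals $[K_1-1,K_1-1/2]$ and $[K_2-1/2,K_2]$ and the full middle intervals $[k-1/2,k+1/2]$ for $K_1 \le k \le K_2-1$. Each middle interval contributes $(f(k)+f(k+1))/2$. A direct computation on the boundary halves gives contributions $\tfrac{1}{8}f(K_1-1) + \tfrac{3}{8}f(K_1)$ and $\tfrac{1}{2}f(K_2) + \tfrac{1}{8}\Delta f(K_2)$, respectively. Adding these and telescoping against $\sum_{k=K_1}^{K_2} f(k)$ cancels the middle terms and leaves the exact error $\tfrac{1}{8}\Delta f(K_2) - \tfrac{1}{8}\Delta f(K_1-1)$, which lies in $[-\tfrac{1}{8}\Delta f(K_1-1), \tfrac{1}{8}\Delta f(K_2)]$ since both increments are nonnegative. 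The $K_1 = 1$ case follows immediately: the flat segment on $[0,1/2]$ replaces the left boundary contribution with $0$, erasing the negative term and yielding $\int - \sum = \tfrac{1}{8}\Delta f(K_2)$, which lies in $[0, \tfrac{1}{8}\Delta f(K_2)]$.

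The main obstacle will be the $K_2 = h$ cases, where the upper bound in the inequality is replaced by $0$: convexity forces the right-tail slope to be at least $\Delta f(h-1)$, and the naive extension produces a strictly positive overshoot of $\tfrac{1}{8}\Delta f(h-1)$ rather than $0$, in conflict with both the required inequality $\int - \sum \le 0$ and the exact joint equality $\int_0^h \fcont = \sum f(k)$. I would resolve this by revising the right-end construction, for instance by dropping $\fcont(h-1/2)$ slightly below $f(h)$ and inserting an auxiliary breakpoint inside $[h-1/2,h]$ so that the two slopes there can be chosen to absorb the overshoot while remaining nondecreasing and at least as large as the slope on the adjacent middle interval. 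The delicate step is then to recheck that this modified right-tail does not spoil the interior error formula for $K_2 \le h-1$ and yields sign-consistent errors for all $K_1$ when $K_2 = h$; I expect this bookkeeping, together with the accompanying case analysis, to be the central technical content of the proof.
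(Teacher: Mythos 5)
Your direct half-integer construction is clean and the interior computation is correct: pinning $\fcont(k-1/2) = f(k)$ with slope $\Delta f(k)$ on $[k-1/2,k+1/2]$ does give the exact error $\tfrac{1}{8}\Delta f(K_2) - \tfrac{1}{8}\Delta f(K_1-1)$ whenever $1 \le K_1 \le K_2 \le h-1$ (with $\Delta f(0) := 0$), and this lies in the required interval. The useful reduction you have implicitly discovered is that, writing $g(m) = \int_0^m \fcont - \sum_1^m f$, the two-index bounds all follow from the single-index bounds $0 \le g(m) \le \tfrac{1}{8}\Delta f(m)$ for $m \le h-1$ together with $g(h) = 0$; your construction achieves $g(m) = \tfrac{1}{8}\Delta f(m)$ on $[h-1]$.

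The genuine gap is the constraint $g(h) = 0$, and your proposed repair does not work as stated. With $\fcont(h-1/2)$ fixed at $f(h)$, convexity forces the slope on $[h-1/2,h]$ to be at least $\Delta f(h-1)$, so $\int_{h-1/2}^h \fcont \ge \tfrac{1}{2}f(h) + \tfrac{1}{8}\Delta f(h-1)$; inserting extra breakpoints with nondecreasing slopes only raises $\fcont$ above the minimal affine extension, hence can only \emph{increase} this integral, and so cannot ``absorb the overshoot.'' The only way to reduce the right-tail integral is to lower the value $\fcont(h-1/2)$ itself below $f(h)$, but that forces the slope on $[h-3/2,h-1/2]$ below $\Delta f(h-1)$, which can break convexity against the $\Delta f(h-2)$ slope on the preceding interval, and in general triggers a cascade of corrections propagating all the way back to $x=0$. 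You neither carry out that cascade nor show it terminates in a convex function with the right single-index bounds; ``I expect this bookkeeping \dots to be the central technical content'' is precisely the part that is missing, and it is not a routine bookkeeping step.

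The paper's proof avoids this entirely by a quite different construction: it builds $\fcont$ inductively as the pointwise maximum of downward-shifted secant lines $\sline_K - a_K$, choosing each shift $a_K \ge 0$ to be the largest value keeping $\int_0^K \fcont \ge \sum_1^K f$, and proves by induction that the resulting partial integrals satisfy exactly the single-index bounds above, including $g(h)=0$. This ``upper envelope with tuned offsets'' construction is what handles the cascade your approach would need; your function is instead a fixed interpolant of $f$, which pins $g(h) > 0$. As it stands your argument proves the lemma only on the index range $K_2 \le h-1$ and does not establish the $K_2 = h$ cases, so it is incomplete and the proposed fix is not viable without a substantially different mechanism.
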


\begin{proof}
The functions we construct will be combinations of straight lines.  For $a \geq 0$, denote by $\sline_{1}$ the constant function $x \mapsto 0$, and for $\Kcap \in \{2,3,\ldots,h\}$, and denote by $\sline_{\Kcap}$ the function $x \mapsto f(\Kcap) + \left(x-\Kcap+\frac{1}{2}\right)\Delta f(\Kcap-1)$.
Using these straight lines, we define the function $\fopt_{0} : [0,h] \to \mathbb{R}_{\geq 0}$, $\fopt_{0}(x) = 0$ for all $x \in [0,h]$.
Then for each $\Kcap \in [h]$, we construct the functions $f_{\Kcap}: [0,h] \times \mathbb{R}_{\geq 0} \to \mathbb{R}$, where: 
\begin{align*}
\fcont_{\Kcap}(x,a) = \begin{cases}
\max\left\{\fopt_{\Kcap-1}(x), \sline_{\Kcap}(x)-a\right\} &\text{ for } x \in [0,\Kcap] \\
0 &\text{ for } x \in (\Kcap,h]
\end{cases}
\end{align*}
where for each $\Kcap \in [h]$, $\fopt_{\Kcap} : [0,h] \to \mathbb{R}$ is defined as $\fopt_{\Kcap}(x) = \fcont_{\Kcap}(x,a_{\Kcap})$, where:
\begin{align*}
a_{\Kcap} = \sup\left\{a \geq 0 : \int_{0}^{\Kcap} \fcont_{\Kcap}(x,a) dx - \sum_{k=1}^{\Kcap}f(k) \geq 0\right\}.
\end{align*}
For this definition to make sense, the set $\left\{a \geq 0 : \int_{0}^{\Kcap} \fcont_{\Kcap}(x,a) dx - \sum_{k=1}^{\Kcap}f(k) \geq 0\right\}$ needs to be bounded and non-empty. This, along with the following properties of these functions, will be shown by induction below. 
%
The inductive hypothesis for $\Kcap \in \{1,2,\ldots,h\}$ states that:
\begin{enumerate}
\item For each $\Mcap \in [\Kcap]$, we have $\displaystyle 0 \leq \int_{0}^{\Mcap} \fcont_{\Kcap}(x,0) dx - \sum_{k=1}^{\Mcap}f(k) \leq \frac{1}{8}\Delta f(\Mcap).$
\item The value $a_{\Kcap} \geq 0$ exists, and the function $\fopt_{\Kcap}$ is 
of the form $\max_{k \in [\Kcap]}\{\sline_k-a_k\}$ on $[0,\Kcap]$ and $0$ outside of $[0,\Kcap]$ (which implies that $\fopt_{\Kcap}$ is convex on $[0,\Kcap]$).
\item For each $\Mcap \in [\Kcap-1]$, we have $\displaystyle 0 \leq \int_{0}^{\Mcap} \fopt_{\Kcap}(x) dx - \sum_{k=1}^{\Mcap}f(k) \leq \frac{1}{8}\Delta f(\Mcap),$ and for the case where $\Mcap = \Kcap$, we have $\displaystyle \int_{0}^{\Kcap} \fopt_{\Kcap}(x) dx - \sum_{k=1}^{\Kcap}f(k) = 0.$
\end{enumerate}
For clarity we denote these three statements by $\ih_1^{\Kcap}$, $\ih_2^{\Kcap}$, $\ih_3^{\Kcap}$ for each $\Kcap \in [h]$ in our proof. The statements $\ih_2^{h}$ and $\ih_3^{h}$ imply the lemma, by using the function $\fopt_h$ for $\fcont$.
For the base case where $K=1$, we have $\fcont_1(\cdot,0) = \sline_1 = 0 = f(1)$, so $\ih_1^1$ clearly holds. For $\ih_2^1$, we have $a_1 = 0$, so $\fopt_1 = f_1(\cdot,0) = \sline_1 = 0$. This also directly implies $\ih_3^1$.

To show $\ih_1^{\Kcap}$ for $\Mcap = \Kcap$, we can reduce the problem to the case where $\Mcap = \Kcap-1$ by noting that
\begin{align*}
&\int_{0}^{\Kcap}\fcont_{\Kcap}(x,0)dx - \sum_{k=1}^{\Kcap}f(k)\\
&= \left(\int_{0}^{\Kcap-1}\fcont_{\Kcap}(x,0)dx - \sum_{k=1}^{\Kcap-1}f(k)\right) + \left(\int_{\Kcap-1}^{\Kcap}\sline_{\Kcap}(x)dx - f(\Kcap)\right),
\end{align*}
where the second term is equal to $0$ by the definition of $\sline_{\Kcap}$.\\

To show the lower bound in $\ih_1^{\Kcap}$ for $\Mcap \in [\Kcap-1]$, we have
\begin{align*}
\int_{0}^{\Mcap} \fcont_{\Kcap}(x,0)dx - \sum_{k=1}^{\Mcap}f(k)
\geq \int_{0}^{\Mcap} \fopt_{\Kcap-1}(x,0)dx - \sum_{k=1}^{\Mcap}f(k) \geq 0,
\end{align*}
where the second inequality applies by $\ih_3^{\Kcap-1}$ if $\Kcap \geq 2$, and is clear for the case where $\Kcap = 1$ as $\fopt_0 = 0$.\\

To show the upper bound in $\ih_1^{\Kcap}$ for $\Mcap \in [\Kcap-1]$, we have two cases, when $\sline_{\Kcap}(\Mcap-1) > \fopt_{\Kcap-1}(\Mcap-1)$ and when $\sline_{\Kcap}(\Mcap-1) \leq \fopt_{\Kcap-1}(\Mcap-1)$.
In the former case, by $\ih_2^{\Kcap-1}$, this implies that $\fcont_{\Kcap}(x,0) = \sline_{\Kcap}(x)$ for all $x \in [\Mcap-1,\Mcap]$, so 
\[\int_{\Mcap-1}^{\Mcap}\fcont_{\Kcap}(x,0)dx = f(\Kcap) + (\Mcap-\Kcap)\Delta f(\Kcap-1) \leq f(\Mcap) = \int_{\Mcap-1}^{\Mcap}\fcont_{\Mcap}(x,0)dx.\]
In addition, for all $x \in [0,\Mcap-1]$, as $\fcont_{\Mcap}(x,0)$ is convex and consists only of lines of gradient at most $\Delta f(\Mcap-1) \leq \Delta f(\Kcap-1)$, we must have $\fcont_{\Kcap}(x,0) \leq \fcont_{\Mcap}(x,0)$. This, along with $\ih_1^{\Mcap}$ (note that $\Mcap < \Kcap$), means that
\begin{align*}
\int_{0}^{\Mcap}\fcont_{\Kcap}(x,0)dx \leq \int_{0}^{\Mcap}\fcont_{\Mcap}(x,0)dx \leq \sum_{k=1}^{\Mcap}f(k) + \frac{1}{8}\Delta f(\Mcap).
\end{align*}
In the latter case, we make use of the following facts:
\begin{itemize}
\item For all $x \in \left[\Mcap-1,\Mcap-\frac{1}{2}\right]$  we have $\fcont_{\Kcap}(x,0) \leq \sline_{\Kcap-1}(x)$.
\item For all $x \in \left[\Mcap-\frac{1}{2},\Mcap\right]$  we have $\fcont_{\Kcap}(x,0) \leq \sline_{\Kcap}(x)$.
\end{itemize}
to state that
\[\int_{\Mcap-1}^{\Mcap} \fcont_{\Kcap}(y,0)dy \leq f(\Mcap) + \frac{1}{8}\Delta f(\Mcap) - \frac{1}{8}\Delta f(\Mcap-1).\]

As $\fcont_{\Kcap}(\Mcap-1,0) \leq \fopt_{\Kcap-1}(\Mcap-1)$, and as
by $\ih_2^{\Kcap-1}$, $\fopt_{\Kcap-1}(x)$
is convex and consists only of lines of gradient at most $\Delta f(\Kcap-2) \leq \Delta f(\Kcap-1)$, we must have 
$\sline_K(x) \leq \fopt_{K-1}(x)$, and thus $\fcont_{\Kcap}(x,0) = \fopt_{K-1}(x)$,
 for all $x \in [0,\Mcap-1]$. Thus, as $M-1 \in [\Kcap-2]$, we have by $\ih_3^{\Kcap-1}$:
\begin{align*}
\int_{0}^{\Mcap}\fcont_{\Kcap}(x,0)dx
&= \int_{0}^{\Mcap-1}\fcont_{\Kcap}(x,0)dx + \int_{\Mcap-1}^{\Mcap}\fcont_{\Kcap}(x,0)dx \\
&\leq \sum_{k=1}^{\Mcap-1}f(k) + \frac{1}{8}\Delta f(\Mcap-1) + f(\Mcap) + \frac{1}{8}\Delta f(\Mcap) - \frac{1}{8}\Delta f(\Mcap-1) \\
&= \sum_{k=1}^{\Mcap}f(k) + \frac{1}{8}\Delta f(\Mcap).
\end{align*}

To show $\ih_2^{\Kcap}$, as we know by $\ih_2^{\Kcap-1}$ that $\fopt_{\Kcap-1}$ is made out of lines of gradient at most $\Delta f(\Kcap-1)$, and by how $\fcont_{\Kcap-1}$ is constructed, we have $\fopt_{\Kcap-1}(\Kcap-1) = \sline_{\Kcap-1}(\Kcap-1)-a_{\Kcap-1}$, which means that $\fcont_{\Kcap}(\cdot,a)$ is equal to $\max\left\{\sline_{\Kcap}-a, \max_{k \in [\Kcap-1]}\{\sline_k-a_k\}\right\}$ whenever $\sline_{\Kcap}(\Kcap-1) - a \geq \sline_{\Kcap1}(\Kcap-1) - a_{\Kcap-1}$, which is if and only if
$0 \leq a \leq \amax_{\Kcap}$, where
\[\amax_{\Kcap} = a_{\Kcap-1} + \frac{1}{2}\left(\Delta f(\Kcap-1) - \Delta f(\Kcap-2)\right).\]
Also, if we set $a = \amax_{\Kcap}$, by $\ih_3^{\Kcap-1}$, we have
\begingroup
\allowdisplaybreaks
\begin{align*}
&\int_{0}^{\Kcap}\fcont_{\Kcap}(x,\amax_{\Kcap})dx - \sum_{k=1}^{\Kcap}f(k)\\
&= \int_{0}^{\Kcap-1}\fcont_{\Kcap-1}(x,\amax_{\Kcap})dx - \sum_{k=1}^{\Kcap-1}f(k) + \int_{\Kcap-1}^{\Kcap}\left(\sline_{\Kcap}(x) - \amax_{\Kcap}\right)dx - f(\Kcap) \\
&\leq \left(\int_{0}^{\Kcap-1}\fopt_{\Kcap-1}(x)dx - \sum_{k=1}^{\Kcap-1}f(k)\right) - \amax_{\Kcap} 
= 0-\amax_{\Kcap} \leq 0.
\end{align*}
\endgroup
This implies that the set $\left\{a \geq 0 : \int_{0}^{\Kcap} \fcont_{\Kcap}(x,a) dx - \sum_{k=1}^{\Kcap}f(k) \geq 0\right\}$ is bounded and non-empty, so $a_{\Kcap}$ is well-defined and not greater than $\amax_{\Kcap}$, which shows $\ih_2^{\Kcap}$.

Finally, for $\ih_3^{\Kcap}$, in the case of $M \in [K-1]$, the lower bound comes from the definition of $\fopt_{\Kcap}$, while the upper bound comes from the upper bound of $\ih_1^{\Kcap}$ and the fact that $\fopt_{\Kcap}(x) \leq \fcont_{\Kcap}(x,0)$ for all $x \in [0,h]$.
For the case of $M=K$, the monotone convergence theorem tells us that the function $a \mapsto \int \fcont_{\Kcap}(x,a)dx$ is continuous on $[0,\amax_{\Kcap}]$. As it is monotone, at least $\sum_{k=1}^{\Kcap}f(k)$ at $0$ and at most $\sum_{k=1}^{\Kcap}f(k)$ at $\amax_{\Kcap}$, we have
\begin{align*}
&\int_{0}^{\Kcap} \fopt_{\Kcap}(x) dx = \int_{0}^{\Kcap} \fcont_{\Kcap}(x,a_{\Kcap}) dx = \sum_{k=1}^{\Kcap}f(k). \qedhere
\end{align*}
\end{proof}

\section{Additional details of the technical lemmas}\label{app:proof}

In this appendix, we further develop the properties and applications of layer sequences, introduced in Section~\ref{sec:layerseq}. First, in Section~\ref{sec:app1}, we collect the proofs of some inputs to the proof of Theorem~\ref{thm:singlelongbridge}. Then, in Section~\ref{sec:app2}, we discuss the main ideas behind the inputs to the proof of Theorem~\ref{thm:zeroorone-full}.

\subsection{Proofs of further inputs to Theorem~\ref{thm:singlelongbridge}}\label{sec:app1}

\begin{proof}[Proof of Lemma~\ref{lem:boundarylengthincrement}]
For $\sigma \in \OmegaExt$, we look at the change in total boundary length when we add layer $M+1$ of $\sigma$ to $\sigma_M$. Let $C_{M+1}$ be the number of components in the restriction of $\sigma$ to only layer $M+1$.

As layer $M+1$ is not completely filled, there are exactly $n_{M+1}-C_{M+1}$ edges between occupied sites within layer $M+1$. Thus, the number of edges of $\Lambda$ with exactly one endpoint at an occupied site in layer $M+1$ is equal to $6n_{M+1} - 2(n_{M+1}-C_{M+1}) = 4n_{M+1} + 2C_{M+1}$. Some of the edges are shared with occupied sites on layer $M$.
Denoting by $E_{M+1}$ the number of edges between layers $M$ and $M+1$, we can see that adding layer $M+1$ to $\sigma_M$ adds $4n_{M+1} + 2C_{M+1} - 2E_{M+1}$ to the boundary length of $\sigma_{M+1}$.

We compute an upper bound for $E_{M+1}$ by counting the number of occupied sites in layer $M+1$ {with at least one and with at least two edges} to layer $M$ respectively.
Layer $M$, with $C_M$ components, provides $n_M + C_M$ occupied sites in layer $M+1$ with at least one edge to layer $M$, and $n_M - C_M$ occupied sites in layer $M+1$ with two edges to layer $M$. A site in layer $M+1$ cannot have more than two edges to layer $M$. This gives an upper bound of $\min\{n_M+c_M,n_{M+1}\}$ for the former, and $\min\{n_M-c_M,n_{M+1}\}$ for the latter.
This gives us the following lower bound:
\begin{align*}
B(\sigma_{M+1}) &- B(\sigma_M) \\
\geq \ & 4n_{M+1} + 2C_{M+1} - 2\min\{n_M+C_M,n_{M+1}\} - 2\min\{n_M-C_M,n_{M+1}\} \\
\geq \ & 2C_{M+1} + 2\max\{n_{M+1}-n_M-C_M,0\} + 2\max\{n_{M+1}-n_M+C_M,0\} \\
\geq \ & 2 + 2\max\{n_{M+1}-n_M-1,0\} + 2\max\{n_{M+1}-n_M+1,0\}.
\end{align*}
The last inequality holds because $C_{M+1} \geq 1$ and $\max\{n_{M+1}-n_M-C_M,0\} + \max\{n_{M+1}-n_M+C_M,0\} \geq \max\{n_{M+1}-n_M-1,0\} + \max\{n_{M+1}-n_M+1,0\}$ for any value of $n_{M+1}-n_M$, as long as $C_M \geq 1$ (which is true as $\sigma \in \OmegaExt$).
To show the second statement of the lemma, we can construct $\sigma$ layer by layer, where each layer from $\max\{1,\rbot\}$ to $\rtop$ has exactly one component and $M_{M+1}$ is maximized for each value of $M$.
\end{proof}

\begin{proof}[Proof of Lemma~\ref{lem:epsboundary}]
In this proof, we will use $(x)_+$ to denote $\max\{x,0\}$ for $x \in \mathbb{R}$.
Fix $\eps \geq 1/h$ and a configuration $\sigma \in \MB_\eps$. By definition \eqref{eqn:mbdef}, there is an $a \in (0,1)$ such that $\sigma$ has multiple $(a,a+\eps)$-bridges. Set $M = \floor*{a h}$, and let $\RegionH$ represent the set of sites on layers $M$ to $h$. We can divide the elements of $\sigma$ within $\RegionH$ into connected components $V_1, V_2, \ldots, V_P$ for some integer $P$. Denote by $\Vbase$ the elements of $\sigma$ not in $\RegionH$.
For each $j \in \{1,2,\ldots,P\}$, we denote $\configpart{j} = V_j \cup \Vbase$
which we note will be a configuration in $\OmegaExt$,
and let $\seqpart{j} = (\npart{j}_k)_{k \in \{0,1,\ldots,h\}}$ be its layer sequence.

By applying Lemma~\ref{lem:boundarylengthincrement}, the increase in boundary length when adding the sites $V_j$ to the restriction of $\sigma$ to the layers $1,2,\ldots,M-1$ must be at least $\Bhatpart{j}$, where
\begin{align*}
\Bhatpart{j}
&= \sum_{k=M}^{\rtoppart{j}-1}\left(2 + 2\left(\npart{j}_{k+1}-\npart{j}_k+1\right)_+ + 2\left(\npart{j}_{k+1}-\npart{j}_k-1\right)_+\right),
\end{align*}
and where $\rtoppart{j}$ denotes the topmost non-empty layer of $\configpart{j}$. We can thus give the following lower bound for the boundary length of $\sigma$:
\[B(\sigma) \geq \Bhat_M(\layerseq) + \sum_{j=1}^P \Bhatpart{j}.\]
Denoting by $\layerseq = (n_k)_{k \in \{0,1,\ldots,h\}}$ the layer sequence of $\sigma$ itself, by Lemma~\ref{lem:boundarylengthlowerbound}, there exists a configuration $\tau$ with layer sequence $\layerseq$ where:
\[B(\tau) = \Bhat_M(\layerseq) + \sum_{k=M}^{\rtop-1}\left(2 + 2\left(n_{k+1}-n_k+1\right)_+ + 2\left(n_{k+1}-n_k-1\right)_+ \right).\]
Without loss of generality, we may assume that $\rtop = \rtoppart{1} \geq \rtoppart{2} \geq \ldots \geq \rtoppart{P}$. 
\begingroup
\allowdisplaybreaks
\begin{align*}
&B(\sigma)  - B(\tau)
\geq \sum_{j=1}^P \Bhatpart{j} - \sum_{k=M}^{\rtop}\left(2 + 2\left(n_{k+1}-n_k+1\right)_+ + 2\left(n_{k+1}-n_k-1\right)_+\right) \\
&= -\sum_{j=2}^P\sum_{k=M}^{\rtoppart{j}}2 +
2\sum_{k=M}^{\rtoppart{j}}\left(\sum_{\substack{j \in [P] \\ k \leq \rtoppart{j}-1}}\!\left(\npart{j}_{k+1}-\npart{j}_k+\!1\right)_+ - 
\left(n_{k+1}-n_k+1\right)_+\right) \\
&\hspace{0.85in} + 2\sum_{k=M}^{\rtoppart{j}}\left(\sum_{\substack{j \in [P] \\ k \leq \rtoppart{j}-1}}\left(\npart{j}_{k+1}-\npart{j}_k-1\right)_+ - \left(n_{k+1}-n_k-1\right)_+
\right).
\end{align*}
\endgroup
The second and third terms of the above expression are non-negative as for $I \in \{1,-1\}$, we have
\begingroup
\allowdisplaybreaks
\begin{align*}
\sum_{\substack{j \in [P] \\ k \leq \rtoppart{j}-1}} & \left(\npart{j}_{k+1}-\npart{j}_k+I\right)_+ - 
\left(n_{k+1}-n_k+I\right)_+ \\ 
&= \sum_{\substack{j \in [P] \\ k \leq \rtoppart{j}}}\left(\npart{j}_{k+1}-\npart{j}_k+I\right)_+ - 
\Big(\sum_{\substack{j \in [P] \\ k \leq \rtoppart{j}}}(n_{k+1}-n_k)+I\Big)_+ \geq 0.
\end{align*}
\endgroup
This allows us to conclude that
\begin{align}
\label{p6_expression}
B(\sigma) - B(\tau)
\geq \sum_{j=2}^P\sum_{k=M}^{\rtoppart{j}}2
= 2\sum_{j=2}^P (\rtoppart{j} - M + 1).
\end{align}
If $\sigma$ has multiple $(a,a+\eps)$-bridges, we must have $P \geq 2$ and
$\rtoppart{2} - M + 1\geq \epsilon h$,
and so according to~(\ref{p6_expression}) we must have $B(\sigma) \geq B(\tau) + 2\epsilon h$.
\end{proof}

\subsection{Proofs of further inputs to Theorem~\ref{thm:zeroorone-full}}\label{sec:app2}
These lemmas arise from the analysis of a transformation of layer sequences which gives a significant improvement (more negative) in the Hamiltonian when a configuration without a bridge to the food is extended to reach the food, so long as the scent parameter $\eta$ is sufficiently large. 

Let $\sigma$ be an arbitrary configuration with layer sequence $\layerseq = (n_k)_{k \in [h]}$. Suppose that $\layerseq \in \LayerSet^{0,\toplayer}$, where $\toplayer \leq h-1$ is the highest occupied layer in $\sigma$.
We will transform $\layerseq$ into a new layer sequence $\layerseq^{\spost}$ in $\LayerSet^{0,h}$ (i.e. it reaches $\Lambda_h$).
We then show that with a sufficiently large value of $\eta$ (or $\varphi$), the increase in scent intensity going from $\layerseq$ to $\layerseq^{\spost}$ will more than compensate for the potential increase in boundary length.
However, the transformation from $\layerseq$ to $\layerseq^{\spost}$ is complex, and we will thus split it into multiple steps---pre-process, the main transformation, and post-process. The layer sequences after each of these steps are denoted $\layerseq^{\spre}$, $\layerseq^{\stran}$ and $\layerseq^{\spost}$ respectively.

\begin{figure*}[h]
\begin{subfigure}{.48\textwidth}
  \begin{center}
  \begin{tikzpicture}[x=0.6cm,y=0.6cm]
  \draw[lightgray] (5.48483,-0.5) -- (7.50555,-4);
\draw[lightgray] (0,-5) -- (2.59808,-0.5);
\draw[lightgray] (0,-5) -- (6.9282,-5);
\draw[lightgray] (1.73205,-2) -- (3.4641,-5);
\draw[lightgray] (1.73205,-2) -- (8.66025,-2);
\draw[lightgray] (2.02073,-1.5) -- (4.04145,-5);
\draw[lightgray] (2.02073,-1.5) -- (8.94893,-1.5);
\draw[lightgray] (6.06218,-0.5) -- (7.79423,-3.5);
\draw[lightgray] (0.57735,-4) -- (1.1547,-5);
\draw[lightgray] (0.57735,-4) -- (7.50555,-4);
\draw[lightgray] (3.17543,-0.5) -- (5.7735,-5);
\draw[lightgray] (4.33013,-0.5) -- (6.9282,-5);
\draw[lightgray] (5.19615,-5) -- (7.79423,-0.5);
\draw[lightgray] (6.35085,-5) -- (8.94893,-0.5);
\draw[lightgray] (4.90748,-0.5) -- (7.21688,-4.5);
\draw[lightgray] (6.9282,-5) -- (9.52628,-0.5);
\draw[lightgray] (0.866025,-3.5) -- (1.73205,-5);
\draw[lightgray] (0.866025,-3.5) -- (7.79423,-3.5);
\draw[lightgray] (4.04145,-5) -- (6.63953,-0.5);
\draw[lightgray] (1.44338,-2.5) -- (2.88675,-5);
\draw[lightgray] (1.44338,-2.5) -- (8.37158,-2.5);
\draw[lightgray] (2.59808,-0.5) -- (5.19615,-5);
\draw[lightgray] (2.59808,-0.5) -- (9.52628,-0.5);
\draw[lightgray] (3.75278,-0.5) -- (6.35085,-5);
\draw[lightgray] (4.6188,-5) -- (7.21688,-0.5);
\draw[lightgray] (5.7735,-5) -- (8.37158,-0.5);
\draw[lightgray] (2.88675,-5) -- (5.48483,-0.5);
\draw[lightgray] (8.94893,-0.5) -- (9.2376,-1);
\draw[lightgray] (3.4641,-5) -- (6.06218,-0.5);
\draw[lightgray] (2.3094,-1) -- (4.6188,-5);
\draw[lightgray] (2.3094,-1) -- (9.2376,-1);
\draw[lightgray] (0.288675,-4.5) -- (0.57735,-5);
\draw[lightgray] (0.288675,-4.5) -- (7.21688,-4.5);
\draw[lightgray] (1.1547,-3) -- (2.3094,-5);
\draw[lightgray] (1.1547,-3) -- (8.0829,-3);
\draw[lightgray] (0.57735,-5) -- (3.17543,-0.5);
\draw[lightgray] (1.73205,-5) -- (4.33013,-0.5);
\draw[lightgray] (6.63953,-0.5) -- (8.0829,-3);
\draw[lightgray] (1.1547,-5) -- (3.75278,-0.5);
\draw[lightgray] (7.79423,-0.5) -- (8.66025,-2);
\draw[lightgray] (2.3094,-5) -- (4.90748,-0.5);
\draw[lightgray] (7.21688,-0.5) -- (8.37158,-2.5);
\draw[lightgray] (8.37158,-0.5) -- (8.94893,-1.5);
\draw[black, line width=0.2mm, fill=white] (0,-5) circle (0.216);
\draw[black, line width=0.2mm, fill=white] (0.288675,-4.5) circle (0.216);
\draw[black, line width=0.2mm, fill=white] (0.57735,-4) circle (0.216);
\draw[black, line width=0.2mm, fill=white] (0.57735,-5) circle (0.216);
\draw[black, line width=0.2mm, fill=white] (0.866025,-4.5) circle (0.216);
\draw[black, line width=0.2mm, fill=white] (1.1547,-4) circle (0.216);
\draw[black, line width=0.2mm, fill=white] (1.1547,-5) circle (0.216);
\draw[black, line width=0.2mm, fill=white] (1.44338,-4.5) circle (0.216);
\draw[black, line width=0.2mm, fill=white] (1.73205,-5) circle (0.216);
\draw[black, line width=0.2mm, fill=white] (2.02073,-4.5) circle (0.216);
\draw[black, line width=0.2mm, fill=white] (2.3094,-4) circle (0.216);
\draw[black, line width=0.2mm, fill=white] (2.3094,-5) circle (0.216);
\draw[black, line width=0.2mm, fill=white] (2.59808,-2.5) circle (0.216);
\draw[black, line width=0.2mm, fill=white] (2.59808,-4.5) circle (0.216);
\draw[black, line width=0.2mm, fill=white] (2.88675,-2) circle (0.216);
\draw[black, line width=0.2mm, fill=white] (2.88675,-3) circle (0.216);
\draw[black, line width=0.2mm, fill=white] (2.88675,-5) circle (0.216);
\draw[black, line width=0.2mm, fill=white] (3.17543,-4.5) circle (0.216);
\draw[black, line width=0.2mm, fill=white] (3.4641,-2) circle (0.216);
\draw[black, line width=0.2mm, fill=white] (3.4641,-4) circle (0.216);
\draw[black, line width=0.2mm, fill=white] (3.4641,-5) circle (0.216);
\draw[black, line width=0.2mm, fill=white] (3.75278,-4.5) circle (0.216);
\draw[black, line width=0.2mm, fill=white] (4.04145,-2) circle (0.216);
\draw[black, line width=0.2mm, fill=white] (4.04145,-4) circle (0.216);
\draw[black, line width=0.2mm, fill=white] (4.04145,-5) circle (0.216);
\draw[black, line width=0.2mm, fill=white] (4.33013,-2.5) circle (0.216);
\draw[black, line width=0.2mm, fill=white] (4.33013,-3.5) circle (0.216);
\draw[black, line width=0.2mm, fill=white] (4.33013,-4.5) circle (0.216);
\draw[black, line width=0.2mm, fill=white] (4.6188,-2) circle (0.216);
\draw[black, line width=0.2mm, fill=white] (4.6188,-3) circle (0.216);
\draw[black, line width=0.2mm, fill=white] (4.6188,-4) circle (0.216);
\draw[black, line width=0.2mm, fill=white] (4.90748,-2.5) circle (0.216);
\draw[black, line width=0.2mm, fill=white] (4.90748,-3.5) circle (0.216);
\draw[black, line width=0.2mm, fill=white] (5.19615,-3) circle (0.216);
\draw[black, line width=0.2mm, fill=white] (5.19615,-4) circle (0.216);
\draw[black, line width=0.2mm, fill=white] (5.19615,-5) circle (0.216);
\draw[black, line width=0.2mm, fill=white] (5.48483,-2.5) circle (0.216);
\draw[black, line width=0.2mm, fill=white] (5.48483,-3.5) circle (0.216);
\draw[black, line width=0.2mm, fill=white] (5.7735,-2) circle (0.216);
\draw[black, line width=0.2mm, fill=white] (5.7735,-3) circle (0.216);
\draw[black, line width=0.2mm, fill=white] (5.7735,-4) circle (0.216);
\draw[black, line width=0.2mm, fill=white] (5.7735,-5) circle (0.216);
\draw[black, line width=0.2mm, fill=white] (6.06218,-2.5) circle (0.216);
\draw[black, line width=0.2mm, fill=white] (6.06218,-3.5) circle (0.216);
\draw[black, line width=0.2mm, fill=white] (6.35085,-2) circle (0.216);
\draw[black, line width=0.2mm, fill=white] (6.35085,-4) circle (0.216);
\draw[black, line width=0.2mm, fill=white] (6.35085,-5) circle (0.216);
\draw[black, line width=0.2mm, fill=white] (6.63953,-2.5) circle (0.216);
\draw[black, line width=0.2mm, fill=white] (6.9282,-2) circle (0.216);
\draw[black, line width=0.2mm, fill=white] (6.9282,-5) circle (0.216);
\draw[black, line width=0.2mm, fill=white] (7.21688,-1.5) circle (0.216);
\draw[black, line width=0.2mm, fill=white] (7.21688,-4.5) circle (0.216);
\draw[black, line width=0.2mm, fill=white] (7.50555,-4) circle (0.216);
\draw[black, line width=0.2mm, fill=white] (7.79423,-1.5) circle (0.216);
\draw[black, line width=0.2mm, fill=white] (8.0829,-2) circle (0.216);
\draw[black, line width=0.2mm, fill=white] (8.66025,-2) circle (0.216);
  \end{tikzpicture}
  \end{center}
  \vskip.02in
  \caption{Original configuration $\sigma$ with some number of unused particles (not shown).}
  \label{fig:transform1}
\end{subfigure}~~%
\begin{subfigure}{.48\textwidth}
  \begin{center}
  \begin{tikzpicture}[x=0.6cm,y=0.6cm]
  \draw[lightgray] (5.48483,-0.5) -- (7.50555,-4);
\draw[lightgray] (0,-5) -- (2.59808,-0.5);
\draw[lightgray] (0,-5) -- (6.9282,-5);
\draw[lightgray] (1.73205,-2) -- (3.4641,-5);
\draw[lightgray] (1.73205,-2) -- (8.66025,-2);
\draw[lightgray] (2.02073,-1.5) -- (4.04145,-5);
\draw[lightgray] (2.02073,-1.5) -- (8.94893,-1.5);
\draw[lightgray] (6.06218,-0.5) -- (7.79423,-3.5);
\draw[lightgray] (0.57735,-4) -- (1.1547,-5);
\draw[lightgray] (0.57735,-4) -- (7.50555,-4);
\draw[lightgray] (3.17543,-0.5) -- (5.7735,-5);
\draw[lightgray] (4.33013,-0.5) -- (6.9282,-5);
\draw[lightgray] (5.19615,-5) -- (7.79423,-0.5);
\draw[lightgray] (6.35085,-5) -- (8.94893,-0.5);
\draw[lightgray] (4.90748,-0.5) -- (7.21688,-4.5);
\draw[lightgray] (6.9282,-5) -- (9.52628,-0.5);
\draw[lightgray] (0.866025,-3.5) -- (1.73205,-5);
\draw[lightgray] (0.866025,-3.5) -- (7.79423,-3.5);
\draw[lightgray] (4.04145,-5) -- (6.63953,-0.5);
\draw[lightgray] (1.44338,-2.5) -- (2.88675,-5);
\draw[lightgray] (1.44338,-2.5) -- (8.37158,-2.5);
\draw[lightgray] (2.59808,-0.5) -- (5.19615,-5);
\draw[lightgray] (2.59808,-0.5) -- (9.52628,-0.5);
\draw[lightgray] (3.75278,-0.5) -- (6.35085,-5);
\draw[lightgray] (4.6188,-5) -- (7.21688,-0.5);
\draw[lightgray] (5.7735,-5) -- (8.37158,-0.5);
\draw[lightgray] (2.88675,-5) -- (5.48483,-0.5);
\draw[lightgray] (8.94893,-0.5) -- (9.2376,-1);
\draw[lightgray] (3.4641,-5) -- (6.06218,-0.5);
\draw[lightgray] (2.3094,-1) -- (4.6188,-5);
\draw[lightgray] (2.3094,-1) -- (9.2376,-1);
\draw[lightgray] (0.288675,-4.5) -- (0.57735,-5);
\draw[lightgray] (0.288675,-4.5) -- (7.21688,-4.5);
\draw[lightgray] (1.1547,-3) -- (2.3094,-5);
\draw[lightgray] (1.1547,-3) -- (8.0829,-3);
\draw[lightgray] (0.57735,-5) -- (3.17543,-0.5);
\draw[lightgray] (1.73205,-5) -- (4.33013,-0.5);
\draw[lightgray] (6.63953,-0.5) -- (8.0829,-3);
\draw[lightgray] (1.1547,-5) -- (3.75278,-0.5);
\draw[lightgray] (7.79423,-0.5) -- (8.66025,-2);
\draw[lightgray] (2.3094,-5) -- (4.90748,-0.5);
\draw[lightgray] (7.21688,-0.5) -- (8.37158,-2.5);
\draw[lightgray] (8.37158,-0.5) -- (8.94893,-1.5);
\draw[black, line width=0.2mm, fill=white] (0.57735,-5) circle (0.216);
\draw[black, line width=0.2mm, fill=white] (1.1547,-5) circle (0.216);
\draw[black, line width=0.2mm, fill=white] (1.73205,-5) circle (0.216);
\draw[black, line width=0.2mm, fill=white] (2.3094,-4) circle (0.216);
\draw[black, line width=0.2mm, fill=white] (2.3094,-5) circle (0.216);
\draw[black, line width=0.2mm, fill=white] (2.59808,-4.5) circle (0.216);
\draw[black, line width=0.2mm, fill=white] (2.88675,-4) circle (0.216);
\draw[black, line width=0.2mm, fill=white] (2.88675,-5) circle (0.216);
\draw[black, line width=0.2mm, fill=white] (3.17543,-4.5) circle (0.216);
\draw[black, line width=0.2mm, fill=white] (3.4641,-4) circle (0.216);
\draw[black, line width=0.2mm, fill=white] (3.4641,-5) circle (0.216);
\draw[black, line width=0.2mm, fill=white] (3.75278,-4.5) circle (0.216);
\draw[black, line width=0.2mm, fill=white] (4.04145,-2) circle (0.216);
\draw[black, line width=0.2mm, fill=white] (4.04145,-4) circle (0.216);
\draw[black, line width=0.2mm, fill=white] (4.04145,-5) circle (0.216);
\draw[black, line width=0.2mm, fill=white] (4.33013,-4.5) circle (0.216);
\draw[black, line width=0.2mm, fill=white] (4.6188,-2) circle (0.216);
\draw[black, line width=0.2mm, fill=white] (4.6188,-4) circle (0.216);
\draw[black, line width=0.2mm, fill=white] (4.6188,-5) circle (0.216);
\draw[black, line width=0.2mm, fill=white] (4.90748,-4.5) circle (0.216);
\draw[black, line width=0.2mm, fill=white] (5.19615,-2) circle (0.216);
\draw[black, line width=0.2mm, fill=white] (5.19615,-4) circle (0.216);
\draw[black, line width=0.2mm, fill=white] (5.19615,-5) circle (0.216);
\draw[black, line width=0.2mm, fill=white] (5.48483,-2.5) circle (0.216);
\draw[black, line width=0.2mm, fill=white] (5.48483,-4.5) circle (0.216);
\draw[black, line width=0.2mm, fill=white] (5.7735,-2) circle (0.216);
\draw[black, line width=0.2mm, fill=white] (5.7735,-4) circle (0.216);
\draw[black, line width=0.2mm, fill=white] (5.7735,-5) circle (0.216);
\draw[black, line width=0.2mm, fill=white] (6.06218,-2.5) circle (0.216);
\draw[black, line width=0.2mm, fill=white] (6.06218,-3.5) circle (0.216);
\draw[black, line width=0.2mm, fill=white] (6.06218,-4.5) circle (0.216);
\draw[black, line width=0.2mm, fill=white] (6.35085,-2) circle (0.216);
\draw[black, line width=0.2mm, fill=white] (6.35085,-3) circle (0.216);
\draw[black, line width=0.2mm, fill=white] (6.35085,-4) circle (0.216);
\draw[black, line width=0.2mm, fill=white] (6.35085,-5) circle (0.216);
\draw[black, line width=0.2mm, fill=white] (6.63953,-2.5) circle (0.216);
\draw[black, line width=0.2mm, fill=white] (6.63953,-3.5) circle (0.216);
\draw[black, line width=0.2mm, fill=white] (6.63953,-4.5) circle (0.216);
\draw[black, line width=0.2mm, fill=white] (6.9282,-2) circle (0.216);
\draw[black, line width=0.2mm, fill=white] (6.9282,-3) circle (0.216);
\draw[black, line width=0.2mm, fill=white] (6.9282,-4) circle (0.216);
\draw[black, line width=0.2mm, fill=white] (6.9282,-5) circle (0.216);
\draw[black, line width=0.2mm, fill=white] (7.21688,-2.5) circle (0.216);
\draw[black, line width=0.2mm, fill=white] (7.21688,-3.5) circle (0.216);
\draw[black, line width=0.2mm, fill=white] (7.21688,-4.5) circle (0.216);
\draw[black, line width=0.2mm, fill=white] (7.50555,-2) circle (0.216);
\draw[black, line width=0.2mm, fill=white] (7.50555,-3) circle (0.216);
\draw[black, line width=0.2mm, fill=white] (7.50555,-4) circle (0.216);
\draw[black, line width=0.2mm, fill=white] (7.79423,-2.5) circle (0.216);
\draw[black, line width=0.2mm, fill=white] (7.79423,-3.5) circle (0.216);
\draw[black, line width=0.2mm, fill=white] (8.0829,-2) circle (0.216);
\draw[black, line width=0.2mm, fill=white] (8.0829,-3) circle (0.216);
\draw[black, line width=0.2mm, fill=white] (8.37158,-1.5) circle (0.216);
\draw[black, line width=0.2mm, fill=white] (8.37158,-2.5) circle (0.216);
\draw[black, line width=0.2mm, fill=white] (8.66025,-2) circle (0.216);
\draw[black, line width=0.2mm, fill=white] (8.94893,-1.5) circle (0.216);
  \end{tikzpicture}
  \end{center}
  \caption{Representation of layer sequence $\layerseq$ (right-justified $\sigma$) with unused particles not shown.}
  \label{fig:transform2}
\end{subfigure}
\begin{subfigure}{.48\textwidth}
  \begin{center}
  \begin{tikzpicture}[x=0.6cm,y=0.6cm]
  \draw[lightgray] (5.48483,-0.5) -- (7.50555,-4);
\draw[lightgray] (0,-5) -- (2.59808,-0.5);
\draw[lightgray] (0,-5) -- (6.9282,-5);
\draw[lightgray] (1.73205,-2) -- (3.4641,-5);
\draw[lightgray] (1.73205,-2) -- (8.66025,-2);
\draw[lightgray] (2.02073,-1.5) -- (4.04145,-5);
\draw[lightgray] (2.02073,-1.5) -- (8.94893,-1.5);
\draw[lightgray] (6.06218,-0.5) -- (7.79423,-3.5);
\draw[lightgray] (0.57735,-4) -- (1.1547,-5);
\draw[lightgray] (0.57735,-4) -- (7.50555,-4);
\draw[lightgray] (3.17543,-0.5) -- (5.7735,-5);
\draw[lightgray] (4.33013,-0.5) -- (6.9282,-5);
\draw[lightgray] (5.19615,-5) -- (7.79423,-0.5);
\draw[lightgray] (6.35085,-5) -- (8.94893,-0.5);
\draw[lightgray] (4.90748,-0.5) -- (7.21688,-4.5);
\draw[lightgray] (6.9282,-5) -- (9.52628,-0.5);
\draw[lightgray] (0.866025,-3.5) -- (1.73205,-5);
\draw[lightgray] (0.866025,-3.5) -- (7.79423,-3.5);
\draw[lightgray] (4.04145,-5) -- (6.63953,-0.5);
\draw[lightgray] (1.44338,-2.5) -- (2.88675,-5);
\draw[lightgray] (1.44338,-2.5) -- (8.37158,-2.5);
\draw[lightgray] (2.59808,-0.5) -- (5.19615,-5);
\draw[lightgray] (2.59808,-0.5) -- (9.52628,-0.5);
\draw[lightgray] (3.75278,-0.5) -- (6.35085,-5);
\draw[lightgray] (4.6188,-5) -- (7.21688,-0.5);
\draw[lightgray] (5.7735,-5) -- (8.37158,-0.5);
\draw[lightgray] (2.88675,-5) -- (5.48483,-0.5);
\draw[lightgray] (8.94893,-0.5) -- (9.2376,-1);
\draw[lightgray] (3.4641,-5) -- (6.06218,-0.5);
\draw[lightgray] (2.3094,-1) -- (4.6188,-5);
\draw[lightgray] (2.3094,-1) -- (9.2376,-1);
\draw[lightgray] (0.288675,-4.5) -- (0.57735,-5);
\draw[lightgray] (0.288675,-4.5) -- (7.21688,-4.5);
\draw[lightgray] (1.1547,-3) -- (2.3094,-5);
\draw[lightgray] (1.1547,-3) -- (8.0829,-3);
\draw[lightgray] (0.57735,-5) -- (3.17543,-0.5);
\draw[lightgray] (1.73205,-5) -- (4.33013,-0.5);
\draw[lightgray] (6.63953,-0.5) -- (8.0829,-3);
\draw[lightgray] (1.1547,-5) -- (3.75278,-0.5);
\draw[lightgray] (7.79423,-0.5) -- (8.66025,-2);
\draw[lightgray] (2.3094,-5) -- (4.90748,-0.5);
\draw[lightgray] (7.21688,-0.5) -- (8.37158,-2.5);
\draw[lightgray] (8.37158,-0.5) -- (8.94893,-1.5);
\draw[black, line width=0.2mm, fill=white] (0.57735,-5) circle (0.252);
\draw[black, line width=0.16mm] (0.57735,-5) circle (0.18);
\draw[black, line width=0.2mm, fill=white] (1.1547,-5) circle (0.252);
\draw[black, line width=0.16mm] (1.1547,-5) circle (0.18);
\draw[black, line width=0.2mm, fill=white] (1.73205,-5) circle (0.216);
\draw[black, line width=0.2mm, fill=white] (2.3094,-4) circle (0.216);
\draw[black, line width=0.2mm, fill=white] (2.3094,-5) circle (0.252);
\draw[black, line width=0.16mm] (2.3094,-5) circle (0.18);
\draw[black, line width=0.2mm, fill=white] (2.59808,-4.5) circle (0.252);
\draw[black, line width=0.16mm] (2.59808,-4.5) circle (0.18);
\draw[black, line width=0.2mm, fill=white] (2.88675,-4) circle (0.252);
\draw[black, line width=0.16mm] (2.88675,-4) circle (0.18);
\draw[black, line width=0.2mm, fill=white] (2.88675,-5) circle (0.216);
\draw[black, line width=0.2mm, fill=white] (3.17543,-4.5) circle (0.216);
\draw[black, line width=0.2mm, fill=white] (3.4641,-4) circle (0.216);
\draw[black, line width=0.2mm, fill=white] (3.4641,-5) circle (0.216);
\draw[black, line width=0.2mm, fill=white] (3.75278,-4.5) circle (0.216);
\draw[black, line width=0.2mm, fill=white] (4.04145,-4) circle (0.216);
\draw[black, line width=0.2mm, fill=white] (4.04145,-5) circle (0.216);
\draw[black, line width=0.2mm, fill=white] (4.33013,-4.5) circle (0.216);
\draw[black, line width=0.2mm, fill=white] (4.6188,-2) circle (0.216);
\draw[black, line width=0.2mm, fill=white] (4.6188,-4) circle (0.216);
\draw[black, line width=0.2mm, fill=white] (4.6188,-5) circle (0.216);
\draw[black, line width=0.2mm, fill=white] (4.90748,-4.5) circle (0.216);
\draw[black, line width=0.2mm, fill=white] (5.19615,-2) circle (0.216);
\draw[black, line width=0.2mm, fill=white] (5.19615,-4) circle (0.216);
\draw[black, line width=0.2mm, fill=white] (5.19615,-5) circle (0.252);
\draw[black, line width=0.16mm] (5.19615,-5) circle (0.18);
\draw[black, line width=0.2mm, fill=white] (5.48483,-2.5) circle (0.216);
\draw[black, line width=0.2mm, fill=white] (5.48483,-4.5) circle (0.252);
\draw[black, line width=0.16mm] (5.48483,-4.5) circle (0.18);
\draw[black, line width=0.2mm, fill=white] (5.7735,-2) circle (0.216);
\draw[black, line width=0.2mm, fill=white] (5.7735,-4) circle (0.252);
\draw[black, line width=0.16mm] (5.7735,-4) circle (0.18);
\draw[black, line width=0.2mm, fill=white] (5.7735,-5) circle (0.252);
\draw[black, line width=0.16mm] (5.7735,-5) circle (0.18);
\draw[black, line width=0.2mm, fill=white] (6.06218,-2.5) circle (0.216);
\draw[black, line width=0.2mm, fill=white] (6.06218,-3.5) circle (0.252);
\draw[black, line width=0.16mm] (6.06218,-3.5) circle (0.18);
\draw[black, line width=0.2mm, fill=white] (6.06218,-4.5) circle (0.252);
\draw[black, line width=0.16mm] (6.06218,-4.5) circle (0.18);
\draw[black, line width=0.2mm, fill=white] (6.35085,-2) circle (0.216);
\draw[black, line width=0.2mm, fill=white] (6.35085,-3) circle (0.252);
\draw[black, line width=0.16mm] (6.35085,-3) circle (0.18);
\draw[black, line width=0.2mm, fill=white] (6.35085,-4) circle (0.252);
\draw[black, line width=0.16mm] (6.35085,-4) circle (0.18);
\draw[black, line width=0.2mm, fill=white] (6.35085,-5) circle (0.252);
\draw[black, line width=0.16mm] (6.35085,-5) circle (0.18);
\draw[black, line width=0.2mm, fill=white] (6.63953,-2.5) circle (0.252);
\draw[black, line width=0.16mm] (6.63953,-2.5) circle (0.18);
\draw[black, line width=0.2mm, fill=white] (6.63953,-3.5) circle (0.252);
\draw[black, line width=0.16mm] (6.63953,-3.5) circle (0.18);
\draw[black, line width=0.2mm, fill=white] (6.63953,-4.5) circle (0.252);
\draw[black, line width=0.16mm] (6.63953,-4.5) circle (0.18);
\draw[black, line width=0.2mm, fill=white] (6.9282,-2) circle (0.252);
\draw[black, line width=0.16mm] (6.9282,-2) circle (0.18);
\draw[black, line width=0.2mm, fill=white] (6.9282,-3) circle (0.252);
\draw[black, line width=0.16mm] (6.9282,-3) circle (0.18);
\draw[black, line width=0.2mm, fill=white] (6.9282,-4) circle (0.252);
\draw[black, line width=0.16mm] (6.9282,-4) circle (0.18);
\draw[fill=black] (6.9282,-5) circle (0.216);
\draw[black, line width=0.2mm, fill=white] (7.21688,-2.5) circle (0.252);
\draw[black, line width=0.16mm] (7.21688,-2.5) circle (0.18);
\draw[black, line width=0.2mm, fill=white] (7.21688,-3.5) circle (0.252);
\draw[black, line width=0.16mm] (7.21688,-3.5) circle (0.18);
\draw[fill=black] (7.21688,-4.5) circle (0.216);
\draw[black, line width=0.2mm, fill=white] (7.50555,-2) circle (0.252);
\draw[black, line width=0.16mm] (7.50555,-2) circle (0.18);
\draw[black, line width=0.2mm, fill=white] (7.50555,-3) circle (0.252);
\draw[black, line width=0.16mm] (7.50555,-3) circle (0.18);
\draw[fill=black] (7.50555,-4) circle (0.216);
\draw[black, line width=0.2mm, fill=white] (7.79423,-2.5) circle (0.252);
\draw[black, line width=0.16mm] (7.79423,-2.5) circle (0.18);
\draw[fill=black] (7.79423,-3.5) circle (0.216);
\draw[black, line width=0.2mm, fill=white] (8.0829,-2) circle (0.252);
\draw[black, line width=0.16mm] (8.0829,-2) circle (0.18);
\draw[fill=black] (8.0829,-3) circle (0.216);
\draw[fill=black] (8.37158,-2.5) circle (0.216);
\draw[fill=black] (8.66025,-2) circle (0.216);
\draw[fill=black] (8.94893,-1.5) circle (0.216);
\draw[fill=black] (9.2376,-1) circle (0.216);
\draw[fill=black] (9.52628,-0.5) circle (0.216);
  \end{tikzpicture}
  \end{center}
  \vskip.08in
  \caption{Representation of layer sequence $\layerseq^{\spre}$. There were no empty columns in $\layerseq$, so all unused particles remain unused, while $2$ particles from the topmost layers are moved to form a single column to the top. Particles from $(\nfull_k)_{k \in [h]}$, $(\nbot_k)_{k \in [h]}$ and $(\ntop_k)_{k \in [h]}$ represented by black circles, double circles and single circles respectively.}
  \label{fig:transform3}
\end{subfigure}~~%
\begin{subfigure}{.48\textwidth}
  \vskip.07in
  \begin{center}
  \begin{tikzpicture}[x=0.6cm,y=0.6cm]
  \draw[lightgray] (5.48483,-0.5) -- (7.50555,-4);
\draw[lightgray] (0,-5) -- (2.59808,-0.5);
\draw[lightgray] (0,-5) -- (6.9282,-5);
\draw[lightgray] (1.73205,-2) -- (3.4641,-5);
\draw[lightgray] (1.73205,-2) -- (8.66025,-2);
\draw[lightgray] (2.02073,-1.5) -- (4.04145,-5);
\draw[lightgray] (2.02073,-1.5) -- (8.94893,-1.5);
\draw[lightgray] (6.06218,-0.5) -- (7.79423,-3.5);
\draw[lightgray] (0.57735,-4) -- (1.1547,-5);
\draw[lightgray] (0.57735,-4) -- (7.50555,-4);
\draw[lightgray] (3.17543,-0.5) -- (5.7735,-5);
\draw[lightgray] (4.33013,-0.5) -- (6.9282,-5);
\draw[lightgray] (5.19615,-5) -- (7.79423,-0.5);
\draw[lightgray] (6.35085,-5) -- (8.94893,-0.5);
\draw[lightgray] (4.90748,-0.5) -- (7.21688,-4.5);
\draw[lightgray] (6.9282,-5) -- (9.52628,-0.5);
\draw[lightgray] (0.866025,-3.5) -- (1.73205,-5);
\draw[lightgray] (0.866025,-3.5) -- (7.79423,-3.5);
\draw[lightgray] (4.04145,-5) -- (6.63953,-0.5);
\draw[lightgray] (1.44338,-2.5) -- (2.88675,-5);
\draw[lightgray] (1.44338,-2.5) -- (8.37158,-2.5);
\draw[lightgray] (2.59808,-0.5) -- (5.19615,-5);
\draw[lightgray] (2.59808,-0.5) -- (9.52628,-0.5);
\draw[lightgray] (3.75278,-0.5) -- (6.35085,-5);
\draw[lightgray] (4.6188,-5) -- (7.21688,-0.5);
\draw[lightgray] (5.7735,-5) -- (8.37158,-0.5);
\draw[lightgray] (2.88675,-5) -- (5.48483,-0.5);
\draw[lightgray] (8.94893,-0.5) -- (9.2376,-1);
\draw[lightgray] (3.4641,-5) -- (6.06218,-0.5);
\draw[lightgray] (2.3094,-1) -- (4.6188,-5);
\draw[lightgray] (2.3094,-1) -- (9.2376,-1);
\draw[lightgray] (0.288675,-4.5) -- (0.57735,-5);
\draw[lightgray] (0.288675,-4.5) -- (7.21688,-4.5);
\draw[lightgray] (1.1547,-3) -- (2.3094,-5);
\draw[lightgray] (1.1547,-3) -- (8.0829,-3);
\draw[lightgray] (0.57735,-5) -- (3.17543,-0.5);
\draw[lightgray] (1.73205,-5) -- (4.33013,-0.5);
\draw[lightgray] (6.63953,-0.5) -- (8.0829,-3);
\draw[lightgray] (1.1547,-5) -- (3.75278,-0.5);
\draw[lightgray] (7.79423,-0.5) -- (8.66025,-2);
\draw[lightgray] (2.3094,-5) -- (4.90748,-0.5);
\draw[lightgray] (7.21688,-0.5) -- (8.37158,-2.5);
\draw[lightgray] (8.37158,-0.5) -- (8.94893,-1.5);
\draw[line width=0.2mm] (4.80972,-0.809081) -- (4.42788,-1.19092);
\draw[line width=0.2mm] (4.42788,-0.809081) -- (4.80972,-1.19092);
\draw[line width=0.2mm] (4.80972,-1.80908) -- (4.42788,-2.19092);
\draw[line width=0.2mm] (4.42788,-1.80908) -- (4.80972,-2.19092);
\draw[line width=0.2mm] (5.0984,-0.309081) -- (4.71656,-0.690919);
\draw[line width=0.2mm] (4.71656,-0.309081) -- (5.0984,-0.690919);
\draw[line width=0.2mm] (5.0984,-1.30908) -- (4.71656,-1.69092);
\draw[line width=0.2mm] (4.71656,-1.30908) -- (5.0984,-1.69092);
\draw[black, line width=0.2mm, fill=white] (5.19615,-1) circle (0.216);
\draw[black, line width=0.2mm, fill=white] (5.19615,-2) circle (0.216);
\draw[black, line width=0.2mm, fill=white] (5.48483,-0.5) circle (0.216);
\draw[black, line width=0.2mm, fill=white] (5.48483,-1.5) circle (0.216);
\draw[line width=0.2mm] (5.67575,-2.30908) -- (5.29391,-2.69092);
\draw[line width=0.2mm] (5.29391,-2.30908) -- (5.67575,-2.69092);
\draw[black, line width=0.2mm, fill=white] (5.7735,-1) circle (0.216);
\draw[black, line width=0.2mm, fill=white] (5.7735,-2) circle (0.216);
\draw[black, line width=0.2mm, fill=white] (5.7735,-5) circle (0.252);
\draw[black, line width=0.16mm] (5.7735,-5) circle (0.18);
\draw[black, line width=0.2mm, fill=white] (6.06218,-0.5) circle (0.216);
\draw[black, line width=0.2mm, fill=white] (6.06218,-1.5) circle (0.216);
\draw[black, line width=0.2mm, fill=white] (6.06218,-2.5) circle (0.216);
\draw[black, line width=0.2mm, fill=white] (6.06218,-4.5) circle (0.252);
\draw[black, line width=0.16mm] (6.06218,-4.5) circle (0.18);
\draw[black, line width=0.2mm, fill=white] (6.35085,-1) circle (0.216);
\draw[black, line width=0.2mm, fill=white] (6.35085,-2) circle (0.216);
\draw[line width=0.2mm] (6.54177,-2.80908) -- (6.15993,-3.19092);
\draw[line width=0.2mm] (6.15993,-2.80908) -- (6.54177,-3.19092);
\draw[black, line width=0.2mm, fill=white] (6.35085,-4) circle (0.252);
\draw[black, line width=0.16mm] (6.35085,-4) circle (0.18);
\draw[black, line width=0.2mm, fill=white] (6.35085,-5) circle (0.252);
\draw[black, line width=0.16mm] (6.35085,-5) circle (0.18);
\draw[black, line width=0.2mm, fill=white] (6.63953,-0.5) circle (0.216);
\draw[black, line width=0.2mm, fill=white] (6.63953,-1.5) circle (0.216);
\draw[black, line width=0.2mm, fill=white] (6.63953,-2.5) circle (0.216);
\draw[black, line width=0.2mm, fill=white] (6.63953,-3.5) circle (0.252);
\draw[black, line width=0.16mm] (6.63953,-3.5) circle (0.18);
\draw[black, line width=0.2mm, fill=white] (6.63953,-4.5) circle (0.252);
\draw[black, line width=0.16mm] (6.63953,-4.5) circle (0.18);
\draw[black, line width=0.2mm, fill=white] (6.9282,-1) circle (0.216);
\draw[black, line width=0.2mm, fill=white] (6.9282,-2) circle (0.216);
\draw[black, line width=0.2mm, fill=white] (6.9282,-3) circle (0.252);
\draw[black, line width=0.16mm] (6.9282,-3) circle (0.18);
\draw[black, line width=0.2mm, fill=white] (6.9282,-4) circle (0.252);
\draw[black, line width=0.16mm] (6.9282,-4) circle (0.18);
\draw[fill=black] (6.9282,-5) circle (0.216);
\draw[black, line width=0.2mm, fill=white] (7.21688,-0.5) circle (0.216);
\draw[black, line width=0.2mm, fill=white] (7.21688,-1.5) circle (0.216);
\draw[black, line width=0.2mm, fill=white] (7.21688,-2.5) circle (0.252);
\draw[black, line width=0.16mm] (7.21688,-2.5) circle (0.18);
\draw[black, line width=0.2mm, fill=white] (7.21688,-3.5) circle (0.252);
\draw[black, line width=0.16mm] (7.21688,-3.5) circle (0.18);
\draw[fill=black] (7.21688,-4.5) circle (0.216);
\draw[black, line width=0.2mm, fill=white] (7.50555,-1) circle (0.216);
\draw[black, line width=0.2mm, fill=white] (7.50555,-2) circle (0.252);
\draw[black, line width=0.16mm] (7.50555,-2) circle (0.18);
\draw[black, line width=0.2mm, fill=white] (7.50555,-3) circle (0.252);
\draw[black, line width=0.16mm] (7.50555,-3) circle (0.18);
\draw[fill=black] (7.50555,-4) circle (0.216);
\draw[black, line width=0.2mm, fill=white] (7.79423,-0.5) circle (0.216);
\draw[black, line width=0.2mm, fill=white] (7.79423,-1.5) circle (0.252);
\draw[black, line width=0.16mm] (7.79423,-1.5) circle (0.18);
\draw[black, line width=0.2mm, fill=white] (7.79423,-2.5) circle (0.252);
\draw[black, line width=0.16mm] (7.79423,-2.5) circle (0.18);
\draw[fill=black] (7.79423,-3.5) circle (0.216);
\draw[black, line width=0.2mm, fill=white] (8.0829,-1) circle (0.252);
\draw[black, line width=0.16mm] (8.0829,-1) circle (0.18);
\draw[black, line width=0.2mm, fill=white] (8.0829,-2) circle (0.252);
\draw[black, line width=0.16mm] (8.0829,-2) circle (0.18);
\draw[fill=black] (8.0829,-3) circle (0.216);
\draw[black, line width=0.2mm, fill=white] (8.37158,-0.5) circle (0.252);
\draw[black, line width=0.16mm] (8.37158,-0.5) circle (0.18);
\draw[black, line width=0.2mm, fill=white] (8.37158,-1.5) circle (0.252);
\draw[black, line width=0.16mm] (8.37158,-1.5) circle (0.18);
\draw[fill=black] (8.37158,-2.5) circle (0.216);
\draw[black, line width=0.2mm, fill=white] (8.66025,-1) circle (0.252);
\draw[black, line width=0.16mm] (8.66025,-1) circle (0.18);
\draw[fill=black] (8.66025,-2) circle (0.216);
\draw[black, line width=0.2mm, fill=white] (8.94893,-0.5) circle (0.252);
\draw[black, line width=0.16mm] (8.94893,-0.5) circle (0.18);
\draw[fill=black] (8.94893,-1.5) circle (0.216);
\draw[fill=black] (9.2376,-1) circle (0.216);
\draw[fill=black] (9.52628,-0.5) circle (0.216);
  \end{tikzpicture}
  \end{center}
  \caption{Representation of layer sequence $\layerseq^{\stran}$ (excluding the crosses). Particles from $(\ntop_k)_{k \in [h]}$ are shifted upward, and two full columns are created from the $26$ particles in $(\nbot_k)_{k \in [h]}$, with $6$ particles from $(\nbot_k)_{k \in [h]}$ unused. These $6$ particles are added back to the positions marked with crosses to form the layer sequence $\layerseq^{\spost}$.}
  \label{fig:transform4}
\end{subfigure}%
\caption{Transformation from the layer sequence $\layerseq$ to $\layerseq^{\spost}$, with the layer sequences represented as right-justified configurations.}
\label{fig:transformation}
\end{figure*}

To understand the following transformations from $\layerseq$ to $\layerseq^{\spre}$ to $\layerseq^{\stran}$ to $\layerseq^{\spost}$, we can visualize layer sequences as ``right-justified'' configurations in $\Lambda$ (see Figure~\ref{fig:transformation}). The particles of these right-justified configurations can be grouped into columns, where the site on the $k$\textsuperscript{th} layer of column $j$ (counted from the right) is filled if and only if its layer sequence has at least $j$ particles in layer $k$.

The pre-processing step, which creates an intermediate layer sequence $\layerseq^{\spre} = (n^{\spre}_k)_{k \in [h]} \in \LayerSet^{0,h}$, is solely to account for specific edge cases in the eventual transformation to $\layerseq^{\spost}$.
To define $\layerseq^{\spre}$ starting from $\layerseq$, we first add in the unused $n-|\sigma|$ particles to the layer sequence. We denote $u = w - \max_{k \in [h]}\{n_k\} - 1$, the number of empty columns that we can fill with unused particles. We thus take $\min\left\{u, \floor{\frac{n-|\sigma|}{h}}\right\}$ empty columns and fill them with unused particles (which is equivalent to increasing each entry of the layer sequence by said amount of columns). The next step is to guarantee at least one completely filled column. If there are no completely filled columns even after adding the unused particles, $h-\toplayer$ particles are taken one at the time from the highest layer with at least two particles, and placed to form a single column of particles. This transformation is illustrated in Figure~\ref{fig:transform3}.

The particles of the layer sequence $\layerseq^{\spre}$ can be divided into three layer sequences $(\nfull_k)_{k \in [h]}$, $(\nbot_k)_{k \in [h]}$ and $(\ntop_k)_{k \in [h]}$ by column (so that $n^{\spre}_k = \nfull_k + \nbot_k + \ntop_k$ for all $k \in [h]$). 
The first layer sequence $(\nfull_k)_{k \in [h]}$ is represents the particles belonging to the $\Jfull = n^{\spre}_h \geq 1$ full columns of the right-justified configuration. 
The remaining columns are then identified as either \emph{bottom-supported} or \emph{non-bottom-supported}.
The bottom-supported columns are those which have some value $k' \in \{0,1,2,\ldots,h-1\}$ where layers $\{1,2,\ldots,k'\}$ in the column are occupied and the remaining layers are unoccupied (an alternative definition is having no unoccupied site below an occupied site). 
Particles from bottom-supported columns are put into $(\nbot_k)_{k \in [h]}$, while those from non-bottom-support columns go into $(\ntop_k)_{k \in [h]}$.
Alternatively, for each value of $k$ in $[h]$ we can define formally
\begin{itemize}
\item $\nfull_k = \Jfull$,
\item $\nbot_k = \Big|\Big\{j \in [w]: j > \Jfull \text{ and } n^{\spre}_{k'+1} \geq j \implies n^{\spre}_{k'} \geq j \text{ for all } k' \in [h-1]\Big\}\Big|$, and
\item $\ntop_k = n^{\spre}_k - \nbot_k - \nfull_k$.
\end{itemize}

To construct $\layerseq^{\stran}$, we apply different transformations to the particles corresponding to each of these layer sequences to construct a new configuration $\sigma^{\stran}$.
The particles in $(\nbot_k)_{k \in [h]}$ are transformed into $\Jbot = \floor{\frac{1}{h}\sum_{k=1}^h \nbot_k}$ full columns while the particles in $(\ntop_k)_{k \in [h]}$ have their particles shifted to the tops of their respective columns. These non-bottom-supported columns are placed directly next to the $\Jfull + \Jbot$ filled columns in the right-justified configuration $\sigma^{\stran}$ (see Figure~\ref{fig:transform4}). Note that there may be up to $h-1$ unused particles from $(\nbot_k)_{k \in [h]}$ after this transformation. The layer sequence of this new configuration will be denoted by $\layerseq^{\stran} = (n^{\stran}_k)_{k \in [h]}$.

To describe this transformation more precisely in terms of layer sequences, for each $i \in [w]$, we denote $\coltop_i = |\{k \in [h]: \ntop_k \geq i\}|$, the number of particles in the $i$\textsuperscript{th} non-bottom-supported column. With $\Jfull$ and $\Jbot$ defined as before, we can then define the layer sequence $\layerseq^{\stran} = (n^{\stran}_k)_{k \in [h]}$, where for each $k \in [h]$,
\[n^{\stran}_k = \Jfull + \Jbot + \Big|\Big\{i \in [w] : \coltop_i \geq h - k + 1\Big\}\Big|.\]
We note that this is a valid transformation (into a layer sequence $\layerseq^{\stran} \in \LayerSet^{0,h}$) as for each $k \in [h]$, we know that $\Jbot \leq \max_{k'}\{\nbot_{k'}\}$ and $\Big|\Big\{i \in [w] : \coltop_i \geq h - k + 1\Big\}\Big| \leq \max_{k'}\{\ntop_{k'}\}$, so we must have $n^{\stran}_k \leq n^{\spre}_k \leq w - 1$.

We include one final step in the transformation to ensure that the change in scent intensity from $\layerseq^{\spre}$ to $\layerseq^{\spost}$ is non-negative.
In particular, in the transformation from $\layerseq^{\spre}$ to $\layerseq^{\spost}$, for the sake of simplicity, we had transformed the particles from $(\nbot_k)_{k \in [h]}$ into $\Jbot = \floor{\frac{1}{h}\sum_{k=1}^h \nbot_k}$ full columns, leaving $m \leq h-1$ particles unused. As a final post-processing step, we can add these $m$ particles back to the final layer sequence, by adding one particle each to the topmost $m$ rows that have less than $w-1$ particles (note that the final layer sequence is non-decreasing). This gives a positive net change in scent intensity for the particles from $(\nbot_k)_{k \in [h]}$, while potentially increasing the value of $\Bhat_h$ by at most $O_h(1)$ (using the definition of $\Bhat_h$ in Lemma~\ref{lem:boundarylengthlowerbound}).

\subparagraph*{Change in scent intensity:} We compute the overall change in the two transformations to show Lemma~\ref{lem:bridgetransformscentintensity}. It is important to note that in our analysis, we do not assume that $\varphi$ is constant with respect to $h$.
\begin{lemma}[Scent Intensity]
\label{lem:bridgetransformscentintensity}
The changes in scent intensities in the transformations from $\layerseq$ to $\layerseq^{\spre}$ and from $\layerseq^{\spre}$ to $\layerseq^{\spost}$ are non-negative and have the following lower bounds:
\begin{align*}
\Shat(\layerseq^{\spre}) - \Shat(\layerseq) &\geq \varphi \cdot \min\left\{u, \floor[\Big]{\frac{n-\nin}{h}}\right\} \\
\Shat(\layerseq^{\spost}) - \Shat(\layerseq^{\spre}) &\geq \nin \frac{\varphi}{h}\left(1 - \frac{\toplayer}{h}\right) + \varphi \cdot O_h(1),
\end{align*}
where $u = w - \max_{k \in [h]}\{n_k\} - 1$ and $O_h(1)$ represents a function that does not depend on the specific choice of starting configuration $\sigma$.
\end{lemma}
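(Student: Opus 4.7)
I plan to establish the two bounds separately, in each case decomposing the particles of $\layerseq^{\spre}$ into the three subsequences $(\nfull_k)$, $(\nbot_k)$, $(\ntop_k)$ and tracking each group through the transformation. A preliminary observation to record first is that every particle not in a full column of $\layerseq^{\spre}$ sits in a layer of index at most $\toplayer$: pre-processing only ever moves particles upward, and all newly introduced particles go into full columns.

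For the first inequality I would separate the contributions from the two pre-processing steps. The first step adds $\min\{u,\floor{(n-\nin)/h}\}$ full columns of unused particles, each contributing exactly $\varphi = \sum_{k=1}^{h}\Sh(k)$ to the scent. The optional second step only moves particles upward into previously empty layers, so by monotonicity of $\Sh$ it contributes a non-negative additional change. Summing yields the claimed lower bound.

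For the second inequality, the principal technical ingredient I would prove first is a convexity estimate: for any nondecelerating $\Sh$ with $\Sh(1)=0$, the ratio $F(k')/(k')^2$ is nondecreasing in $k'$, where $F(k')=\sum_{k=1}^{k'}\Sh(k)$. This can be obtained by plugging the secant bound $\Sh(k) \le \tfrac{k-1}{k'}\Sh(k'+1)$ (a consequence of $\Sh(1)=0$ and convexity) into the algebraic identity for $F(k'+1)/(k'+1)^2 - F(k')/(k')^2$; the consequence is $F(k') \le (k')^2 \varphi / h^2$ for every $k' \le h$. Applied column by column, this yields an average per-particle scent of at most $k'_j \varphi/h^2 \le \toplayer \varphi/h^2$ for a bottom-supported column of height $k'_j$, so the gain of the $(\nbot_k)$ group under the transformation into $\Jbot$ full columns (whose per-particle scent is $\varphi/h$) is at least $\Nbotsum (\varphi/h)(1-\toplayer/h)$, up to the single-column rounding loss from $\Jbot=\floor{\Nbotsum/h}$ that is absorbed into $\varphi \cdot O_h(1)$. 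Post-processing reinstates the $m \le h-1$ leftover particles at topmost layers, contributing a further non-negative change also absorbed into $\varphi\cdot O_h(1)$.

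The main obstacle is the $(\ntop_k)$ contribution, since these particles occupy arbitrary (non-consecutive) layers within each non-bottom-supported column and so do not fit the per-column convexity argument used for $(\nbot_k)$. My plan is to enumerate the particles of the $i$-th such column as $k_1 < \cdots < k_{\coltop_i} \le \toplayer$ and pair the $j$-th largest original layer $k_{\coltop_i-j+1} \le \toplayer - j + 1$ with its post-shift position $h - j + 1$, producing a shift of at least $h - \toplayer$ per particle. The per-particle scent gain $\Sh(h-j+1) - \Sh(\toplayer-j+1)$ can then be written as a telescoping sum of discrete derivatives $\Delta \Sh$ and lower-bounded using that $\Delta \Sh$ is nondecreasing (the second-order part of the nondecelerating hypothesis); summing over $i$ and $j$, and then combining with $\Nbotsum + \Ntopsum \le \nin$, produces the remaining $\Ntopsum(\varphi/h)(1-\toplayer/h)$ term needed to close the bound.
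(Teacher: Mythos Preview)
Your overall strategy is a genuinely different and more elementary route than the paper's. The paper invokes a technical convex-approximation result (its Lemma~\ref{lem:convexlemma}) to replace $\Sh$ by a continuous convex function $\Scont$ and then works with integrals; for the non-bottom-supported columns it needs a somewhat delicate two-bound argument, combining a crude bound $(h-\toplayer-1)\sum_i \Delta_i$ with an integral bound carrying an additive error $\sum_i \Delta_i$, and splitting cases on which dominates. Your discrete convexity fact that $k'\mapsto F(k')/(k')^2$ is nondecreasing (with $F(k')=\sum_{k\le k'}\Sh(k)$) bypasses all of this for the bottom-supported columns, and that part of your plan is correct and cleaner.

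However, the $(\ntop_k)$ step as you sketch it has a gap. Writing the per-particle gain as $\Sh(h-j+1)-\Sh(\toplayer-j+1)=\sum_{l=\toplayer-j+1}^{h-j}\Delta\Sh(l)$ and invoking only ``$\Delta\Sh$ nondecreasing'' does \emph{not} yield a per-particle lower bound of $(\varphi/h)(1-\toplayer/h)$: take $h=10$, $\toplayer=8$, $\Delta\Sh(l)=\mathbf 1[l\ge 4]$, so $\Sh(k)=\max\{0,k-4\}$; then for $j=\coltop_i=7$ one gets $\Sh(4)-\Sh(2)=0$, whereas $(\varphi/h)(1-\toplayer/h)=0.42$. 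The claim \emph{does} hold after summing over $j$, but to see this you must reuse your own $F(k')/(k')^2$ bound: setting $g(j)=\Sh(h-j+1)-\Sh(\toplayer-j+1)$, the fact that $\Delta\Sh$ is nondecreasing makes $g$ nonincreasing, so $\tfrac{1}{c}\sum_{j\le c}g(j)\ge \tfrac{1}{\toplayer}\sum_{j\le \toplayer}g(j)=\tfrac{1}{\toplayer}\bigl(\varphi-F(h-\toplayer)-F(\toplayer)\bigr)\ge \tfrac{2\varphi(h-\toplayer)}{h^2}$, which suffices. You should spell this out. Finally, the inequality you quote at the end points the wrong way: $\Nbotsum+\Ntopsum\le \nin$ would give an upper bound, not the lower bound you need. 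What you actually have is $\Nbotsum+\Ntopsum\ge \nin-h$ (the shortfall is at most the single full column created in pre-processing), and the resulting deficit $h\cdot(\varphi/h)(1-\toplayer/h)\le \varphi$ is absorbed into the $\varphi\cdot O_h(1)$ term.
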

%
\begin{proof}
For the transformation from $\layerseq$ to $\layerseq^{\spre}$, the number of full columns added to the $\layerseq$ is $\min\left\{u, \floor[\Big]{\frac{n-\nin}{h}}\right\}$, and neither the adding of full columns nor the shifting of $h-D$ particles to the top $h-D$ rows can decrease the total scent intensity.

The transformation from $\layerseq^{\spre}$ to $\layerseq^{\spost}$ requires a more involved analysis. The construction of $\layerseq^{\spost}$ from $\layerseq^{\stran}$ explains why $\Shat(\layerseq^{\spost}) - \Shat(\layerseq^{\spre}) \geq 0$, but for the main lower bound, we focus on the transformation from $\layerseq^{\spre}$ to $\layerseq^{\stran}$, while noting that $\Shat(\layerseq^{\spost}) - \Shat(\layerseq^{\stran}) \geq 0$.

In the transformation from $\layerseq^{\spre}$ to $\layerseq^{\stran}$,
we will make the claim that for particles from both bottom-supported columns and non-bottom-supported columns, the average increase in the total scent intensity per particle has a lower bound of $\frac{\varphi}{h}\left(1 - \frac{\toplayer}{h}\right)$, with a small error term.
To show this, let $\Scont$ be a convex function defined from $\Sh$ using Lemma~\ref{lem:convexlemma}
and note that $\varphi = \sum_{k=1}^h \Sh(k) = \int_0^h \Scont(y)dy$.
%
%
For the particles from $(\nbot_k)_{k \in [h]}$, a lower bound for the total increase in scent intensity from $\layerseq^{\spre}$ to $\layerseq^{\stran}$ is given by
%
%
\begingroup
\allowdisplaybreaks
\begin{align*}
\Jbot\sum_{k=1}^h \Sh(k) - \sum_{i \in [w]}\sum_{k=1}^{\colbot_i}\Sh(k)
&= \left(\sum_{i \in [w]} \colbot_i + O(h)\right) \cdot \frac{1}{h}\sum_{k=1}^h \Sh(k) - \sum_{i \in [w]}\colbot_i \frac{1}{\colbot_i} \sum_{k=1}^{\colbot_i}\Sh(k) \\
&= \sum_{i \in [w]} \colbot_i\left(\frac{1}{h}\sum_{k=1}^h \Sh(k) - \frac{1}{\colbot_i} \sum_{k=1}^{\colbot_i}\Sh(k) \right) + \varphi \cdot O_h(1) \\
&\geq \sum_{i \in [w]} \colbot_i \left( \frac{1}{h}\int_{0}^{h}\Scont(y)dy - \frac{1}{\colbot_i}\int_{0}^{\colbot_i}\Scont(y)dy \right) + \varphi \cdot O_h(1) \\
&= \sum_{i \in [w]} \colbot_i \int_{0}^{1}\left(\Scont(yh) - \Scont(y\colbot_i) \right)dy + \varphi \cdot O_h(1) \\
&\geq \left(\sum_{k \in [h]} \nbot_k\right) \int_{0}^{1}\left(\Scont(yh) - \Scont(y\toplayer) \right)dy + \varphi \cdot O_h(1).
\end{align*}
\endgroup

For the particles from $(\ntop_k)_{k \in [h]}$, to construct a lower bound for the total increase in scent intensity from $\layerseq^{\spre}$ to $\layerseq^{\stran}$, we denote $\Delta_i = \Delta \Sh(\toplayer - \colbot_i) = \Sh(\toplayer - \colbot_i + 1) - \Sh(\toplayer - \colbot_i)$. We construct this lower bound by combining two lower bounds, the first of which is:
\begin{align*}
&\sum_{i \in [w]} \sum_{k=h-\coltop_i+1}^h \Sh(k) - \sum_{i \in [w]} \sum_{k=\toplayer-\coltop_i+1}^\toplayer \Sh(k) 
= \sum_{i \in [w]} \sum_{k=\toplayer-\coltop_i+1}^\toplayer \left(\Sh(k+h-\toplayer) - \Sh(k)\right) \\
&= \sum_{i \in [w]} \sum_{k=\toplayer-\coltop_i+1}^\toplayer \sum_{j=k}^{k+h-\toplayer-1} \Delta \Sh(j) 
\geq \sum_{i \in [w]} \coltop_i (h-\toplayer-1) \Delta_i
\geq (h-\toplayer-1) \sum_{i \in [w]} \Delta_i.
\end{align*}
The second lower bound we construct also makes use of Lemma~\ref{lem:convexlemma}:
%
%
%
\begin{align*}
&\sum_{i \in [w]} \sum_{k=h-\coltop_i+1}^h \Sh(k) - \sum_{i \in [w]} \sum_{k=\toplayer-\coltop_i+1}^\toplayer \Sh(k) \\
&\geq \sum_{i \in [w]} \left( \coltop_i \left(\frac{1}{\coltop_i}\int_{0}^{\coltop_i} \Scont(h-\coltop_i+y)dy - \frac{1}{\coltop_i}\int_{0}^{\coltop_i} \Scont(\toplayer-\coltop_i+y) dy \right) - \frac{1}{8}\Delta_i \right) \\
&\geq \sum_{i \in [w]} \left( \coltop_i \int_{0}^{1} \left(\Scont(\toplayer-\coltop_i+y\coltop_i + (h-\toplayer)) - \Scont(\toplayer-\coltop_i+y\coltop_i)\right) dy - \frac{1}{8}\Delta_i \right) \\
&\geq \sum_{i \in [w]} \left( \coltop_i \int_{0}^{1} \left(\Scont(y\toplayer + (h-\toplayer)) - \Scont(y\toplayer)\right) dy - \frac{1}{8}\Delta_i \right) \\
&\geq \left(\sum_{k \in [h]} \ntop_k\right) \int_{0}^{1} \left(\Scont(yh) - \Scont(y\toplayer)\right) dy - \sum_{i \in [w]}\Delta_i,
\end{align*}
where the second last line is due to $\Scont$ being convex and thus
having a nondecreasing subgradient, and so for any $x \in (0,h)$ and $a_1,a_2 \in [0,h-x]$, $a_1 < a_2$, we have $\Scont(a_1+x) - \Scont(a_1) \leq \Scont(a_2+x) - \Scont(a_2)$.
By using the former bound in the case where $(h-\toplayer-1)\sum_{i \in [w]}\Delta_i \geq \left(\sum_{k \in [h]} \ntop_k\right) \int_{0}^{1} \left(\Scont(yh) - \Scont(y\toplayer)\right) dy$ and the second lower bound in the case where $(h-\toplayer-1)\sum_{i \in [w]}\Delta_i < \left(\sum_{k \in [h]} \ntop_k\right) \int_{0}^{1} \left(\Scont(yh) - \Scont(y\toplayer)\right) dy$, we can conclude that:
\begin{align*}
&\sum_{i \in [w]} \sum_{k=h-\coltop_i+1}^h \Sh(k) - \sum_{i \in [w]} \sum_{k=\toplayer-\coltop_i+1}^\toplayer \Sh(k) \\
&\geq \left(\sum_{k \in [h]} \ntop_k\right) \left(\int_{0}^{1} \left(\Scont(yh) - \Scont(y\toplayer)\right) dy \right)\left(1 - \frac{1}{h-\toplayer-1}\right) \\
&\geq \left(\sum_{k \in [h]} \ntop_k\right) \int_{0}^{1} \left(\Scont(yh) - \Scont(y\toplayer)\right) dy + \varphi \cdot O_h(1).
\end{align*}
~

\noindent
As $\Scont$ has a non-decreasing gradient, we have
\[\Scont(y) \leq y\frac{\Scont(\toplayer)}{\toplayer} \text{ for } y \in [0,\toplayer] \text{ and }
\Scont(y) \geq y\frac{\Scont(\toplayer)}{\toplayer} \text{ for } y \in [\toplayer,h],\]
which gives us
\[\int_0^{\toplayer} \Scont(y)dy \leq \frac{\Scont(\toplayer)}{\toplayer}\cdot \frac{\toplayer^2}{2} \text{ and }
\int_{\toplayer}^{h} \Scont(y)dy \geq \frac{\Scont(\toplayer)}{\toplayer}\cdot \frac{h^2 - \toplayer^2}{2}.\]
thus
\begin{align*}
\frac{\int_{0}^{h} \Scont(y)dy}{\int_{0}^{\toplayer} \Scont(y)dy}
= 1 + \frac{\int_{\toplayer}^{h} \Scont(y)dy}{\int_{0}^{\toplayer} \Scont(y)dy}
\geq 1 + \frac{h^2 - \toplayer^2}{\toplayer^2}
= \frac{h^2}{\toplayer^2},
\end{align*}
which gives us a lower bound on the following integral
\begin{align*}
\int_{0}^{1} \left(\Scont(yh) - \Scont(y\toplayer)\right) dy 
&=\frac{1}{h}\int_{0}^{h} \Scont(y)dy - \frac{1}{\toplayer}\int_{0}^{\toplayer} \Scont(y)dy\\
&\geq \frac{1}{h}\int_{0}^{h} \Scont(y)dy - \frac{\toplayer^2}{\toplayer h^2} \int_{0}^{h} \Scont(y)dy\\
&= \frac{\varphi}{h}\left(1 - \frac{\toplayer}{h}\right),
\end{align*}
which will give us lower bounds for the per-particle increase in scent intensity from $\layerseq^{\spre}$ to $\layerseq^{\stran}$ for particles from both $(\nbot_k)_{k \in [h]}$ and $(\ntop_k)_{k \in [h]}$ (which comprises all but $h-1$ particles from $\sigma$).
This gives us the lower bound for $\Shat(\layerseq^{\stran}) - \Shat(\layerseq^{\spre})$ from the lemma.
\end{proof}

\subparagraph*{Change in boundary length:} Instead of computing the boundary lengths for configurations directly, we compare the (tight) boundary length lower bounds that are defined in Lemma~\ref{lem:boundarylengthlowerbound} between the initial and final layer sequences $\layerseq$ and $\layerseq^{\spost}$.
\begin{lemma}[Boundary Length]
\label{lem:bridgetransformboundarylength}
In the transformation from $\layerseq \in \LayerSet^{0,D}$ to $\layerseq^{\spost} \in \LayerSet^{0,h}$, we have:
\begin{align*}
\Bhat_h(\layerseq^{\spost}) - \Bhat_D(\layerseq)
\leq 4(h-\toplayer) - 2\max\left\{w-u-D, \left(\frac{\nin}{h} - \frac{h}{2}\right)\right\} + O_h(1),
\end{align*}
where $O_h(1)$ represents a function that does not depend on the specific choice of starting configuration $\sigma$.
Furthermore, by Lemma~\ref{lem:boundarylengthlowerbound} there exists a configuration $\tau$
with layer sequence $\layerseq^{\spost}$ such that $B(\tau) - B(\sigma) \leq \Bhat_h(\layerseq^{\spost}) - \Bhat_D(\layerseq)$, and thus has the same upper bound.
\end{lemma}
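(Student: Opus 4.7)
I propose to track how $\Bhat$ evolves through the intermediate layer sequences $\layerseq \to \layerseq^{\spre} \to \layerseq^{\stran} \to \layerseq^{\spost}$, using Lemma~\ref{lem:boundarylengthlowerbound} to evaluate each $\Bhat_h$ (or $\Bhat_\toplayer$) directly from the layer differences $\Delta n_k := n_{k+1}-n_k$. Recall that $D_k = 4$ when $\Delta n_k = 0$, $D_k = 4\Delta n_k + 2$ when $\Delta n_k \geq 1$, and $D_k = 2$ when $\Delta n_k \leq -1$, so the boundary change at each stage can be read off from how the layer sequence is modified. The goal is to show that the unavoidable ``extension cost'' is roughly $4(h-\toplayer)$, and that the compacting done in the middle stage more than makes up for the width or area already present in $\sigma$.

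\textbf{Pre-processing stage.} For $\layerseq \to \layerseq^{\spre}$, either (a) $c = \min\{u, \floor{(n-\nin)/h}\} \geq 1$ copies of a full column are added so that each $n_k$ uniformly increases by $c$, or (b) when $c = 0$, the highest $h - \toplayer$ particles are redistributed to form one single column from layer $\toplayer+1$ to layer $h$. In case (a) the differences $\Delta n_k$ are preserved for $k < \toplayer$, $\Delta n_\toplayer$ becomes $-n_\toplayer \leq -1$, and $\Delta n_k = 0$ for $k > \toplayer$, yielding the direct evaluation $\Bhat_h(\layerseq^{\spre}) - \Bhat_\toplayer(\layerseq) = 2c + 4(h-\toplayer) - 2$. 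Case (b) gives the same $4(h-\toplayer)$ leading order together with an $O(1)$ additive correction from the reshuffled particles, which is independent of $\sigma$.

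\textbf{Main transformation and post-processing.} For $\layerseq^{\spre} \to \layerseq^{\stran}$, the key structural observation is that $\layerseq^{\stran}$ consists of $A := \Jfull + \Jbot$ full columns together with a staircase of $W$ non-bottom-supported columns shifted to the top, so that $n_1^{\stran} = A$, $n_h^{\stran} = A + W$, and $(n_k^{\stran})$ is non-decreasing. Telescoping the $D_k^{\stran}$ formula then yields $\Bhat_h(\layerseq^{\stran}) \leq 2A + 4W + 4h + O(1)$. I will produce the savings term $-2\max\{w-u-\toplayer,\, \nin/h - h/2\}$ by comparing this upper bound against two complementary lower bounds on $\Bhat_h(\layerseq^{\spre})$: one proportional to the width $\max_k n_k^{\spre} = w - u - 1$, which dominates when the starting configuration is wide and yields the $w - u - \toplayer$ branch of the maximum; and one driven by the total particle count $\nin$, exploiting that many particles force either many full columns in $\layerseq^{\stran}$ (a small contribution from $W$) or large staircase steps in $\layerseq^{\spre}$ (a large lower bound on $\Bhat$), which yields the $\nin/h - h/2$ branch. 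Finally, adding the at most $h-1$ post-processing particles to distinct top rows of $\layerseq^{\stran}$ perturbs each affected $\Delta n_{k-1}$ and $\Delta n_k$ by at most one, changing $\Bhat_h$ by a bounded amount per particle and a total amount that is a function of $h$ alone, and so can be absorbed into $O_h(1)$.

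\textbf{Main obstacle.} The hardest part is producing the $\max\{w-u-\toplayer,\, \nin/h - h/2\}$ savings uniformly across all configurations. Tall, narrow configurations give weak width-based savings, while short, sparse configurations give weak area-based savings, so in every regime the proof must have at least one of the two alternatives available. Balancing the two lower bounds on $\Bhat_h(\layerseq^{\spre})$, identifying which branch applies in each regime, and simultaneously keeping track of the various constant-size drifts introduced by the pre- and post-processing while verifying that they do not depend on the specific choice of $\sigma$, is the crux of the argument.
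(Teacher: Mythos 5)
Your plan correctly identifies the overall shape of the argument (track $\Bhat$ through the pipeline; extract the savings term by balancing a width-based lower bound on $\Bhat_D(\layerseq)$ against an area-based one), and your two cases for the lower bound on $n_1 + \extplus$ match the paper's: one branch uses $n_1 + \extplus \geq \max_k n_k \geq w-u-1$, the other uses $\nin \leq (n_1+\extplus)\toplayer - \tfrac{\toplayer^2-\toplayer}{2}$, which gives $n_1+\extplus \geq \tfrac{\nin}{\toplayer} + \tfrac{\toplayer}{2} - \tfrac12$. However there are two concrete gaps.

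First, your upper bound on the middle stage is too loose and not quite the right object. For a \emph{nondecreasing} layer sequence $\layerseq^{\stran}$, the telescoping identity $\extplus^{\stran} = (n^{\stran}_h - n^{\stran}_1) + (h-1)$ lets all the width terms cancel against the $-2n^{\stran}_h$ correction, giving $\Bhat_h(\layerseq^{\stran}) = 4h - 2 + 2\extminus^{\stran}$ --- there is no residual $2A$ or $4W$. Your claimed bound $\Bhat_h(\layerseq^{\stran}) \leq 2A + 4W + 4h + O(1)$ appears to drop the $-2n^{\stran}_h$ correction, which costs you roughly $2(\Jfull+\Jbot) = \Theta(h)$; there is no corresponding slack in $\Bhat_D(\layerseq)$ to absorb this, so the savings argument will not close with that bound.

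Second, and more importantly, the paper's proof hinges on the structural inequality $\extminus^{\stran} \leq \extminus^{\spre} \leq \extminus$: the ``staircase defects'' of the compacted sequence are dominated by those of the original. Concretely, each pair of adjacent non-bottom-supported columns in the right-justified $\sigma^{\stran}$ with equal height is traced back to a unique pair of adjacent columns in $\sigma^{\spre}$ and to a row where the layer sequence jumps by at least $2$, which is exactly what $\extminus^{\spre}$ counts. This is what lets the $2\extminus^{\stran}$ in the new $\Bhat_h$ cancel against the $2\extminus$ term that sits inside the lower bound $\Bhat_D(\layerseq) = 2n_1 + 2D + 2\extminus + 2\extplus$. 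Your plan has no analogue of this cancellation; without it, the compacted configuration's defects could in principle be large and your accounting would leak. The paper also avoids the extra $2c$ you pick up in the pre-processing stage by comparing $\Bhat_D(\layerseq)$ to $\Bhat_h(\layerseq^{\stran})$ directly rather than going stage by stage through $\layerseq^{\spre}$; this matters because $c$ can be $\Theta(h)$, and it is only the nondecreasing structure of $\layerseq^{\stran}$ that guarantees the $2c$ is recovered. So the route you sketch is not merely messier --- as written, it has a hole that the $\extminus^{\stran} \leq \extminus$ comparison is specifically designed to plug.
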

%
\begin{proof}
Our analysis focuses almost entirely on the comparison between layer sequences $\layerseq$ and $\layerseq^{\stran}$, by noting that the post-processing step produces an insignificant change in boundary length, in the sense that:
\[\Bhat_h(\layerseq^{\spost}) = \Bhat_h(\layerseq^{\stran}) + O_h(1).\]

\noindent
To analyze the difference between $\Bhat_D(\layerseq)$ and $\Bhat_h(\layerseq^{\stran})$, we begin with the following definitions:
\begin{multicols}{2}
\begin{itemize}
\item $\extminus = \sum_{k=1}^{\toplayer-1} \max\{n_{k+1} - n_k - 1, 0\}$
\item $\extplus = \sum_{k=1}^{\toplayer-1} \max\{n_{k+1} - n_k + 1, 0\}$
\item $\extminus^{\stran} = \sum_{k=1}^{h-1} \max\{n^{\stran}_{k+1} - n^{\stran}_k - 1, 0\}$
\item $\extplus^{\stran} = \sum_{k=1}^{h-1} \max\{n^{\stran}_{k+1} - n^{\stran}_k + 1, 0\}$
\end{itemize}
\end{multicols}
\noindent
This allows us to write $\Bhat_D(\layerseq) = 2n_1 + 2 + 2\toplayer + 2\extminus + 2\extplus$ and $\Bhat_h(\layerseq^{\stran}) = 2n^{\stran}_1 + 2 - 2n^{\stran}_h + 2h + 2\extminus^{\stran} + 2\extplus^{\stran}$ (referring to their expressions in Lemma~\ref{lem:boundarylengthlowerbound}).
%
%
To lower bound $\Bhat_D(\layerseq)$, we first note that
\begin{align*}
\nin
&= \sum_{m=1}^\toplayer(n_m + n_1 - n_1)
= \toplayer n_1 + \sum_{m=1}^\toplayer \sum_{k=1}^{m-1}(n_{k+1} - n_k + 1 - 1) \\
&= \toplayer n_1 - \left(\frac{\toplayer^2}{2} - \frac{\toplayer}{2}\right) + \sum_{m=1}^\toplayer \sum_{k=1}^{m-1}(n_{k+1} - n_k + 1) \\
&\leq \toplayer n_1 - \left(\frac{\toplayer^2}{2} - \frac{\toplayer}{2}\right) + \sum_{m=1}^\toplayer \extplus = (n_1 + \extplus)\toplayer - \left(\frac{\toplayer^2}{2} - \frac{\toplayer}{2}\right).
\end{align*}
In addition, we also have that
\begin{align*}
n_1 + \extplus = n_1 + \sum_{k=1}^{\toplayer-1} \max\{n_{k+1} - n_k + 1, 0\} \geq \max_{k \in [h]} \{n_k\} \geq w - u - 1.
\end{align*}
This means that
\begin{align*}
\Bhat_D(\layerseq)
&\geq 2n_1 + 2\toplayer + 2\extminus + 2\extplus \\
&\geq 2\toplayer + 2\extminus + 2\max\left\{w-u-1, \left(\frac{\nin}{\toplayer} + \frac{\toplayer}{2} - \frac{1}{2}\right)\right\} \\
&= 4\toplayer + 2\extminus + 2\max\left\{w-u-\toplayer, \left(\frac{\nin}{\toplayer} - \frac{\toplayer}{2}\right)\right\} + O_h(1)\\
&\geq 4\toplayer + 2\extminus + 2\max\left\{w-u-\toplayer, \left(\frac{\nin}{h} - \frac{h}{2}\right)\right\} + O_h(1).
\end{align*}

\noindent
We then compute an upper bound on $\Bhat_h(\layerseq^{\stran})$:
\begin{align*}
\Bhat_h(\layerseq^{\stran})
&= 2n^{\stran}_1 - 2n^{\stran}_h + 2h + 2\extminus^{\stran} + 2\extplus^{\stran} \\
&= 2h + 2\extminus^{\stran} + 2\sum_{k=1}^{h-1}\left(\max\{n^{\stran}_{k+1} - n^{\stran}_k + 1, 0\} - (n^{\stran}_{k+1} - n^{\stran}_k)\right) \\
&\leq 2h + 2\extminus^{\stran} + 2h \text{ (as $n^{\stran}_{k+1} - n^{\stran}_k \geq 0$ for all $k$).}
\end{align*}

Recalling the intermediate layer sequence $\layerseq^{\spre}$ and defining $\extminus^{\spre} = \sum_{k=1}^{h-1} \max\{n^{\spre}_{k+1} - n^{\spre}_k - 1, 0\}$, we show that $\extminus^{\stran} \leq \extminus^{\spre} \leq \extminus$. It is easy to check that $\extminus^{\spre} \leq \extminus$, as we have $\extminus = \sum_{k=1}^{\toplayer-1} \max\{n_{k+1} - n_k - 1, 0\} = \sum_{k=1}^{h-1} \max\{n_{k+1} - n_k - 1, 0\}$ because $\layerseq \in \LayerSet^{0,D}$, and all actions done in the construction of $\layerseq^{\spre}$ from $\layerseq$ do not increase the value of $\extminus$.

To show that $\extminus^{\stran} \leq \extminus^{\spre}$, we make the observation that as $n^{\stran}_{k+1} \geq n^{\stran}_k$ for all layers $k$, the contributions to $\extminus^{\stran}$ come exactly from pairs of adjacent non-empty columns of $\sigma^{\stran}$ with the same number of particles:
\[\extminus^{\stran} = \sum_{k=1}^{h-1} \max\{n^{\stran}_{k+1} - n^{\stran}_k - 1, 0\} = |\{i \in [w]: \coltop_i \geq 1 \text{ and } \coltop_i = \coltop_{i-1}\}|.\]
For any column $i \geq 2$ where $\coltop_i \geq 1$ and $\coltop_i = \coltop_{i-1}$, we must have $\ntop_{k} \geq i-1 \iff \ntop_{k} \geq i$ for all $k \in [h]$. 

Columns $i$ and $i-1$ in $\layerseq^{\stran}$ correspond to a pair of non-bottom-supported columns in $\layerseq^{\spre}$. As the property $\ntop_{k} \geq i-1 \iff \ntop_{k} \geq i$ necessitates that these columns be adjacent, we will label the columns in $\layerseq^{\spre}$ that they correspond to $j_i$ and $j_i-1$ respectively. Due to being non-bottom-supported, there must be a row $k$ where $n_k \geq j_i$ and $n_{k-1} \leq j_i - 2$, meaning that this row and pair of columns contributes to the sum $\extminus^{\spre} = \sum_{k=1}^{\toplayer-1} \max\{n^{\spre}_{k+1} - n^{\spre}_k - 1, 0\}$. The corresponding pair of columns is unique to each choice of $i, i-1$, which allows us to conclude that
\[|\{i \in [w]: \coltop_i \geq 1 \text{ and } \coltop_i = \coltop_{i-1}\}| \leq \extminus^{\spre}.\]

Thus, we can conclude the proof as follows:
\begin{align*}
&\Bhat_h(\layerseq^{\stran}) - \Bhat_D(\layerseq)\\
&\leq \left(4h + 2\extminus^{\stran}\right) - \left(4\toplayer + 2\extminus + 2\max\left\{w-u-D, \left(\frac{\nin}{h} - \frac{h}{2}\right)\right\}\right) + O_h(1)\\
&\leq 4(h-\toplayer) - 2\max\left\{w-u-D, \left(\frac{\nin}{h} - \frac{h}{2}\right)\right\} + O_h(1). \qedhere
\end{align*}
\end{proof}

\subparagraph*{Change in the Hamiltonian:}
Finally, putting together the changes in scent intensity (Lemma~\ref{lem:bridgetransformscentintensity}) and boundary length (Lemma~\ref{lem:bridgetransformboundarylength}) going from $\layerseq$ to $\layerseq^{\spost}$, we can compute a lower bound for the change in the Hamiltonian.
%
%
\begin{proof}[Proof of Lemma~\ref{lem:thm2case1}]
We consider two cases, when $\floor[\big]{\frac{n-\nin}{h}} > u$ and when $\floor[\big]{\frac{n-\nin}{h}} \leq u$, roughly corresponding to whether or not there were enough empty columns in $\layerseq$ to fit the majority of the unused particles in the transformation from $\layerseq$ to $\layerseq^{\spre}$.
The easier case is when $\floor[\big]{\frac{n-\nin}{h}} > u$, in which case we have
\begin{align*}
H(\sigma) - H(\tau) 
&\geq \Bhat_D(\layerseq) - \Bhat_h(\layerseq^{\spost}) + \eta\left(\Shat(\layerseq^{\spost}) - \Shat(\layerseq^{\spre})\right) + \eta\left(\Shat(\layerseq^{\spre}) - \Shat(\layerseq)\right) \\
&\geq \Big(2(w-u-\toplayer) - 4(h-\toplayer) + O_h(1)\Big) + \eta\cdot 0 + \eta \varphi u \\
&\geq 2D + 2w - 4h + u(\eta \varphi - 2) + O_h(1) \\
&\geq 2D + 2\rho h + u(\eta \varphi - 2) + O_h(1).
\end{align*}
Having $\eta > \frac{4}{\varphi}\left(1 + \frac{1}{\rho}\right)$ implies that $\eta \varphi - 2 \geq 0$, so we can conclude that
\[
H(\sigma) - H(\tau) \geq 2\rho h + O_h(1).
\]

We now establish a similar bound for the case when $\floor[\big]{\frac{n-\nin}{h}} \leq u$. We have
\begin{align*}
&H(\sigma) - H(\tau) \\
&\geq \Bhat_D(\layerseq) - \Bhat_h(\layerseq^{\spost}) + \eta\left(\Shat(\layerseq^{\spost}) - \Shat(\layerseq^{\spre})\right) + \eta\left(\Shat(\layerseq^{\spre}) - \Shat(\layerseq)\right) \\
&\geq 2\left(\frac{\nin}{h} - \frac{h}{2}\right) - 4(h-\toplayer) + \eta \varphi \frac{\nin}{h}\left(1 - \frac{\toplayer}{h}\right) + \eta \varphi \frac{n-\nin}{h} + \varphi \cdot O_h(1) \\
&\geq 2\left(\frac{n}{h} - \frac{h}{2}\right) - 2\frac{n-\nin}{h}-4(h-\toplayer) + \eta \varphi \frac{n}{h}\left(1 - \frac{\toplayer}{h}\right) + \eta \varphi \frac{n-\nin}{h}\cdot \frac{\toplayer}{h} + \varphi \cdot O_h(1) \\
&= 2\left(\frac{n}{h} - \frac{h}{2}\right) - 4(h-\toplayer) + \eta \varphi \rho h \left(1 - \frac{\toplayer}{h}\right) + \frac{n-\nin}{h}\left(\eta \varphi \frac{\toplayer}{h} - 2\right) + \varphi \cdot O_h(1) \\
&= 2 h\left(\rho - \frac{1}{2}\right) + h\left(1-\frac{\toplayer}{h}\right)\left(\eta \varphi \rho - 4 \right)  + \frac{n-\nin}{h}\left(\eta \varphi \frac{\toplayer}{h} - 2\right) + \varphi \cdot O_h(1). \numberthis \label{eqn:formbridge}
\end{align*}
As $\rho > 0.5$, we have $2\left(\rho - \frac{1}{2}\right) > 0$. For the values of $\toplayer$ where $\eta \varphi \frac{\toplayer}{h} - 2 > 0$, as we must also have $\eta \varphi \rho - 4 > 0$, giving us our desired result. For the values of $\toplayer$ where $\eta \varphi \frac{\toplayer}{h} - 2 \leq 0$, we must have $1 - \frac{\toplayer}{h} \geq 1 - \frac{2}{\eta \varphi}$, which allows us to bound the second and third terms of~(\ref{eqn:formbridge}) as:
\begin{align*}
h\left(1-\frac{\toplayer}{h}\right)\left(\eta \varphi \rho - 4 \right)  + \frac{n-\nin}{h}\left(\eta \varphi \frac{\toplayer}{h} - 2\right)
&\geq h\left(1-\frac{2}{\eta\varphi}\right)\left(\eta\varphi\rho - 4\right) - \rho h \cdot 2 \\
&= h\left(\eta\varphi\rho - 4(1+\rho) + \frac{8}{\eta\rho} \right).
\end{align*}
Thus in this case we also have $H(\sigma) - H(\tau) \geq (2\rho - 1)h + O_h(1).$

It remains to rewrite this lower bound as $\epsilon h + \delta B(\sigma) + O_h(1)$ for suitable $\epsilon$ and $\delta$, as in the lemma statement. Denoting $\epsilon' = 2\rho - 1$ so that $H(\sigma) - H(\tau) \geq \epsilon' h + O_h(1)$, and denoting:
\[\delta = \frac{\epsilon'}{\epsilon' + 4+ 4\alpha},\text{ and } \epsilon = \epsilon' - \delta(\epsilon' + 4+ 2\alpha),\]
as $B(\tau) \leq 4h + 2\extplus^{\spost} \leq 4h + 2w$, we can see that 
\begin{align*}
H(\sigma) - H(\tau)
&\geq (1-\delta)\epsilon' h + \delta\left(\left(B(\sigma) - B(\tau)\right) - \eta\left(S(\sigma) - S(\tau)\right)\right) + O_h(1) \\
&\geq (1-\delta)\epsilon' h - \delta(4h + 2w) + \delta B(\sigma) 
= \epsilon h + \delta B(\sigma) + O_h(1).
\end{align*}

This essentially concludes the proof. However, for completeness, we address configurations with layer sequences in $\LayerSet^{R,\toplayer}$ for some $R > 0$. We do this by reducing to the case when the layer sequence is in $\LayerSet^{0,\toplayer}$, by removing a single column and bounding the associated changes in scent intensity and boundary length.

In this case, we start with a single pre-processing step to remove one particle from each of the completely filled rows of $\layerseq^0$, to transform it into the layer sequence $\layerseq = (n_k)_{k \in [h]} \in \LayerSet^{0,\toplayer}$ used in the main proof. We then add these $R$ particles back in a final post-processing step similar to that of the initial transformation, by adding one particle each to the topmost $R$ rows that have less than $w-1$ particles. This ensures that the net change in scent intensity in the transformation to the final configuration $\tau$ is still non-negative, while once again potentially increasing $\Bhat_h(\layerseq^{\spost})$ by at most $O_h(1)$.
To quantify the change in boundary length more accurately, we observe that:
\begin{align*}
\Bhat_D(\layerseq^0) = \Bhat_D(\layerseq) - 4R + 2.
\end{align*}
Recalling that $\Bhat_D(\layerseq) = 2n_1 + 2D + 2\extminus + 2\extplus$ from before, and that $\extplus \geq R-1$ as $n_k = w-1$ for all $k \in \{1,2,\ldots,R\}$, we can lower bound $\Bhat_D(\layerseq^0)$ as follows:
\begin{align*}
\Bhat_D(\layerseq^0) 
&= 2n_1 + 2D + 2\extminus + 2\extplus - 4R + 2\\
&\geq 2w - 2 + 2(D-R) + 2\extminus + 2(R-1) - 2R + 2 \geq 2w + 2\extminus - 2.
\end{align*}
Thus, we end up with a similar bound for the differences in the Hamiltonian:
\begin{align*}
H(\sigma^0) - H(\tau)
&= \left(B(\sigma^0) - B(\tau)\right) - \eta\left(S(\sigma^0) - S(\tau)\right) \\
&\geq \left(2w + 2\extminus-2\right) - \left(4h + 2\extminus^{\spost}\right) + O_h(1)
\geq 2\rho h + O_h(1).
\end{align*}
As $H(\sigma^0) - H(\tau) \geq \epsilon'h + O_h(1)$ like before, we can follow a similar argument to write the lower bound in the appropriate form.
\end{proof}

The proof of Lemma~\ref{lem:thm2case2} is shorter, though some of it relies on the transformation used in the proof of Lemma~\ref{lem:thm2case1}.

\begin{proof}[Proof of Lemma~\ref{lem:thm2case2}]
Let $\sigma$ be a configuration that reaches $\Lambda_h$ and let $\layerseq = (n_k)_{k \in [h]}$ be its layer sequence, and define $\extminus = \sum_{k=1}^{h-1} \max\{n_{k+1} - n_k - 1, 0\}$ and $\extplus = \sum_{k=1}^{h-1} \max\{n_{k+1} - n_k + 1, 0\}$.  
We can then apply the same transformation used from $\layerseq^{\spre}$ to $\layerseq^{\spost} = (n_k^{\spost})_{k \in [h]}$ in the proof of Lemma~\ref{lem:thm2case1} to obtain a new configuration $\tau$ with layer sequence $\layerseq^{\spost}$ where $S(\tau) \geq S(\sigma)$.
%
Denoting $\extminus^{\spost} = \sum_{k=1}^{h-1} \max\{n^{\spost}_{k+1} - n^{\spost}_k - 1, 0\}$,
we know from the proof of Lemma~\ref{lem:thm2case1} that $n^{\spost}_{k+1} \geq n^{\spost}_{k}$ for all layers $k \in [h-1]$, and $\extminus^{\spost}$ corresponds to the number of non-filled, non-empty pairs of adjacent columns of $\tau$ with the same number of particles. As there are at most $h$ possible column heights, the number of non-completely-filled, non-empty columns is at most $h + \extminus^{\spost}$. On the other hand, the number of completely filled columns can be upper bounded by $\frac{n}{h} = \rho h$. Each column gives contributes at most $\varphi$ to the total scent intensity of $\tau$, and combining this with the fact (from the proof of Lemma~\ref{lem:thm2case1}) that $\extminus^{\stran} \leq \extminus$ and additionally noting that $\extminus^{\spost} \leq \extminus^{\stran}+1$, this gives us
\[S(\sigma) \leq S(\tau) \leq \varphi \left(h(1+\rho) + \extminus^{\spost}\right) \leq \varphi \left(h(1+\rho) + \extminus + 1\right).\]
%

\noindent
By Lemma~\ref{lem:boundarylengthlowerbound}, the lower bound for the boundary length of $\sigma$ can be computed as
\begin{align*}
B(\sigma) 
&\geq \Bhat_h(\layerseq) 
= 2 + 2n_1 - 2n_h + 2\extminus + 2\extplus + 2h \\
&= 2\extminus + 2h + 2\sum_{k=1}^{h-1}\left(\max\{n_{k+1} - n_k + 1, 0\} - (n_{k+1}-n_k)\right)
= 2\extminus + 4h.
\end{align*}
%
We then compare the Hamiltonian $H(\sigma)$ to that of the empty configuration $\emptyset$. As $H(\emptyset) = 0$, for any $\delta \in (0,1)$ we have
\begin{align*}
H(\sigma) - H(\emptyset)
&\geq (1-\delta)\left(2\extminus + 4h\right) - \eta\varphi \left(h(1+\rho) + \extminus + 1\right) + \delta B(\sigma)\\
&= - h(\eta\varphi(1+\rho) - 4(1-\delta)) - \extminus(\eta\varphi - 2(1-\delta)) - \eta \varphi + \delta B(\sigma).
\numberthis \label{eqn:nobridgehamiltonian}
\end{align*}
%

It remains to write this lower bound as $\epsilon h + \delta B(\sigma) + O_h (1)$ for suitable $\epsilon$ and $\delta$, as in the lemma statement. We use the following conditions for $\beta$ and $\eta$:
\begin{align*}
\beta > \left(1 + \frac{\alpha}{2}\right)\log 2
\text{ and }
\eta < \frac{1}{\varphi}\min\left\{ 2\left(1 - \frac{\log 2}{\beta}\right), \frac{4}{1+\rho}\left( 1 - \left(1 + \frac{\alpha}{2}\right)\frac{\log 2}{\beta} \right) \right\}.
\end{align*}
Due to our choice of $\beta$, the above expression that $\eta$ must be smaller than must always be positive. In addition, our condition on $\eta$ ensures that $\eta \varphi = O_h(1)$. By setting
\begin{align*}
\delta = \min\left\{ \frac{4-\eta\varphi(1+\rho)}{4 + 2\alpha}, \frac{2-\eta\varphi}{2} \right\}
\text{ and }
\epsilon = 4(1-\delta)-\eta\varphi(1+\rho),
\end{align*}
we can see by~(\ref{eqn:nobridgehamiltonian}) that 
\[
H(\sigma) - H(\emptyset) \geq \epsilon h + \delta B(\sigma) + O_h(1).
\]
Note that $\delta$ must be a positive value as our choice of $\eta$ guarantees that $4 - \eta\varphi(1+\rho) > 0$ and $2 - \eta\varphi > 0$.
\end{proof}

\end{document}